\documentclass[USenglish,a4paper,11pt]{article}
\pdfoutput=1

\input{Style.sty}
\title{Dequantization Barriers for Guided Stoquastic Hamiltonians}

\author{Yassine Hamoudi\thanks{\texttt{ys.hamoudi@gmail.com}}\qquad\; Yvan Le Borgne\thanks{\texttt{borgne@labri.fr}}\qquad\; Shrinidhi Teganahally Sridhara\thanks{\texttt{shrinidhi.teganahally-sridhara@u-bordeaux.fr}} \\ [12pt] Université de Bordeaux, CNRS, LaBRI, France}

\date{\today}

\begin{document}

\maketitle

\begin{abstract}
  We construct a probability distribution, induced by the Perron--Frobenius eigenvector of an exponentially large graph, which cannot be efficiently sampled by any classical algorithm, even when provided with the \emph{best-possible} warm-start distribution. In the quantum setting, this problem can be viewed as preparing the ground state of a stoquastic Hamiltonian given a guiding state as input, and is known to be efficiently solvable on a quantum computer. Our result suggests that no efficient classical algorithm can solve a broad class of stoquastic ground-state problems.

Our graph is constructed from a class of \emph{high-degree, high-girth spectral expanders} to which self-similar trees are attached. This builds on and extends prior work of Gilyén, Hastings, and Vazirani~\cite{Has21j,GHV21c}, which ruled out dequantization for a specific stoquastic adiabatic path algorithm. We strengthen their result by ruling out any classical algorithm for guided ground-state preparation.

\end{abstract}


\section{Introduction}
Understanding and analyzing the ground state~$\psh$ of a many-body system represented by a Hamiltonian~$H$ acting on~$m$ qubits is of fundamental interest in quantum computation with critical applications in quantum chemistry (molecular modeling), materials science (high-temperature superconductivity), and quantum machine learning, where ground states are used to represent optimal model parameters. One of the main obstructions to efficient preparation of ground state or estimation of the ground-state energy is the exponential size of the matrix~$H$. In fact, in the absence of other information apart from the Hamiltonian, the problem of estimating the ground-state energy is computationally hard even for quantum computers; the problem is famously QMA-complete for local Hamiltonians~\cite{KSV02b}. In practice, we often assume access to \emph{guiding states}~$\psin$, which are states close to the actual ground state (measured by the overlap~$\ip{\psi_{\mathrm{in}}}{\psi_H}$), as a computational resource for the ground-state preparation/energy estimation problems. We refer to this setting as the \emph{Guided Ground-State Preparation} (GGSP) problem and to the related energy estimation problem as \emph{Guided Ground-State Energy Estimation} (GGSE). 

Guiding states can be seen as analogous to \emph{warm starts} in Markov Chain Monte Carlo algorithms \cite{MiscAF02}, ansätze in variational quantum algorithms \cite{TCC+22j}, or the intermediate states along a path of adiabatic quantum computation~\cite{AL18j}. Guiding states can also be classified based on their closeness to the actual ground state and the existence of efficient algorithms to prepare them or query their amplitudes. The well-studied types include guiding states with constant or~$1/\poly(m)$ \emph{global} overlap, \emph{strong} guiding states with constant or~$1/\poly(m)$ \emph{pointwise} overlap \cite{Bra15j}, \emph{succinct} guiding states with efficient classical circuits for amplitude evaluation \cite{Liu21c,GLG22c,Jia25j,LGal25c}, etc. In the presence of these guiding states, the GGSP/GGSE problems have been proved to be efficiently solvable by quantum algorithms and, in some cases, even by classical algorithms under additional assumptions on~$H$ and~$\psin$. 

\paragraph{Dequantization.}
Recently, there have been several results on \emph{dequantized} algorithms \cite{GLG22c, LGal25c,CFG+23c} for the GGSE problem (in the case of local Hamiltonians). These \emph{classical} algorithms usually assume sampling access to the guiding state, along with amplitude access to the guiding state via an evaluation oracle. For instance, Le Gall \cite{LGal25c} showed that when~$\abs{\ip{\psi_{\mathrm{in}}}{\psi_H}} = \om{1}$ and the required precision is~$1/\log(m)$, there exists a classical algorithm for energy estimation that runs in~$\poly(m)$ time. Another line of classical algorithms for the GGSP problems is the Quantum Monte Carlo (QMC) method, a widely used heuristic in computational condensed matter physics. Bravyi and Terhal~\cite{BT10j} showed that when~$H$ is stoquastic and frustration-free, then the ground-state preparation is classically tractable. Later, in \cite{BCGL23j}, it was shown that there exists a rapidly mixing Markov Chain that can sample from the ground state of a \emph{gapped}, local~$H$ when given access to ratios of amplitudes of the ground state via an oracle, along with a starting state satisfying a mild technical condition. However, implementing the amplitude-ratio oracle is unknown for general Hamiltonians, including stoquastic Hamiltonians that are not frustration-free.

\paragraph{General barriers.} Dequantization results have led to the question of \emph{to what extent classical algorithms can suffice to solve the GGSP/GGSE problems}. This question has received most of its attention when the guiding state is \emph{untrusted}, i.e., claimed by the prover in prover-verifier protocols. While the general energy estimation problem in this setting is famously QMA-complete~\cite{KSV02b,GHLS14j}, restricting to Hamiltonians that admit specific guiding states makes it QCMA-complete~\cite{WFC24c} or MA-complete~\cite{Jia25j}. Turning to the \emph{trusted} case, where a guiding state is given as input (as in the present work), GGSE is BQP-complete when the required precision is~$1/\poly(m)$~\cite{CFG+23c,GLG22c}. For weaker precision, the problem becomes solvable in BPP~\cite{GLG22c,LGal25c}.

\paragraph{Stoquastic barriers.}
A further restriction on Hamiltonians is to assume them to be \emph{stoquastic}, namely, that all their off-diagonal entries are real and non-positive in some given basis. Such Hamiltonians are particularly interesting because of their classical flavor. Indeed, the ground state of a stoquastic Hamiltonian can be chosen to have non-negative real entries, which allows it to be interpreted as a probability distribution. Most Hamiltonians arising in quantum architectures like D-Wave come from Transverse Field Ising Models (TIM), which are stoquastic in the standard basis. Other examples of stoquastic Hamiltonians include the Heisenberg model on bipartite graphs and the bosonic Hubbard model. Since a classical objective function can be seen as a diagonal Hamiltonian and hence stoquastic, the ground-state properties of stoquastic Hamiltonians are of considerable importance in optimization. The general energy estimation problem becomes StoqMA-complete when~$H$ is stoquastic, with $\text{MA} \subseteq \text{StoqMA} \subseteq \text{SBP} \cap \text{QMA}$ \cite{BDOT08j,BBT06p}. In the untrusted setting, the problem is MA-complete under different notions of guiding states~\cite{Bra15j,Liu21c}. In the trusted setting, the problem is only known to be BPP-hard~\cite{Wai25p}.

\paragraph{Connection to Adiabatic Quantum Computation.}
Adiabatic quantum computation is a leading framework for solving ground-state preparation problems on a quantum computer when a target Hamiltonian has no available guiding state initially \cite{AT07j}. The paradigm relies on designing a sequence (or \emph{path}) of slowly varying Hamiltonians whose spectral gap remains open, ensuring -- by the adiabatic theorem -- that the system stays close to its instantaneous ground state throughout the evolution. From a computational perspective, this process can be viewed as a \emph{bootstrapped} guided ground-state preparation algorithm, where access to intermediate ground states enables the preparation of the final target state.
The possibility of dequantizing adiabatic algorithms has been raised in several prior works. For instance, \cite{BDOT08j} asked whether \emph{``every efficient adiabatic path using stoquastic Hamiltonians can be simulated by a polynomial-time probabilistic machine''}. Positive results in this direction include a BPP algorithm for simulating any path of stoquastic frustration-free Hamiltonians \cite{BT10j}, and a PostBPP algorithm for simulating paths of general stoquastic Hamiltonians \cite{BDOT08j}. On the negative side, recent breakthrough results rule out the possibility of dequantizing a \emph{specific} stoquastic path in BPP, relative to an oracle \cite{Has21j,GHV21c}. Our results align with these latter works but have broader implications: we rule out any dequantization of an adiabatic path -- stoquastic or not -- that must pass through the stoquastic Hamiltonian constructed in our paper.

\paragraph{Connection to other exponential speedups.}
Our contribution can also be viewed as part of a line of work showcasing exponential quantum speedups based on oracular graph constructions with particular spectral properties. A foundational result in this direction is due to Childs et al.~\cite{CCD03c} for the so-called \emph{welded-tree problem}. These ideas have since been extended to demonstrate exponential quantum speedups in a variety of settings, including quantum-walk-based search algorithms~\cite{CCD03c,LLL24j,BLH25j,JZ23c,LZ25j}, quantum property testing~\cite{BCG+24j}, quantum annealing~\cite{SNK12j}, adiabatic computing~\cite{Has21j,GHV21c}, quantum distributed computing~\cite{GLMR25p}, and quantum optimization~\cite{LWWZ25p}.


\subsection{Contributions}

Our main result is a no-go theorem for the efficient dequantization of GGSP for stoquastic Hamiltonians, even when the algorithm has access to an arbitrarily good guiding state. Below, we give an informal definition of the problem.

\begin{definition}[\sc Informal definition of GGSP]
  \label{Def:informal}
  We are given an~$m$-qubit stoquastic Hamiltonian~$H$ with a unique \emph{ground state} $\psh$ and a spectral gap of at least $1/\poly(m)$. In addition, we are given $\poly(m)$ copies of a \emph{guiding state} $\ket{\psi_{\mathrm{in}}}$ satisfying $\abs{\ip{\psi_{\mathrm{in}}}{\psi_H}} \geq 1/\poly(m)$. The goal is to output a \emph{refined state} $\ket{\psi_{\mathrm{out}}}$ such that $\abs{\ip{\psi_{\mathrm{out}}}{\psi_H}} \geq 3/4$.

  For classical algorithms, the input is represented by $\poly(m)$ i.i.d. samples from the distribution $(\ip{x}{\psi_{\mathrm{in}}}^2)_x$, and the output must be a sample from the distribution $(\ip{y}{\psi_{\mathrm{out}}}^2)_y$ \emph{that is also} $1/5$-independent from the input samples.
\end{definition}

Note that we allow guiding states that need not be easy to prepare, including guiding states that are exactly equal to the ground state. This generality enables us to understand the power of a wide range of guiding-state preparation procedures, such as more general adiabatic paths than those previously considered (e.g., in~\cite{Has21j,GHV21c}) and arbitrary ansätze used as guiding states.

We focus on GGSP rather than ground-state energy estimation because the ability to sample from or prepare the ground state is useful for a variety of tasks beyond energy estimation, such as estimating expectation values of general observables on the ground state, or simulating its real-time dynamics. It also helps decouple the complexity of these problems from the hardness of preparing good guiding states. For instance, the dequantized energy estimation algorithm of~\cite{LGal25c} runs in $\poly\pt[\big]{\frac{1}{\abs{\ip{\psi_{\mathrm{in}}}{\psi_H}}^{\Delta}},m}$ (where $\Delta$ is the desired energy precision), which could be improved given the ability to boost the overlap $\abs{\ip{\psi_{\mathrm{in}}}{\psi_H}}$ via an independent GGSP algorithm. Our result largely rules out this possibility by establishing classical hardness of GGSP for a stoquastic Hamiltonian, even in the presence of an arbitrary guiding state. Informally, we establish this classical hardness by proving the following result.

\begin{rtheorem}[\Cref{thm:lowerbound} and \Cref{Prop:quantalg} (Informal)]
  There exists an $m$-qubit stoquastic Hamiltonian~$H$ for which the GGSP problem can be solved efficiently (in~$\poly(m)$ time) by a quantum algorithm, whereas no classical algorithm can solve it using at most~$2^{m^c}$ queries to the oracle of~$H$, for some constant~$c<1$. This result holds \emph{regardless} of the guiding state provided as input.
\end{rtheorem}

\paragraph{On the independence requirement.}
An important requirement of \Cref{Def:informal} -- in the classical setting -- is that the output sample must be (almost) independent of the input samples. Without this condition, an algorithm could sometimes reuse the same input as output (for example, when the guiding state already coincides with the ground state). This would pose a significant practical limitation. For instance, standard concentration bounds, such as Chernoff or Chebyshev inequalities, rely on (nearly) independent samples to hold. Producing multiple correlated samples would therefore prevent estimation of statistics of the ground-state distribution via such bounds. A similar consideration arises in the MCMC literature. For instance, when estimating partition functions via simulated annealing, Markov chains are initialized from a warm-start distribution (playing the role of the guiding state) and run long enough to produce fresh, independent samples from the stationary distribution (see, e.g., \cite[Section 6]{JSV04j} and \cite[Section 7]{SVV09j}).
One may wonder why we do not impose a similar independence requirement in the quantum setting. Due to the no-cloning theorem, it is generally impossible to produce the ground state without entirely consuming a copy of the guiding state. Nevertheless, quantum statistical estimators can behave differently from classical ones. For example, quantum speedups for MCMC methods can rely on non-destructive quantum primitives~\cite{HW20c,CH23c}, which make it possible to use a single copy of the ground state~$\psh$ as a fixed point, together with multiple applications of the reflection~$\id - 2 \proj{\psi_H}$.


\subsection{Proof overview}

We provide a high-level overview of the main ideas behind the proof of \Cref{thm:lowerbound}. Our result holds relative to an oracle representing the adjacency matrix of a graph $G$ constructed in this paper.

\paragraph{Underlying Graph.}
We construct a graph~$G$ (\Cref{Def:maingraph} and \Cref{fig:overview}) that is obtained by taking a graph~$E$ which is a high-degree, high-girth spectral expander and then attaching many copies of self-similar trees~$\treh$ to its nodes (these are similar to the trees studied in \cite{Has21j,GHV21c}).~$G$ is constructed in a way that it is degree-regular everywhere except at the leaves. The expander results in~$G$ having a large spectral gap and no short cycles. The stoquastic Hamiltonian~$H$ acting on~$m$ qubits is the negative of the adjacency matrix of~$G$ over~$2^m$ vertices, i.e.,~$H = -A_G$. The only assumption on the guiding state~$\psin$ is that it has at least inverse-polynomial overlap with~$\psh$ since, without this assumption, even a quantum computer may not be able to efficiently solve the GGSP problem. We show that \emph{no such guiding state} can simultaneously enable an efficient classical algorithm.

\usetikzlibrary{decorations.pathmorphing}

\newcommand{\bigpicturebisdecoratednode}[2]{
\tikz[scale=#1,rotate=#2]{
      \draw[line width=1,blue] (0,0) -- +(300:1.5);
      \draw[line width=1,blue] (0,0) -- +(320:1.5);
      \fill[line width=1,draw=blue,fill=white,rounded corners] (320:1.5) -- (300:3) -- (340:3) -- cycle;
      \fill[line width=1,draw=blue,fill=white,rounded corners] (300:1.5) -- (280:3) -- (320:3) -- cycle;
      \fill[fill=white,draw=blue] (320:1.5) circle (0.2);
      \fill[fill=white,draw=blue] (300:1.5) circle (0.2);
      }}

\newcommand{\bigpicturebisdecoratednodeindecoration}[2]{
\tikz[scale=#1,rotate=#2]{
      \draw[line width=1,blue] (0,0) -- +(300:1.5);
      \draw[line width=1,blue] (0,0) -- +(320:1.5);
      \fill[line width=1,draw=blue,fill=white,rounded corners] (320:1.5) -- (300:3) -- (340:3) -- cycle;
      \fill[line width=1,draw=blue,fill=white,rounded corners] (300:1.5) -- (280:3) -- (320:3) -- cycle;
      \fill[fill=white,draw=blue] (320:1.5) circle (0.2);
      \fill[fill=white,draw=blue] (300:1.5) circle (0.2);
      }}

\newcommand{\twotreebis}[2]{
    \foreach \xa/\ya/\xb/\yb in {0/0/-2/1,0/0/2/1, 
      -2/1/-3/2,-2/1/-1/2,2/1/1/2,2/1/3/2 
    }{
      \draw[line width=1,draw=black] (\xa,\ya) -- (\xb,\yb);
    }

    \foreach \x in {-3,-1,1,3}{
      \draw[line width=1, draw=red] (\x,2) -- +(#2:1);
      \draw[line width=1, draw=blue] (\x,2) -- +(#2+30:1);
      \draw[line width=1, draw=blue] (\x,2) -- +(#2+20:1);
      \fill[fill=white,draw=#1] (\x,2) circle (0.2);
    }
    \foreach \x in {-2,2}{
      \draw[line width=1, draw=red] (\x,1) -- +(#2:1); 
      \draw[line width=1, draw=blue] (\x,1) -- +(#2+30:1);
      \draw[line width=1, draw=blue] (\x,1) -- +(#2+20:1);
      \fill[fill=white,draw=#1] (\x,1) circle (0.2);
    }
    \draw[line width=1, draw=red] (0,0) -- +(#2:1);   
    \draw[line width=1, draw=blue] (0,0) -- +(#2+30:1);
    \draw[line width=1, draw=blue] (0,0) -- +(#2+20:1);
    \fill[fill=white,draw=#1] (0,0) circle (0.2);
}

\newcommand{\threetreebis}[1]{
    \foreach \xa/\ya/\xb/\yb in {0/0/3/1, 
      -3/1/-2/2,0/1/1/2,3/1/4/2
    }{
      \draw[line width=1,draw=#1,decorate,decoration=snake] (\xa,\ya) -- (\xb,\yb);
      }
    \foreach \xa/\ya/\xb/\yb in {0/0/-3/1,0/0/0/1, 
      -3/1/-4/2,-3/1/-3/2,0/1/-1/2,0/1/0/2,3/1/2/2,3/1/3/2
    }{
      \draw[line width=1,draw=black] (\xa,\ya) -- (\xb,\yb);
      }
    \foreach \x in {-4,...,4}{
      \fill[fill=white,draw=black] (\x-0.17,2-0.17) rectangle (\x+0.17,2+0.17);
    }
    \foreach \x in {-3,0,3}{
      \fill[fill=white,draw=black] (\x-0.17,1-0.17) rectangle (\x+0.17,1+0.17);
    }
    \fill[fill=white,draw=black] (0-0.17,0-0.17) rectangle (0+0.17,0+0.17);
}

\newcommand{\mytree}[1]{\tikz[scale=0.5]{
    \draw[rounded corners,fill=white,line width=1] (0,0) -- (-0.8,2) -- (0.8,2) -- cycle;
    \fill[draw=black,fill=#1] (0,0) circle (0.1);
    \foreach \x in {-2,-1,0,1,2}{
      \fill[draw=black,fill=white] (\x/4,2) circle (0.1);
      }
    }}

\begin{figure}[ht!]
  \begin{center}
  \begin{tikzpicture}[xscale=0.5,yscale=0.2]
      \fill[black!20!white] (0,3) circle (11.5);
    \fill[green,opacity=0.30] (-3,-5.8) -- (3,5.8) -- (18,-2.2) -- (13,-14) -- cycle; 
      \fill[fill=black!15!white] (0,0) circle (6.3);
    \foreach \x/\y in {0/2,2/-0.2,-2/-0.2}{
      \draw[line width=1,black] (0,0) -- (\x,\y);
    }
    \foreach \xa/\ya/\xb/\yb in {-4.8/-3/-8/-2,-4.8/-3/-10/2,-4.8/-1/-9/4,-4.8/-1/-8/1,-4.8/1/-7/3,-4.8/1/-8/6,-4.8/3/-6/5,-4.8/3/-6/10,-3.7/3.8/-4/8,-3.7/3.8/-2/12,-1.3/3.8/-2/8,-1.3/3.8/-1/10}{
      \draw[line width=0.5,rounded corners] (\xa,\ya) -- (\xb,\yb);
      \draw[line width=0.5,rounded corners] (-\xa-0.2,\ya) -- (-\xb,\yb);
    }
    \draw[line width=0.5,rounded corners] (-8,-2) -- (-9,-1) --  (-11,2) -- (-11,4) -- (-8,11) -- (-4,13) -- (-3,13) -- (-2,12);
    \draw[line width=0.5,rounded corners] (-2,12) -- (-1,13) -- (0,12);
    \draw[line width=0.5,rounded corners] (0,12) -- (1,13) -- (3.5,13.5) -- (3,9) -- (2,8);
    \draw[line width=0.5,rounded corners] (0,12) -- (1,11);
    \draw[line width=0.5,rounded corners] (8,6) -- (7,7) -- (7,11.5) -- (5,12) -- (5,9) -- (4,8);
    \foreach \x/\y in {-8/-2,-10/2,-9/4,-8/1,-7/3,-8/6,-6/5,-6/10,-4/8,-2/12,-2/8,-1/10,0/12}{
      \draw[line width=1, draw=red] (\x,\y) -- +(-100:1);
      \draw[line width=1, draw=blue] (\x,\y) -- +(-70:1);
      \draw[line width=1, draw=blue] (\x,\y) -- +(-80:1);
      \draw[line width=0.5,rounded corners] (\x,\y) -- +(135:1);
      \draw[line width=0.5,rounded corners] (\x,\y) -- +(45:1);
      \fill[fill=white,draw=black] (\x,\y) circle (0.2);
      \draw[line width=1, draw=red] (-\x,\y) -- +(-100:1);
      \draw[line width=1, draw=blue] (-\x,\y) -- +(-70:1);
      \draw[line width=1, draw=blue] (-\x,\y) -- +(-80:1);
      \draw[line width=0.5,rounded corners] (-\x,\y) -- +(135:1);
      \draw[line width=0.5,rounded corners] (-\x,\y) -- +(45:1);
      \fill[fill=white,draw=black] (-\x,\y) circle (0.2);
    }
    \draw node at (0.7,-1.8) {\bigpicturebisdecoratednode{0.2}{0}};
    \draw node at (-1.2,-9) {\bigpicturebisdecoratednode{0.2}{70}};
    \draw node at (1.8,-11.5) {\bigpicturebisdecoratednode{0.2}{70}};
    \draw node at (-1.2,-13) {\bigpicturebisdecoratednode{0.2}{40}};
    \draw node at (-4.3,-13.4) {\bigpicturebisdecoratednode{0.2}{30}};
    \draw node[anchor=south] at (0.1,0) {\tikz[xscale=0.6,yscale=0.2]{\twotreebis{black}{-100}}};
    \draw node[anchor=west] at (1.5,-0.6) {\tikz[xscale=0.6,yscale=0.2,rotate=-90]{\twotreebis{black}{-10}}};  
    \draw node[anchor=east] at (-1.5,-0.6) {\tikz[xscale=0.6,yscale=0.2,rotate=90]{\twotreebis{black}{170}}};
    \draw[line width=1,red,decorate,decoration=snake] (0,0) -- (-2,-9.6);
    \fill[fill=white,draw=black] (0,0) circle (0.3);
    \draw node at (-0.5,0.75) {$u$};
    \draw[line width=2,opacity=0.5,green] (0,0) circle (6.3);
    \draw node[anchor=north] at (-2,-9) {\tikz[scale=0.5,rotate=180]{\threetreebis{red}}};

    \draw node at (0,15.9) {\emph{View with decorations}};
    \draw node at (15,-8) {\tikz[xscale=0.5,yscale=0.2]{
        \fill[fill=black!15!white] (0,0) circle (6.3);
        \draw[line width=2,opacity=0.5,green] (0,0) circle (6.3);
        }};
    \draw[line width=1] (15,-10) -- (14,-10);
    \draw[line width=1] (15,-10) -- (16,-10);
    \draw[line width=1] (15,-10) -- (15,-8);
    \draw node at (15,-6) {\mytree{white}};
    \draw node[rotate=90] at (13,-10) {\mytree{white}};
    \draw node[rotate=270] at (17,-10) {\mytree{white}};
    \draw node at (15,-10) {\tikz{\fill[white,draw=black] (15,-10) circle (0.05);}};
    \draw node at (15,-10.9) {$u$};
    \draw node at (19.5,-16) {\begin{minipage}{7cm}\emph{Simplified view} \end{minipage}};
  \end{tikzpicture}
  \caption{Pictorial representation of the graph $G$, with the expander $E$ shown inside the gray ellipse (circular vertices) and details of the self-similar trees attached to a vertex $u$ (square vertices). The expander is locally tree-like within a ball of radius equal to half the girth around~$u$ (shown by the green boundary). The graph~$G$ locally resembles a self-similar tree -- see \Cref{Lem:locss}. The simplified view depicts the local tree structure around $u$, drawing the incident edges within the expander, but omitting the decorations arising from the blue and red edges.}
  \label{fig:overview}
  \end{center}
\end{figure}
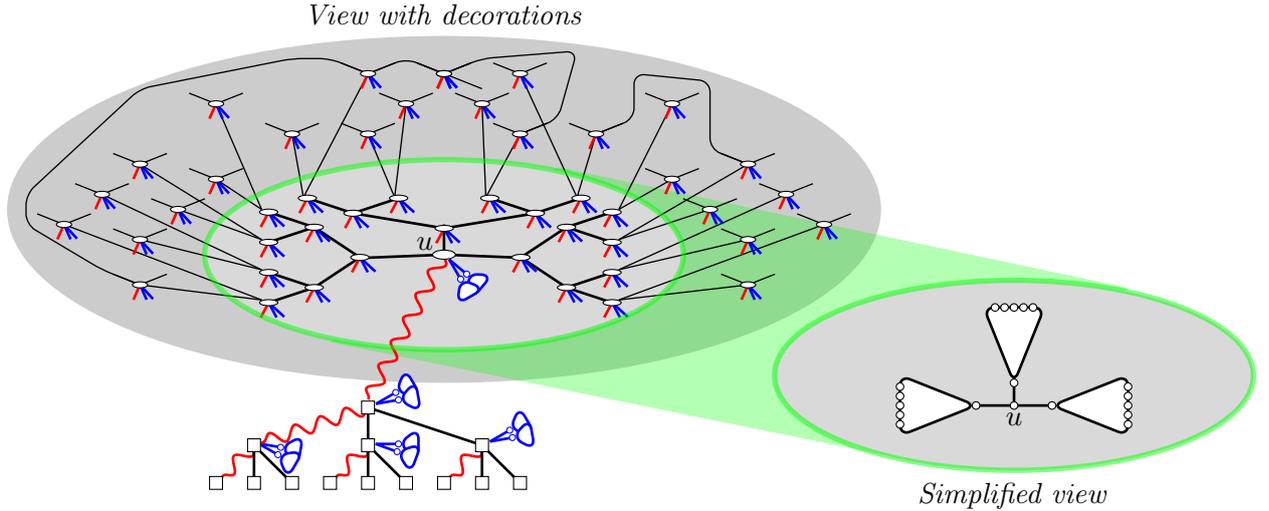

\paragraph{Hardness of classical exploration.}
The graph~$G$ is designed so that it is difficult for classical algorithms to distinguish locally between the expander part and the tree part (\Cref{Lem:locss,Lem:hardmaing}). Indeed, since the expander has high girth and the final graph is degree-regular except at the leaves, it is difficult to distinguish the expander vertices from the tree vertices in a small neighborhood. Moreover, the self-similar trees have a sufficiently confusing structure that makes it difficult to infer one's position in the graph from their leaves (\Cref{Lem:hardss} and \Cref{App:trees}). For instance, a simple random walk on~$G$ is insufficient for sampling from the ground state of~$H$, since its stationary distribution is the degree distribution rather than the ground-state distribution. While more sophisticated classical algorithms might sample from the correct distribution, our main result is to show that none can do so efficiently.
 
\paragraph{Localization property.}
The classical hardness is established by examining a specific \emph{localization} property of the ground state~$\psh$, which must also be approximately satisfied by the guiding state~$\psin$ and the output state~$\psout$. This property states that, upon sampling multiple vertices from the ground-state distribution, the distance between the \emph{nearest vertices} on the expander graph~$E$ is large with very high probability (\Cref{Lem:hfaridealpair}). This follows essentially from the fact that the marginal ground-state distribution over the expander vertices is uniform (\Cref{Lem:spec}). The localization property must also hold for the distributions arising from~$\psin$ and~$\psout$ with constant probability (\Cref{Prop:hfarideal}).

\paragraph{Hardness of classical sampling.}
Finally, we demonstrate that any classical algorithm making fewer than~$2^{m^c}$ queries to~$H$ for some~$c < 1$ (i.e., sub-exponentially many queries) cannot reproduce the above localization property in its output distribution with more than exponentially small probability (\Cref{Prop:classhard}). By standard arguments, a classical algorithm can be assumed to explore the graph~$G$ by growing connected components around the input sampled vertices (\Cref{lem:conncomp}). However, since the graph is difficult to explore, none of these connected components can reach an expander vertex sufficiently far from its starting point (\Cref{Lem:hardmaing}). It follows that such an algorithm cannot satisfy the localization property with high probability, and therefore cannot solve the GGSP problem on~$H$.

\section{Definition of the setup}

  \subsection{The Guided Ground-State Preparation (GGSP) problem}
  The Guided Ground-State Preparation problem (GGSP) is formally defined below. In the quantum setting, the goal is to prepare the ground state of a gapped Hamiltonian, given access to copies of a guiding state. In the classical setting, quantum states are replaced by classical samples drawn from the corresponding standard basis measurement distributions. We use $\poly(m)$ as shorthand for $m^c$ for some absolute constant $c$. The constants $3/4$ and $1/5$ in the output condition are chosen arbitrarily.

\begin{definition}
  \label{Def:GGSP}
  The \emph{Guided Ground-State Preparation} (GGSP) problem is defined as follows. We are given an $m$-qubit stoquastic Hamiltonian $H$ with bounded norm $\norm{H} \leq \poly(m)$, which has a unique \emph{ground state} $\psh$ and a spectral gap at least $1/\poly(m)$. In addition, we are given the description of $t = \poly(m)$ copies of a \emph{guiding state} $\psin$ that satisfies $\abs{\ip{\psi_{\mathrm{in}}}{\psi_H}} \geq 1/\poly(m)$. We should output the description of a \emph{refined state} $\psout$ such that $\abs{\ip{\psi_{\mathrm{out}}}{\psi_H}} \geq 3/4$. The descriptions of the guiding and refined states are given in one of the following two models:
  \begin{itemize}
    \item {\it (Quantum input-output)} The algorithm is given the quantum state $\psin^{\otimes t}$ and must output the quantum state $\psout$.
    \item {\it (Classical input-output)} Let $\pin$ and $\pout$ denote the probability distributions on $\rn^m$ obtained by measuring $\psin$ and $\psout$, respectively, in the standard basis. The algorithm is given i.i.d. samples $x_1,\dots,x_{t} \sim \pin$, and must output one sample $y \sim \pout$ such that the joint distribution of $(x_1,\dots,x_{t},y)$ is $1/5$-close in total variation distance to the product distribution $\pin^{\otimes t} \times \pout$.
  \end{itemize}
\end{definition}

In the classical setting, the conditions on the input-output states could equivalently be defined in terms of the \emph{fidelities} $F(\pin,\ph) = \abs{\ip{\psi_{\mathrm{in}}}{\psi_H}}^2 \geq 1/\poly(m)$ and $F(\pout,\ph) = \abs{\ip{\psi_{\mathrm{out}}}{\psi_H}}^2 \geq (3/4)^2$, where $\ph(x) = \abs{\ip{x}{\psi_H}}^2$ is the ground-state distribution.

The GGSP problem can be solved quantumly by filtering out the ground-state component of the guiding state. Because this state already has non-negligible overlap with the ground state, quantum algorithms can rapidly amplify this overlap, something classical algorithms cannot do efficiently in general. Below, we describe an elementary approach to achieve this goal (modern tools, such as QSVT, provide alternative methods with similar functionality).

\begin{proposition}[\sc Quantum algorithm for the GGSP problem]
  \label{prop:qalgo}
  There exists a quantum algorithm that solves the GGSP problem by outputting the exact ground state $\psout = \psh$ of $H$ with cost $\poly(m) \cdot T_H$, where $T_H$ is the cost of simulating the unit-time evolution $e^{-iH}$.
\end{proposition}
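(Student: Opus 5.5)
Run quantum phase estimation on $U=e^{-iH/\lambda}$ with input $\psin$, then postselect on the ground-state energy, repeating over the $t$ copies until one trial succeeds. We rescale by $\lambda = 2\norm{H}\le\poly(m)$ so that the spectrum of $H/\lambda$ lies in $[-1/2,1/2]$ and eigenphases do not wrap around. The gap of $H/\lambda$ is then $\delta \ge 1/\poly(m)$. Write $\psin = \alpha\psh + \sqrt{1-\abs{\alpha}^2}\,\ket{\psi^\perp}$ with $\abs{\alpha}\ge 1/\poly(m)$ and $\ket{\psi^\perp}$ supported on excited eigenvectors. If $\lambda$ is not an integer, $e^{-iH/\lambda}$ is obtained from $e^{-iH}$ by standard time rescaling at cost $O(T_H)$ per unit time. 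This is the only place $T_H$ enters.

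\textbf{Exactness.} Phase estimation only gives approximate energies, so to output exactly $\psh$ I would first learn the ground energy $E_0$. Either it is known to precision $\delta/4$ (and then rounded to a grid), or it is estimated by running phase estimation on copies of $\psin$ and taking the minimum observed value. Given an estimate $\tilde E$ with $\abs{\tilde E - E_0}<\delta/4$, all excited energies satisfy $E_j > \tilde E + 3\delta/4$. I would then use a filter whose acceptance probability is exactly zero on the excited eigenvectors. One option is to approximate the spectral projector $\Pi = \mathbf{1}[H/\lambda \le \tilde E+\delta/2]$ by a polynomial or Fourier series in $U$ (the QSVT-style alternative the paper mentions), applied by controlled-$U$ powers with an ancilla. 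Accepting on the ancilla flag yields $\Pi\psin/\norm{\Pi\psin} = \psh$ up to a global phase, and the acceptance probability is $\abs{\alpha}^2\ge 1/\poly(m)$.

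\textbf{Main obstacle.} Making the filter exact rather than $\epsilon$-approximate is the hard step. Textbook phase estimation with $O(\log(1/\delta))$ bits leaks excited amplitude into the accepted register with inverse-polynomial probability. To remove the leakage I would use the eigenvalue-threshold construction: QSVT with a polynomial of degree $O(\delta^{-1}\log(1/\epsilon))$, where $\epsilon$ can be taken exponentially small at polynomial cost. This gives $\norm{\psout-\psh}\le \epsilon$, which easily meets $\abs{\ip{\psi_{\mathrm{out}}}{\psi_H}}\ge 3/4$.

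If exact equality is insisted on, I would add a final round of fixed-point or exact amplitude amplification, which requires $\abs{\alpha}$ to be known. Alternatively I would simply interpret ``exact'' as up to exponentially small error. The approximate version already suffices for solving the GGSP problem.

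\textbf{Complexity.} Each trial costs $\poly(m)\cdot T_H$. With $O(\abs{\alpha}^{-2}\log(1/\eta))=\poly(m)$ trials, at least one succeeds with probability $1-\eta$, and the number of trials fits within the $t=\poly(m)$ available copies of $\psin$.
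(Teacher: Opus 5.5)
Your proposal is correct and follows the paper's route. Both filter the ground-state component out of $\psin$ with a phase-estimation-based spectral filter, which accepts with probability about $\abs{\ip{\psi_{\mathrm{in}}}{\psi_H}}^2 \geq 1/\poly(m)$, and repeat on fresh copies. Your QSVT threshold is the alternative the paper itself mentions, and you are more careful than the paper about rescaling to avoid phase wraparound, locating $E_0$, and bounding leakage.

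Drop one remark: exact or fixed-point amplitude amplification cannot make the output exactly $\psh$. It only boosts the weight of the accepted branch and does nothing about the excited-state leakage inside that branch. It would also need reflections about $\psin$, which copies alone do not provide. A polynomial-degree filter cannot vanish exactly on exponentially many unknown excited eigenphases. The paper's proof implicitly treats phase estimation as an exact projector, so it too really yields $\psh$ only up to arbitrarily small error, which is all GGSP needs.
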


\begin{proof}
  The algorithm runs quantum phase estimation on $H$ using the guiding state $\psin$ as input. Because the spectral gap is at least inverse polynomial, it suffices to use inverse-polynomial precision to distinguish the smallest-eigenvalue component $\psh$ from the others. Measuring the energy register then collapses the state to that component with probability equal to the squared overlap $\abs{\ip{\psi_{\mathrm{in}}}{\psi_H}}^2 \geq 1/\poly(m)$. The success probability can be amplified to arbitrarily close to 1 either by repeating the procedure on fresh copies of $\psin$ until the state collapses to $\psh$, or by using amplitude amplification on a single copy of $\psin$.
\end{proof}

The simulation cost $T_H$ of $H$ can vary depending on the properties and input specification of~$H$. Examples where $T_H$ is generally small include local Hamiltonians, sparse Hamiltonians with oracle access to their nonzero entries (as is the case in the present work), or more generally when~$H$ can be implemented efficiently via a block-encoding.

  \subsection{Graph construction}
  We describe the main graph $G_n$ that will lead to hard instances of the GGSP problem. This graph combines a high-degree, high-girth spectral expander with a particular type of self-similar trees. We use a free parameter $n$ in our constructions, which will later be chosen as $n = \poly(m)$ when defining the Hamiltonian derived from the graph. 

We first state the properties required of the spectral expanders. A key difficulty is that the vertex degree must grow (logarithmically) with the number of vertices, which rules out constructions with constant degree. Instead, we use the probabilistic method, which shows that random regular graphs satisfy the desired requirements. 

\begin{proposition}[\sc High-degree, high-girth expanders $E_n$]
  \label{Prop:expander}
  There exists a family of regular graphs $(E_n)_n$ on $N_E = 2^{21n^2\log^2{n}}$ vertices such that, for $n$ large enough, $E_n$ has degree~$d_n = n$, spectral gap at least $d_n/2$  and girth at least $g_n = 40n^2\log{n} + 8$.
\end{proposition}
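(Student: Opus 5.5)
The plan is the probabilistic method on random $d$-regular graphs with $N=N_E$ vertices and $d=n$. Spectral gap and girth are handled separately: we show that a random regular graph has the required gap with high probability, then delete the few short cycles without hurting the gap much. The girth requirement $g_n = 40n^2\log n+8$ is about $\log_d N_E$, up to a constant. Indeed, $\log_2 N_E = 21 n^2\log^2 n$ and $\log_2 d = \log n$, so $\log_d N_E = 21 n^2 \log n$ (logs base $2$). Hence $g_n \approx \tfrac{40}{21}\log_{d} N_E$, i.e.\ roughly $2\log_{d-1}N$. This is beyond what a random regular graph has directly (typical girth is $O(1)$), so we either need a construction attaining girth about $2\log_{d-1} N$ or must alter the graph.

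\emph{Step 1 (girth).} We use the classical Erdős–Sachs / Sauer-type existence results. For every $d\ge 3$ and $g$ there is a $d$-regular graph of girth at least $g$ on any sufficiently large (even) number of vertices $N \geq c\,(d-1)^{g}$. With $g=g_n$ we have $(d-1)^{g_n} \le n^{40n^2\log n+8} = 2^{40 n^2\log^2 n + 8\log n}$. That exceeds $N_E=2^{21n^2\log^2 n}$, so this bound alone is not enough. Instead, I would take the Moore-type existence bound $N \ge (d-1)^{g/2+O(1)}$, where $2^{20n^2\log^2 n+O(\log n)} \le N_E$. This bound is achieved, e.g., by the Lubotzky–Phillips–Sarnak Ramanujan graphs $X^{p,q}$, whose girth is at least $\tfrac{4}{3}\log_{p}|X|$, or by the Bollobás-style random alteration argument giving girth about $\log_{d-1}N$. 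The constants $21$ vs.\ $20$ and $40$ suggest the authors use girth $\approx \tfrac{40}{21}\log_d N$. Since this lies below $2\log_{d-1}N$, it is consistent with the Erdős–Sachs upper regime, but achieving it requires an explicit or refined construction. My first attempt is to invoke the existence theorem of Erdős–Sachs in the form: a $d$-regular graph of girth $g$ exists on every even $N\ge 2(d-1)^{g-1}$. If the constants fail with that, I would switch to an LPS-type Ramanujan family, adjusting $d=p+1$ to $n$ by deleting perfect matchings.

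\emph{Step 2 (spectral gap).} Having fixed the high-girth requirement, we ensure the gap $d-\lambda_2\ge d/2$, i.e.\ $\lambda_2\le d/2$. For a random $d$-regular graph with $d=n$ and $N$ superpolynomial in $d$, Friedman's theorem and the Broder–Shamir/Kahn–Szemerédi trace-method bound $\lambda_2 = O(\sqrt d)$ hold with high probability, and $O(\sqrt n)\le n/2$ for large $n$. To combine this with the girth requirement, I would use a conditioning argument. In the configuration model, the probability of girth at least $g$ is small but not too small. Alternatively, and more robustly, I would take an expander with $\lambda_2=O(\sqrt d)$ and high girth simultaneously, as in LPS graphs ($\lambda_2\le 2\sqrt{d-1}$). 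Removing or rerouting a bounded number of perfect matchings to fix the degree to exactly $n$ changes $\lambda_2$ by at most the number of removed matchings, by Weyl's inequality. The requirement of exactly $N_E$ vertices can be met by taking disjoint-union-free padding, or by noting that the girth bound only needs $N\le N_E$, allowing us to choose the largest admissible parameter.

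\emph{Main obstacle.} The hard part is reconciling exact parameters: exactly $N_E$ vertices, degree exactly $n$, girth $\approx \tfrac{40}{21}\log_d N$, and gap $d/2$ all at once. The random-regular-graph route naturally gives only girth $\approx \log_{d-1}N$ after alteration, which is insufficient against $\tfrac{40}{21}$. I would therefore first try the Erdős–Sachs-type existence with a careful constant check. If that fails, I would use a high-girth random lift (random $N/k$-lift of a small high-girth base) and bound its new eigenvalues via Bordenave's theorem, which gives $\lambda_2\le 2\sqrt{d-1}+o(1)$ while preserving the girth of the base.
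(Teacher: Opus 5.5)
\textbf{There is a genuine gap in Step~1, and it is the whole difficulty.} You correctly compute $g_n\approx\tfrac{40}{21}\log_{n-1}N_E$. You also correctly discard uniformly random regular graphs (girth $O(1)$) and Erd\H{o}s--Sachs/Bollob\'as alteration (girth $\approx\log_{d-1}N$, which needs about $(n-1)^{g_n}\approx 2^{40n^2\log^2 n}$ vertices). But the ``Moore-type existence bound $N\ge(d-1)^{g/2+O(1)}$'' is not an existence result. The Moore bound is a \emph{necessary} lower bound on the order of a $d$-regular graph of girth $g$, and nothing you cite attains it. The LPS graphs $X^{p,q}$ have girth about $\tfrac43\log_p N$, so girth $g_n$ at degree $p+1\approx n$ costs about $p^{3g_n/4}\approx 2^{30n^2\log^2 n}\gg N_E=2^{21n^2\log^2 n}$ vertices. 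Your claim that LPS achieves the bound contradicts the $\tfrac43$ you quote in the same sentence.

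The fallbacks do not close this. A lift has girth at least that of its base but $k$ times as many vertices, so the base would need an even larger ratio of girth to $\log_{d-1}(\text{order})$; Bordenave's theorem is also a fixed-degree result. Padding up to exactly $N_E$ vertices with extra components disconnects the graph and destroys the gap. Improving on the constant $\tfrac43$ (LPS, Margulis, Lazebnik--Ustimenko--Woldar) is a well-known open problem, so Step~1 cannot be completed with these parameters. Your Ramanujan-plus-Weyl trimming in Step~2 is fine, but only once a graph of the right girth is in hand.

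\textbf{Comparison with the paper.} The paper's proof is precisely the route you rejected. It asserts that a uniformly random $n$-regular graph on $N_E$ vertices asymptotically almost surely has gap at least $n/2$ (citing Sarid, Theorem~1.1) and girth at least $g_n$ (citing McKay--Wormald--Wysocka, Corollary~1). Your objection applies to it. That corollary requires $(d-1)^{2g-1}=o(N)$, which fails here since $(n-1)^{2g_n}\approx 2^{80n^2\log^2 n}$. Where it does apply, it gives $\Pr[\mathrm{girth}>g]=\exp\bigl(-\sum_{r=3}^{g}(d-1)^r/(2r)+o(1)\bigr)$, which tends to $0$ once $d\to\infty$; a random $n$-regular graph has about $(n-1)^3/6$ triangles a.a.s.

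So neither argument establishes the girth bound. As parametrized, the proposition lies beyond known constructions: reading logs base~$2$, the girth is about $1.9\log_{n-1}N_E$, just below the Moore bound $2\log_{n-1}N_E$. With natural logarithms it would even violate the Moore bound.

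What the rest of the paper actually uses is $t\,n^{g_n/2}\ll N_E$ in the ground-state localization lemma, plus $N_E$ being negligible next to the isolated-vertex padding. Your LPS-plus-trimming idea does prove a version with $N_E$ enlarged to roughly $n^{(3/4+\epsilon)g_n}$, e.g.\ $2^{31n^2\log^2 n}$, and with the exact value of $N_E$ dropped. That version would suffice downstream.
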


\begin{proof}
  It follows from \cite[Theorem 1.1]{Sar23j} and \cite[Corollary 1]{MWW04j} that a graph chosen uniformly at random from all $d_n$-regular graphs on $N_E$ vertices has, asymptotically almost surely, a spectral gap of at least $d_n/2$ and girth at least $g_n$.
\end{proof}

We believe that such expanders can be constructed explicitly, for certain values of $n$, using number-theoretic Ramanujan graphs~\cite{LPS88j,Mor94j}. Moreover, we expect that the graph constructed in~\cite{GHV21c} can be derandomized by relying on similar expanders, together with high-girth bipartite regular graphs in place of the random bipartite matchings used in that work.

Next, we describe a family of trees with a fractal-like structure, which we call \emph{self-similar trees}. The construction is based on the notion of \emph{graph decoration}: given two graphs~$G$ and~$G'$, one attaches a distinct copy of~$G'$ to each vertex of the base graph~$G$ by adding a new edge connecting that vertex to a distinguished vertex of the corresponding copy of~$G'$. Such a process has been studied in the context of stoquastic Hamiltonian constructions in \cite{Has21j,GHV21c}, and our construction of the trees bears strong similarities to these works. One difference is that we do not decorate the leaves of the base graph. This has the advantage of making the final graph slightly smaller, and ensures that it is degree-regular everywhere except at its root and its leaves.

\begin{definition}[\sc Self-similar trees $\treh_{n,k}$]
  \label{Def:selfsim}
  For $n \geq 1$ and $1 \leq k \leq \sqrt{n}$, consider the degree sequence $d_{n,k} = 2n - k\sqrt{n}$ and the depth sequence $\ell_{n,k} = k\cdot10n^{\frac{3}{2}}\log{n}$. Then, the following trees are defined,
    \begin{itemize}
      \item The \emph{perfect tree} $C_{n,k}$ is the tree in which every internal vertex has exactly $d_{n,k}-1$ children and all leaves are at depth $\ell_{n,k}$. The \emph{$0$-level decorated tree} $T_{n,k,0}$ is defined to be $C_{n,k}$.
      \item The \emph{$r$-level decorated tree} $T_{n,k,r}$ is defined recursively, for $r = 1,\dots, k-1$, by adding $d_{n,k - r} - d_{n,k-r+1}$ edges to each internal node of $T_{n,k,r-1}$ and attaching the other endpoint of each edge to the root of a new copy of $C_{n,k-r}$.
      \item The \emph{full-level decorated tree} $\treh_{n,k}$ is defined to be $T_{n,k,k-1}$.
    \end{itemize}
  The \emph{$r$-level leaves} is the set of all new leaves added in the construction of $T_{n,k,r}$ by decorating~$T_{n,k,r-1}$.
\end{definition}

The above definition is equivalent to the following construction, which will prove convenient for some of the later analysis.

\begin{fact}[\sc Bottom-up construction of $\treh_{n,k}$]
  \label{Fac:botup}
  The full-level decorated tree $\treh_{n,k}$ can equivalently be obtained by, for each $r = 1,\dots, k-1$, adding $d_{n,r} - d_{n,r+1}$ edges to each internal node of $C_{n,k}$ and attaching the other endpoint of each edge to the root of a new copy of $\treh_{n,r}$.
\end{fact}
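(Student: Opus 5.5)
We prove the fact by induction on $k$. The argument shows that both constructions produce the same rooted tree up to isomorphism. To compare them, we describe each construction by the multiset of copies of perfect trees $C_{n,j}$ that it uses, together with the attachment points of those copies.

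The base case is easy. For $k=1$ the top-down construction performs no decoration steps, since $r$ ranges over the empty set, so $\treh_{n,1}=C_{n,1}$. The bottom-up construction also adds nothing, because the range $r=1,\dots,0$ is empty.

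For the inductive step, we first unfold the top-down definition of $\treh_{n,k}$ into an explicit description. The tree is the core $C_{n,k}$. Each internal node of the core receives $d_{n,k-r}-d_{n,k-r+1}$ copies of $C_{n,k-r}$ for every $r=1,\dots,k-1$. Reindexing with $s=k-r$, this means each internal core node gets $d_{n,s}-d_{n,s+1}$ copies of $C_{n,s}$ for $s=1,\dots,k-1$. Beyond this, every internal node of a copy of $C_{n,s}$ added at level $r$ is decorated in all later levels $r'>r$, with $d_{n,k-r'}-d_{n,k-r'+1}$ copies of $C_{n,k-r'}$. Writing $s'=k-r'$, such a node receives copies of $C_{n,s'}$ for every $s'<s$. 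Leaves are never decorated, and the newly attached roots count as internal nodes of $T_{n,k,r}$ in later steps. The key observation is that the decorations hanging from a copy of $C_{n,s}$ depend only on $s$ and not on $k$. Concretely, each internal node of $C_{n,s}$ receives $d_{n,s'}-d_{n,s'+1}$ copies of $C_{n,s'}$ for $s'=1,\dots,s-1$, and these copies are decorated recursively in the same way. By the same unfolding applied to $\treh_{n,s}$, this subtree rooted at a copy of $C_{n,s}$ is exactly $\treh_{n,s}$.

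Consequently $\treh_{n,k}$ is $C_{n,k}$ with each internal node receiving $d_{n,s}-d_{n,s+1}$ copies of $\treh_{n,s}$ for $s=1,\dots,k-1$, which is precisely the bottom-up construction. To make this rigorous, we prove by induction on $r$ the following intermediate claim. The subtree of $T_{n,k,r}$ hanging from any copy of $C_{n,s}$ that was attached at level $k-s$ equals $T_{n,s,r-(k-s)}$, which we read as $C_{n,s}$ when the level index is $0$. Since the degrees $d_{n,k-r'}$ used at level $r'$ coincide with those used at the corresponding level $r'-(k-s)$ of the $s$-construction, the recursion for the subtree matches the recursion defining $T_{n,s,\cdot}$. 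Taking $r=k-1$ gives $T_{n,s,s-1}=\treh_{n,s}$.

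The main obstacle is the bookkeeping in this step. We must verify carefully that at global level $r'$ a node inside a level-$(k-s)$ copy receives copies of $C_{n,k-r'}$ with multiplicity $d_{n,k-r'}-d_{n,k-r'+1}$, exactly as the $s$-construction prescribes at its level $r'-(k-s)$. We must also confirm that both constructions decorate only internal nodes, never leaves.
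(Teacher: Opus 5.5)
Your proof is correct. The paper states this as a Fact without proof, and your intermediate claim is exactly the bookkeeping it leaves implicit: the subtree of $T_{n,k,r}$ hanging from a copy of $C_{n,s}$ attached at level $k-s$ is $T_{n,s,r-(k-s)}$, because global level $r'$ and local level $r'-(k-s)$ both add $d_{n,k-r'}-d_{n,k-r'+1}$ copies of $C_{n,k-r'}$, which gives $\treh_{n,s}$ at $r=k-1$. The outer induction on $k$ is unnecessary, since the induction on $r$ alone suffices. You should also state explicitly that ``internal'' means ``non-leaf'', root included: matching the subtree with $T_{n,s,\cdot}$ requires the root of each attached copy to be decorated just as the root of $C_{n,s}$ is in $\treh_{n,s}$, and this reading is also forced by the paper's claim that all non-leaf, non-isolated vertices of $G_n$ have degree $d_{n,1}$.
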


These trees have the following properties.

\begin{fact}
  All vertices in $T_{n,k,r}$ (except the root and the leaves) have degree $d_{n,k-r}$. The number of vertices in $\treh_{n,k}$ is at most $2^{12n^{3}\log^2{n}}$.
\end{fact}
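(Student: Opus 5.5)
We prove the two claims separately: the degree statement by induction on the decoration level $r$, and the size bound by bounding the maximum degree and the depth of $\treh_{n,k}$.

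\emph{Degrees.} We induct on $r$. First note that $d_{n,j}=2n-j\sqrt n$ is decreasing in $j$, so $d_{n,k-r}-d_{n,k-r+1}=\sqrt n>0$ and the decoration step is well defined.

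For $r=0$, every internal non-root vertex of $C_{n,k}$ has $d_{n,k}-1$ children and one parent, so its degree is $d_{n,k}$.

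Assume the claim holds for $T_{n,k,r-1}$. The vertices of $T_{n,k,r}$ then fall into three groups.
\begin{itemize}
\item An internal non-root vertex of $T_{n,k,r-1}$ had degree $d_{n,k-r+1}$. It gains $d_{n,k-r}-d_{n,k-r+1}$ new edges, so its degree becomes $d_{n,k-r}$.
\item The root of a new copy of $C_{n,k-r}$ has $d_{n,k-r}-1$ children plus the attaching edge, so its degree is $d_{n,k-r}$.
\item Every other internal vertex of such a copy has degree $d_{n,k-r}$ by the case $r=0$ applied to $C_{n,k-r}$.
\end{itemize}
Leaves are never decorated, so they stay leaves. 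The root of the whole tree is excluded from the statement, which completes the induction.

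\emph{Size.} Every vertex of $\treh_{n,k}$ has degree at most $d_{n,1}\le 2n$. Let $D(k)$ denote the depth of $\treh_{n,k}$. By the bottom-up description (\Cref{Fac:botup}), any root-to-vertex path first travels at most $\ell_{n,k}$ steps in $C_{n,k}$, then possibly takes one attaching edge into a copy of some $\treh_{n,r}$ with $r<k$. This gives
\[
D(k)\le \ell_{n,k}+1+\max_{r<k}D(r),
\]
and hence
\[
D(k)\le \sum_{j=1}^{k}(\ell_{n,j}+1)\le 10k^2n^{3/2}\log n+k\le 10n^{5/2}\log n+\sqrt n .
\]
A rooted tree with maximum degree at most $2n$ and depth $D$ has at most $(2n)^{D+1}$ vertices. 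Therefore
\[
\abs{\treh_{n,k}}\le 2^{(1+\log n)(10n^{5/2}\log n+\sqrt n+1)}\le 2^{12n^{3}\log^2 n}
\]
for $n$ large enough (logarithms base $2$). The last inequality holds with a lot of room, since $n^{5/2}\log^2 n=o(n^3\log^2 n)$.

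There is no serious obstacle here. The only point needing care is the bookkeeping in the depth recursion: decorations attach only to internal nodes, and each attaching edge adds one level.
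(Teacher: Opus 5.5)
Your proof is correct. The degree claim is not proved in the paper (it is treated as evident from \Cref{Def:selfsim}), and your induction on $r$, using $d_{n,k-r}-d_{n,k-r+1}=\sqrt n$, is exactly the routine check it leaves implicit.

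For the size bound you take a different route from the paper.

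\textbf{The paper's route.} At the end of the proof of \Cref{Prop:exithard}, the paper counts along the bottom-up construction of \Cref{Fac:botup}. It bounds $\abs{\treh_{n,k}}$ by $d_{n,k}^{\ell_{n,k}}\prod_{i=1}^{k-1}(d_{n,i}-d_{n,i+1})\,d_{n,i}^{\ell_{n,i}}$, which is the size of the base perfect tree times one factor per decoration level.

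\textbf{Your route.} You never track how many decorations are attached. You only combine the maximum degree $2n$ with the depth recursion $D(k)\le \ell_{n,k}+1+\max_{r<k}D(r)$.

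\textbf{What each buys.} The paper's count follows the multiplicities, so it gives a smaller exponent: roughly $\sum_{i\le k}\ell_{n,i}\log d_{n,i}\approx 5n^{5/2}\log^2 n$ at $k=\sqrt n$, against your roughly $10n^{5/2}\log^2 n$. However, it is written as an equality. Strictly, it needs a line of justification to serve as an upper bound, for example $\abs{C_{n,i}}\le d_{n,i}^{\ell_{n,i}}$ together with the fact that the sum over attached subtrees $\treh_{n,i}$ is dominated by its largest term. Your argument is cruder, but it needs no such bookkeeping and does not depend on the exact decoration multiplicities. Since $12n^3\log^2 n$ exceeds either exponent by a factor of order $\sqrt n$, the loss does not matter.

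Your qualifier ``for $n$ large enough'' is harmless. The final inequality already holds for all $n\ge 4$, and for $n=1$ the tree is a single vertex.
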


We now describe the main graph, which is obtained by decorating the expander defined above with self-similar trees. We also add dummy isolated vertices for technical reasons, following prior work on classical graph exploration (e.g.,~\cite{CCD03c}). These isolated vertices are used to force classical algorithms to explore a connected component, since otherwise a queried vertex would be isolated with very high probability.

\begin{definition}[\sc Main graph $G_n$]
  \label{Def:maingraph}
  For $n \geq 1$ and $K = \sqrt{n}$, define the graph~$G_n$ as follows:
  \begin{itemize}
    \item[] Fix any high-degree, high-girth spectral expanders $E_n$ satisfying \Cref{Prop:expander}.
    \item[] For $k = 1,\dots, K-1$, add $d_{n,k} - d_{n,k+1}$ edges to each node of $E_n$ and attach the other endpoint of each edge to the root of a new copy of $\treh_{n,k}$.
    \item[] Pad the construction with $2^{15n^{3}\log^2{n}} - N_E \cdot (1 + \sum_{k=1}^{K-1}|\treh_{n,k}|)$ isolated vertices.
  \end{itemize}
  We represent the vertices of $G_n$ by the abstract sets $V = V_E \cup V_T \cup V_I$, where $V_E$ are the vertices coming from $E_n$ (called the \emph{expander vertices}), $V_T$ are the vertices coming from the trees $\treh_{n,k}$ (called the \emph{tree vertices}), and $V_I$ are the other \emph{isolated vertices}.
  
  For each tree vertex $v \in V_E \cup V_T$, we let $\ep{v} \in V_E$ denote the closest expander vertex to $v$ in the graph (this is $v$ itself when $v \in V_E$, and the expander vertex to which the tree containing~$v$ is attached when~$v \in V_T$).

  The adjacency matrix of $G_n$ is denoted by $A_n$, and its top eigenvector by $\ket{\psi_n}$.
\end{definition}

\begin{fact}
  \label{Fac:grp}
  All vertices in $G_n$ (except the leaves and the isolated vertices) have degree $d_{n,1}$. The graph $G_n$ has the same girth as $E_n$. The number of vertices is $\abs{V} = 2^{15n^{3}\log^2{n}}$.
\end{fact}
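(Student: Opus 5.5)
The plan is to check the three claims of \Cref{Fac:grp} separately: degree regularity, girth, and vertex count.

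\emph{Degree regularity.} We count degrees by vertex type.

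An expander vertex $u \in V_E$ has $d_n = n$ neighbours inside $E_n$. For each $k=1,\dots,K-1$ it also receives $d_{n,k}-d_{n,k+1}$ edges to roots of copies of $\treh_{n,k}$. This sum telescopes to $d_{n,1}-d_{n,K} = (2n-\sqrt n)-(2n-n) = n-\sqrt n$. So the degree of $u$ is $n + n - \sqrt n = 2n-\sqrt n = d_{n,1}$.

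For tree vertices we use the bottom-up description of \Cref{Fac:botup} and argue by induction on $k$ that every non-root internal vertex of $\treh_{n,k}$ has degree $d_{n,1}$ in $G_n$.

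\begin{itemize}
\item An internal node of the base copy $C_{n,k}$ has $d_{n,k}$ neighbours in $C_{n,k}$. It also gets $\sum_{r=1}^{k-1}(d_{n,r}-d_{n,r+1}) = d_{n,1}-d_{n,k}$ extra edges, for a total of $d_{n,1}$.
\item The root of $\treh_{n,k}$ has $d_{n,k}-1$ children plus the same $d_{n,1}-d_{n,k}$ decorations, i.e.\ $d_{n,1}-1$ edges inside the tree. Its extra edge to the parent (an expander vertex, or an internal node in a larger tree) brings it to $d_{n,1}$.
\item Vertices inside the attached copies $\treh_{n,r}$ with $r<k$ are covered by the induction hypothesis, their roots by the previous item.
\end{itemize}

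Leaves are not decorated, and isolated vertices have degree $0$; these are exactly the exceptions in the statement.

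\emph{Girth.} Every tree is attached by a single edge, which is a bridge. Hence no cycle of $G_n$ uses a tree vertex, so every cycle lies entirely within $E_n$. The girth of $G_n$ therefore equals that of $E_n$.

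\emph{Vertex count.} By construction $|V| = N_E(1+\sum_{k}|\treh_{n,k}|) + |V_I| = 2^{15n^3\log^2 n}$. The only thing to check is that the number of padding vertices is nonnegative. This is the only step with any content. We bound
\[
N_E\Bigl(1+\sum_{k}|\treh_{n,k}|\Bigr) \le 2^{21n^2\log^2 n}\cdot \sqrt n \cdot 2^{12n^3\log^2 n},
\]
using $1+\sum_k|\treh_{n,k}| \le \sqrt n \cdot 2^{12n^3\log^2 n}$. The right-hand side is at most $2^{15n^3\log^2 n}$ for $n$ large, since $3n^3\log^2 n$ dominates $21n^2\log^2 n + \tfrac12\log n$.
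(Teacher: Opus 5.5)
Your degree and girth arguments are correct. They are the routine verification behind this Fact, which the paper states without proof, relying implicitly on \Cref{Fac:botup}, the bound $|\treh_{n,k}|\le 2^{12n^3\log^2 n}$ and the padding in \Cref{Def:maingraph}. Your root computation treats the root of $C_{n,k}$ as an internal node, so that it gets decorated. That is the intended reading, and it is consistent with \Cref{Lem:locss}.

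There is one slip in the vertex count. For each $k$, every expander vertex receives $d_{n,k}-d_{n,k+1}=\sqrt n$ copies of $\treh_{n,k}$; you used exactly this multiplicity in your telescoping degree computation. Hence $|V_E|+|V_T| = N_E\bigl(1+\sqrt n\sum_{k=1}^{K-1}|\treh_{n,k}|\bigr)$, not $N_E\bigl(1+\sum_k|\treh_{n,k}|\bigr)$.

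You inherited this from the padding formula in \Cref{Def:maingraph}; the same undercount reappears in \Cref{lem:conncomp}. Read literally, that formula would not make $|V|$ equal to $2^{15n^3\log^2 n}$. The padding must instead be understood as $2^{15n^3\log^2 n}-|V_E\cup V_T|$ isolated vertices, and then the only remaining task is the nonnegativity check you identified.

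With the corrected count, $1+\sqrt n\,(\sqrt n-1)\,2^{12n^3\log^2 n}\le n\cdot 2^{12n^3\log^2 n}$, so $|V_E\cup V_T|\le 2^{21n^2\log^2 n}\cdot n\cdot 2^{12n^3\log^2 n}$. Since $3n^3\log^2 n$ dominates $21n^2\log^2 n+\log n$, this is still at most $2^{15n^3\log^2 n}$ for large $n$. The conclusion is therefore unaffected; only the stated count needs fixing.
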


There are two crucial properties of $G_n$ that will be used to demonstrate a quantum speedup later. First, $G_n$ inherits the spectral properties of the expander $E_n$, in the sense that it has a large spectral gap and that its top eigenvector has a large overlap with the ground state of $E_n$.

\begin{lemma}[\sc Spectral properties of the adjacency matrix $A_n$]
  \label{Lem:spec}
  The difference between the two largest eigenvalues of $A_n$ is at least $n/2 - 4\sqrt{2n}$. The top eigenvector $\ket{\psi_n}$ of $A_n$ can be written as,
    \[\ket{\psi_n} = \ket{\psi_E} + \sum_{v \in V_E} \ket{\psi_{v,T}}\]
  where $\ket{\psi_E}$ is a uniform (unnormalized) superposition over the expander vertices $\set{\ket{v}}_{v \in V_E}$, and each $\ket{\psi_{v,T}}$ is a superposition over the vertices in the trees attached to $v$. Moreover, all states in $\set{\ket{\psi_{v,T}}}_{v \in V_E}$ are identical (up to a basis permutation), and $\norm{\ket{\psi_E}}^2 \geq 1 - \bo{1/n}$.
\end{lemma}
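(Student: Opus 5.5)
Restrict $A_n$ to the connected component $V_E\cup V_T$; the isolated vertices only contribute zero eigenvalues, which sit far below the top two. The basic tool is a \emph{tree-gadget} analysis. Every expander vertex $v$ has the same rooted forest $F$ hanging off it: for each $k$, $d_{n,k}-d_{n,k+1}$ copies of $\treh_{n,k}$, each joined to $v$ by one edge. Let $A_F$ be the adjacency matrix of this forest with $v$ removed, and $\ket{b}$ the indicator of the roots of the attached trees. The number of such roots is $\sum_k (d_{n,k}-d_{n,k+1}) = d_{n,1}-d_{n,K} = n - \sqrt n$.

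\textbf{Step 1 (spectral radius of the trees).} First bound the spectral radius of each $\treh_{n,k}$. Every vertex has degree at most $d_{n,1}=2n-\sqrt n$. A tree of maximum degree $D$ has spectral radius at most $2\sqrt{D-1}$, which is the standard bound obtained by comparing with the infinite $D$-regular tree. Hence $\lambda_{\max}(A_F)\le 2\sqrt{2n}$.

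\textbf{Step 2 (top eigenvalue of $A_n$).} Let $\lambda_1\ge\lambda_E=n$, using the trial vector uniform on $V_E$. Write the top eigenvector as $\psi_n=\psi_E+\sum_v\psi_{v,T}$. The eigen-equation on the tree coordinates gives
\[
\psi_{v,T}=(\lambda_1-A_F)^{-1}\ket{b}\,\psi_n(v).
\]
Substituting into the expander coordinates yields the effective equation
\[
\lambda_1\psi_E=A_E\psi_E+s(\lambda_1)\,\psi_E,\qquad s(\lambda)=\bra b(\lambda-A_F)^{-1}\ket b,
\]
where $s(\lambda)$ is the same scalar for every $v$, because all the forests are isomorphic. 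Thus $\psi_E$ is an eigenvector of $A_E$. By Perron--Frobenius $\psi_n$ is positive, so $\psi_E$ is the all-ones vector, which gives uniformity. Since $\psi_{v,T}$ is a fixed vector times $\psi_n(v)$, which is constant, all the $\psi_{v,T}$ are identical up to relabelling.

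\textbf{Step 3 (norm bound).} Using the resolvent,
\[
\norm{\psi_{v,T}}^2=\bra b(\lambda_1-A_F)^{-2}\ket b\,\psi_n(v)^2\le \frac{n}{(n-2\sqrt{2n})^2}\,\psi_n(v)^2=\bo{1/n}\,\psi_n(v)^2.
\]
Summing over $v\in V_E$ gives $\norm{\psi_E}^2\ge 1-\bo{1/n}$ after normalisation.

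\textbf{Step 4 (the gap).} This is the main obstacle. I would use a Schur-complement / variational argument. Every other eigenvector $\phi$ with eigenvalue $\mu>2\sqrt{2n}$ satisfies the same reduced equation $\mu\phi_E=A_E\phi_E+s(\mu)\phi_E$ with $\phi_E\ne0$, so $\phi_E$ is an eigenvector of $A_E$. Orthogonality to $\psi_n$, together with the identical tree parts, forces $\phi_E\perp\mathbf 1$; otherwise $\phi_E$ would be the all-ones vector, which pins $\mu$ to the unique root of $\lambda-s(\lambda)=n$. Hence
\[
\mu=\lambda_2(A_E)+s(\mu)\le n/2+s(\mu).
\]
It remains to compare $s(\mu)$ with $s(\lambda_1)$. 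The function $\lambda\mapsto\lambda-s(\lambda)$ is increasing for $\lambda>2\sqrt{2n}$ with derivative $1+\bra b(\lambda-A_F)^{-2}\ket b\ge1$. The difference of the two equations $\lambda_1-s(\lambda_1)=n$ and $\mu-s(\mu)\le n/2$ therefore gives $\lambda_1-\mu\ge n/2$ whenever both lie above $2\sqrt{2n}$.

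Eigenvalues $\mu\le 2\sqrt{2n}$ need separate care. There is also the possibility of eigenvectors with $\phi_E=0$, supported on the trees, whose eigenvalues are at most $2\sqrt{2n}$. Either way such eigenvalues satisfy $\mu\le 2\sqrt{2n}\le n/2+4\sqrt{2n}$, and combined with $\lambda_1\ge n$ this gives the loose stated bound $n/2-4\sqrt{2n}$, absorbing slack.
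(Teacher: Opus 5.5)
Steps 1--3 are sound and follow the paper's own route: solve the tree coordinates with the resolvent, apply Perron--Frobenius to the reduced equation on $V_E$, and bound the tree mass by $\bo{1/n}$. Your direct estimate $\langle b,(\lambda_1-A_F)^{-2}b\rangle\le\|b\|^2/(\lambda_1-2\sqrt{2n})^2$ is even a little cleaner than the paper's self-loop detour.

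The gap is in Step 4, in the sentence that derives $\lambda_1-\mu\ge n/2$ from $f'\ge 1$, where $f(\lambda)=\lambda-s(\lambda)$. That inference runs the wrong way. The bound $f'\ge1$ gives $f(\lambda_1)-f(\mu)\ge\lambda_1-\mu$, which is an \emph{upper} bound on $\lambda_1-\mu$, not a lower one. Subtracting your two equations correctly gives $\lambda_1-\mu\ge n/2-\bigl(s(\mu)-s(\lambda_1)\bigr)$. Since $s$ is decreasing and $\mu<\lambda_1$, the correction term is strictly positive.

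In fact the claimed bound itself fails. Your argument only uses $\lambda_2(A_E)\le n/2$, so suppose equality holds. Since $f(n/2)=n/2-s(n/2)<n/2<n=f(\lambda_1)$, there is $\mu\in(n/2,\lambda_1)$ with $f(\mu)=n/2$. Lifting an eigenvector of $A_E$ with eigenvalue $n/2$ through the resolvent gives an eigenvector of $A_n$ with eigenvalue $\mu$. Then $\lambda_1-\mu=n/2-\bigl(s(\mu)-s(\lambda_1)\bigr)<n/2$; it is roughly $n/2-1$, since $s(\lambda)\approx\|b\|^2/\lambda$ for $\lambda\gg2\sqrt{2n}$.

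Within your framework, the fix is to bound $s(\mu)$ itself rather than a difference of $s$-values. You have $0<s(\mu)\le\|b\|^2/(\mu-\lambda_{\max}(A_F))\le n/(\mu-2\sqrt{2n})$. So every other eigenvalue $\mu>2\sqrt{2n}$ satisfies $\mu\le n/2+n/(\mu-2\sqrt{2n})$. If $\mu\ge n/2+4\sqrt{2n}$, the right-hand side is below $n/2+2$, a contradiction. Combined with $\lambda_1\ge n$ and your treatment of $\mu\le2\sqrt{2n}$, this gives the stated gap, and in fact the sharper $n/2-\bo{1}$.

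The paper avoids the secular equation altogether. It writes $A_n=A_E+A_T$, where $A_E$ is the expander adjacency padded by zeros (gap at least $n/2$) and $A_T$ is the forest of all tree edges ($\|A_T\|\le2\sqrt{2n}$). It then applies the Weyl-type bound of \Cref{Lem:app-specsumhastings} to get $n/2-2\cdot2\sqrt{2n}$ in one line. That is exactly where the $4\sqrt{2n}$ in the statement comes from.
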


\begin{proof}
  The proof is deferred to \Cref{App:app-spectral}. It relies on generic spectral properties of decorated graphs, similar to those used in~\cite{Has21j,GHV21c}.
\end{proof}

The second property is that, locally, $G_n$ resembles the tree $\treh_{n,K}$ along any edge within the expander graph. This will make it very difficult for classical algorithms to distinguish between these two graphs, although their spectral properties are radically different. We illustrate a simplified version of that in \Cref{fig:overview} and in the next lemma. A stronger version is given in \Cref{fig:zoom} and \Cref{Lem:hardmaing}, where we consider multiple connected components rooted at arbitrary vertices. The property is based on the following distance measure defined on the graph $G_n$.

\begin{definition}[\sc Distance $\dist_E$]
  \label{Def:loc}
  Given two vertices $u,v \in V$, the distance $\dist_E(u,v)$ is defined as the number of \emph{expander} vertices along the shortest path connecting $u$ and $v$ in $G_n$. Given a subset $U \subseteq V$ of vertices, we define $\dist_E(U,v) = \min_{u \in U} \dist_E(u,v)$.
\end{definition}

Note that $\dist_E$ is not a metric, since the distance between two distinct vertices $u \neq v$ can be zero (if they belong to trees attached to the same expander vertex). Nevertheless, it satisfies the triangle inequality. We use $\dist_E$ to study locality in the graph $G_n$ as follows.

\begin{lemma}[\sc Local structure of $G_n$]
  \label{Lem:locss}
  Let $u \in V_E$ be an expander vertex and consider the subgraph $B(u) \subseteq G_n$ induced by vertices within distance $g_n/4-1$ of $u$ (according to the distance measure $\dist_E$). Then $B(u)$ admits an embedding into the graph~$G_n$ modified as follows: for each edge from $u$ to an expander neighbor, replace it with an edge attached to~$u$ and to the root of a new copy of $\treh_{n,K}$. Moreover, this embedding maps the expander vertices at distance $g_n/4-1$ from $u$ to the $0$-level leaves of the trees $\treh_{n,K}$.
\end{lemma}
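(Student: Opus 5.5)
The plan is to exploit two facts: the ball $B(u)$ sees no cycles of the expander, and the degree sequences are tuned so that $d_{n,K}=2n-K\sqrt n=n=d_n$. The first step is to show that the expander part of $B(u)$ is a tree. Every path inside $B(u)$ between expander vertices only passes through expander vertices, since the decorating trees are attached by a single edge and are dead ends. Two expander vertices in $B(u)$ are therefore joined through $u$ by expander paths of length at most about $g_n/4$. A second such path between them would close a cycle in $E_n$ of length at most about $g_n/2<g_n$, which contradicts \Cref{Prop:expander}; this is the same reason \Cref{Fac:grp} says $G_n$ has the girth of $E_n$. Hence $B(u)\cap V_E$ induces a tree rooted at $u$. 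The vertices at expander depth $j$ are exactly the expander vertices with $\dist_E(u,\cdot)=j+1$, under the convention that both endpoints are counted. Every vertex in $B(u)$ is then either such an expander vertex or lies in a tree $\treh_{n,k}$ hanging off one.

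Next I would define the embedding explicitly. Vertex $u$ and all the trees $\treh_{n,k}$ ($k<K$) attached to it are mapped identically. Each expander neighbour $w$ of $u$ (there are $d_n=n$ of them) is mapped to the root of the copy of $\treh_{n,K}$ that replaces the edge $uw$. Inductively, an expander vertex $x\neq u$ at depth $j\ge1$ is mapped to a vertex at depth $j-1$ of the perfect tree $C_{n,K}$ underlying that copy.

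The map extends consistently because of the following degree count, which I would verify using the bottom-up description of \Cref{Fac:botup}. An internal vertex of $C_{n,K}$ has $d_{n,K}-1=n-1$ children in $C_{n,K}$, and it carries $d_{n,r}-d_{n,r+1}$ copies of $\treh_{n,r}$ for each $r=1,\dots,K-1$. An expander vertex $x\neq u$ in the tree part of $B(u)$ has $n-1$ expander neighbours away from $u$, and by \Cref{Def:maingraph} it carries exactly the same multiset of decorations. So the children of $x$ are matched bijectively with children in $C_{n,K}$, and each decoration $\treh_{n,r}$ at $x$ is mapped isomorphically onto the corresponding decoration in the image. Adjacency is preserved edge by edge, and injectivity follows from the tree structure established in the first step.

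Finally I would match the depths. We have $\ell_{n,K}=K\cdot 10n^{3/2}\log n=10n^2\log n=(g_n-8)/4$, so the radius $g_n/4-1$ corresponds to expander depth $\ell_{n,K}+1$ from $u$. This is depth $\ell_{n,K}$ inside $C_{n,K}$, i.e.\ the $0$-level leaves of $\treh_{n,K}$. The step I expect to be the main obstacle is exactly this boundary bookkeeping. I would fix the counting convention of $\dist_E$ precisely, so that the off-by-one lands boundary expander vertices on leaves of $C_{n,K}$ rather than on internal vertices. I would also check how the decorations hanging off boundary expander vertices are treated: they have the same $\dist_E$ as their root, but leaves of $C_{n,K}$ carry no decorations. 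Accordingly, I would either read the ball so that these decorations are excluded (as in \Cref{fig:overview}, where only the boundary vertices themselves are matched), or restrict the claim to the induced subgraph on non-boundary decorations. Either way, this needs to be stated carefully for the lemma as formulated.
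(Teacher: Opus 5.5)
Your proposal is correct and follows the paper's proof. Girth makes $B(u)$ a tree, $d_{n,K}=d_n=n$ together with \Cref{Fac:botup} matches the children and decorations of each expander vertex with those of an internal vertex of $C_{n,K}$, and $g_n/4-2=\ell_{n,K}$ fixes the depth. This last step requires reading $\dist_E$ (as the paper's proof implicitly does) as the distance in $E_n$ between the nearest expander vertices, so drop your ``both endpoints counted'' convention: it contradicts your later depth computation and would leave the boundary one level short of the $0$-level leaves.

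Your caveat about the decorations hanging off the boundary expander vertices points to a genuine imprecision in the statement, which the paper's proof also glosses over. A $0$-level leaf has degree $1$ in the modified graph, so those decorations cannot be embedded. Excluding them from the ball, as you propose, is the right repair. It is also harmless downstream, since \Cref{Prop:classhard} only needs the exploration to reach a boundary expander vertex.
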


\begin{proof}
  The subgraph $B(u)$ is necessarily a tree, since the expander graph $E_n$ has no cycles of length less than~$g_n$ by the girth property. The expander vertices in $B(u)$ form a full tree of depth $g_n/4-1$ and degree~$d_n+1$, to which are attached the decorations specified in \Cref{Def:maingraph}. Observe that, for any expander neighbor $v$ of $u$, the subtree rooted at $v$ matches exactly the definition of the self-similar tree $\treh_{n,K}$ from \Cref{Def:selfsim} and \Cref{Fac:botup}, using the identities $g_n/4-2 = \ell_{n,K}$ and~$d_n = d_{n,K}$.
\end{proof}

  \subsection{Hamiltonian construction}
  We describe the stoquastic Hamiltonian used to prove a sub-exponential speedup for the GGSP problem. It is given by minus the adjacency matrix~$A_n$ of the graph~$G_n$ defined in the previous section, with the vertices encoded via a random permutation. Without such a random labeling, the GGSP problem would present no difficulty, since the ground-state distribution would depend only on~$n$ and not on the oracle. The randomness of the labeling is used crucially in the proofs of \Cref{lem:conncomp,Lem:hardss}. The Hamiltonian is specified to the algorithm via an adjacency-list oracle.

\begin{definition}[\sc Hamiltonian $\hp$]
  \label{Def:mainHam}
  For $m$ sufficiently large, let $n = \poly(m)$ be the integer such that $G_n$ has $2^m$ vertices.
  Let $\pi : V \ra [2^m]$ be a bijection (a labeling) from the vertex set~$V$ of $G_n$ to the range $[2^m]$. Define the $m$-qubit Hamiltonian $\hp$ as,
    \[\hp = - P_{\pi} A_n P_{\pi}^{-1}\]
  where $P_{\pi}$ is the permutation matrix corresponding to $\pi$ (i.e., $(P_{\pi})_{x,u} = 1$ if $\pi(u) = x$, and $(P_{\pi})_{x,u} = 0$ otherwise). The ground state of $\hp$ is denoted $\psp$.
  
  The Hamiltonian $\hp$ can be queried via the following adjacency-list oracles:
  \begin{itemize}
    \item {\it (Classical oracle)} Given $x \in [2^m]$, the oracle $\ora_{\pi}$ returns the list $\ora_{\pi}(x) = \Gamma_x \subset [2^m]$ of labels that correspond to the neighbors of the vertex $\pi^{-1}(x)$ in $G_n$ (equivalently, the non-zero entries in the $x$-th row of $\hp$).
    \item {\it (Quantum oracle)} The oracle implements the unitary operation $\ora_{\pi} \ket{x}\ket{0} \mapsto \ket{x}\ket{\Gamma_x}$.
  \end{itemize}
\end{definition}

The spectral properties of $\hp$ follow directly from that of the graph $G_n$ (\Cref{Lem:spec}). In particular, its ground state $\psp$ coincides with the top eigenvector $\ket{\psi_n}$ of $A_n$.

\begin{fact}[\sc Spectral properties of $\hp$]
  The Hamiltonian $\hp$ is stoquastic, it has a non-degenerate ground space with spectral gap at least $n/2$, and its ground state is $\psp = P_{\pi} \ket{\psi_n}$.
\end{fact}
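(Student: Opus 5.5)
The three claims follow from conjugating by the permutation matrix and from \Cref{Lem:spec}, so I will take them in turn.

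\emph{Stoquasticity.} $P_{\pi}$ is a permutation matrix, so $P_{\pi}^{-1} = P_{\pi}^{\top}$ and $(\hp)_{x,y} = -(A_n)_{\pi^{-1}(x),\pi^{-1}(y)}$. The entries of $A_n$ are $0$ or $1$, and $A_n$ is real symmetric. Hence every entry of $\hp$ is real, every off-diagonal entry is $0$ or $-1$, and $\hp$ is stoquastic in the standard basis. Since $G_n$ has no self-loops, the diagonal is in fact zero.

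\emph{Spectrum and ground state.} $\hp$ is unitarily similar to $-A_n$, so its spectrum is the negation of that of $A_n$. The ground energy of $\hp$ is therefore $-\lambda_1(A_n)$, and its gap equals the gap between the two largest eigenvalues of $A_n$. If $A_n\ket{\psi_n} = \lambda_1\ket{\psi_n}$, then $\hp P_{\pi}\ket{\psi_n} = -P_{\pi}A_n\ket{\psi_n} = -\lambda_1 P_{\pi}\ket{\psi_n}$. Thus $P_{\pi}\ket{\psi_n}$ is a ground state of $\hp$.

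\emph{Non-degeneracy.} \Cref{Lem:spec} gives a strictly positive gap $n/2 - 4\sqrt{2n}$ between the two largest eigenvalues of $A_n$ for $n$ large. So the top eigenspace of $A_n$ is one-dimensional, and the ground space of $\hp$ is spanned by $\psp = P_{\pi}\ket{\psi_n}$.

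One might worry that $G_n$ is disconnected, because of the isolated vertices. The lemma's strict gap already rules out degeneracy. Each isolated vertex only contributes an eigenvalue $0$, which is far below $\lambda_1 \geq n$. That bound holds because $G_n$ contains the $n$-regular graph $E_n$, whose top eigenvalue is $n$.

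\emph{The gap value.} The step needing care is the constant: \Cref{Lem:spec} yields $n/2 - 4\sqrt{2n}$, not $n/2$. I would read the stated ``$n/2$'' as $\Omega(n)$, e.g.\ at least $n/4$ for $n$ large, since $4\sqrt{2n} = o(n)$. Alternatively, one can adjust the constant in \Cref{Prop:expander}, because random $n$-regular graphs have gap $n - O(\sqrt{n})$, and rerun the perturbation argument of \Cref{App:app-spectral}. Either way the gap is $\Omega(n)$, which is at least $1/\poly(m)$ as GGSP requires.

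For completeness: $\norm{\hp} = \lambda_1(A_n) \leq d_{n,1} \leq 2n = \poly(m)$, since the maximum degree of $G_n$ is $d_{n,1}$ by \Cref{Fac:grp}.
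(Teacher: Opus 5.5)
Your argument is correct and follows the paper, which reads these properties off \Cref{Lem:spec} through the permutation similarity $-P_{\pi} A_n P_{\pi}^{-1}$ and gives no further proof. You are also right that \Cref{Lem:spec} only yields a gap of $n/2 - 4\sqrt{2n}$, not the stated $n/2$, so the Fact's constant is loose as written; reading it as $\Omega(n)$, or strengthening the expander's second-eigenvalue bound to $O(\sqrt{n})$, is a sound fix and suffices for the inverse-polynomial gap GGSP requires.
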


Finally, given a guiding state $\psin$ for the unlabelled graph $-A_n$, we associate the corresponding family of guiding states $\set{\psinp}_{\pi}$ for the Hamiltonians $\hp$.

\begin{definition}[\sc Guiding state $\psinp$]
  \label{Def:guide}
  Given a guiding state $\psin$ for the Hamiltonian~$-A_n$, we let $\psinp = P_{\pi} \psin$ denote the corresponding guiding state for the Hamiltonian $\hp$.
\end{definition}

\section{Main result}
In this section, we prove our main result (\Cref{thm:lowerbound}): any classical algorithm that makes only a sub-exponential number of queries to the adjacency-list oracle of the Hamiltonian $\hp$, even when provided with copies of a guiding state $\psinp$, cannot sample from the ground-state distribution with sufficiently high precision. Before presenting the classical lower bound, we first show that quantum algorithms can solve the problem efficiently.

\begin{proposition}[\sc Quantum algorithm]
  \label{Prop:quantalg}
  There exists a quantum algorithm that solves the GGSP problem on every Hamiltonian $\hp$ (\Cref{Def:mainHam}), using a single copy of a guiding state~$\psinp$, and performing $\poly(n)$ quantum queries to the adjacency-list oracle of $\hp$.
\end{proposition}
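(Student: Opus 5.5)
The plan is to instantiate \Cref{prop:qalgo} for the Hamiltonian $\hp$ and then bound the three costs it depends on: the norm, the spectral gap, and the simulation cost $T_H$. First I check that $\hp$ is a valid GGSP instance in the sense of \Cref{Def:GGSP}. By \Cref{Fac:grp}, every vertex of $G_n$ has degree at most $d_{n,1} = 2n-\sqrt{n}$, so $\norm{\hp} = \norm{A_n} \le 2n = \poly(m)$. The stated fact on the spectral properties of $\hp$ (which follows from \Cref{Lem:spec}) gives a unique ground state $\psp = P_\pi\ket{\psi_n}$ with spectral gap $\Omega(n)$. In particular the gap is at least a constant, hence at least $1/\poly(m)$. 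The guiding state $\psinp = P_\pi\psin$ satisfies $\ip{\psi_{\mathrm{in},\pi}}{\psi_\pi} = \ip{\psi_{\mathrm{in}}}{\psi_n}$, so it has inverse-polynomial overlap whenever $\psin$ does.

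Next I run phase estimation as in the proof of \Cref{prop:qalgo}, and the main work is bounding $T_H$. The quantum adjacency-list oracle $\ora_\pi\ket{x}\ket{0} = \ket{x}\ket{\Gamma_x}$ yields $\bo{1}$-query access to the nonzero entries of a $d$-sparse matrix with $d \le 2n$. Indeed, from $\Gamma_x$ we compute the $j$-th neighbor of $x$ reversibly and then uncompute $\Gamma_x$ with a second query. All nonzero entries equal $-1$. Standard sparse Hamiltonian simulation (e.g.\ quantum-walk or block-encoding based methods) then implements $e^{-i\hp t}$ to precision $\epsilon$ using $\bo{d\norm{\hp}_{\max} t + \log(1/\epsilon)}$ queries, or at any rate $\poly(d, t, \log(1/\epsilon))$ queries. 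Equivalently, one can build a block-encoding of $\hp/d$ and apply QSVT filtering directly. Since the gap is constant relative to the norm scale $2n$, phase estimation needs precision only $\Theta(1)$. It therefore needs total evolution time $\bo{1}$ up to log factors, and simulation error $1/\poly(n)$ suffices. The total is $\poly(n)$ queries, and $\poly(n)$ gate overhead for $m=\poly(n)$-bit arithmetic.

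The step needing care, and the main obstacle, is the claim of using a \emph{single} copy of $\psinp$. For that I use the amplitude-amplification variant rather than repetition. Reflection about $\psinp$ is not available, since we are given only a copy, not a preparation unitary. So I would instead use fixed-point or oblivious approaches: Marriott--Watrous style alternating measurements between the phase-estimation projector $\Pi_0$ onto energy $\lambda_0$ and the projector onto $\psinp$ are not implementable either. I would therefore try the following route. Measure the phase-estimation register; on failure, the post-measurement state lies in the high-energy subspace, so the copy is lost. Hence, with a single copy, I expect to rely on the overlap actually being large for the guiding states considered, or to reinterpret "single copy" as success probability $|\ip{\psi_{\mathrm{in}}}{\psi_n}|^2$ boosted by rerunning. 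If the intended guiding state class allows it, I would note that $\psin$ may be taken to have constant overlap. Otherwise I would fall back on $\poly$ copies, which still gives $\poly(n)$ queries overall. Finally, projecting onto the eigenvalue register near $\lambda_0$ yields exactly $\psp$ (up to $1/\poly$ error from imperfect simulation), so $\abs{\ip{\psi_{\mathrm{out}}}{\psi_\pi}} \ge 3/4$.
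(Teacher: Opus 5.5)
Your main line is the paper's proof. The paper notes that $H_\pi$ is $d_{n,1}=O(n)$-sparse, that sparse Hamiltonian simulation then costs $\poly(n)$ queries to the adjacency-list oracle, and invokes \Cref{prop:qalgo}. Your extra checks are correct and fill in details the paper leaves implicit: the bound $\|H_\pi\|\le 2n$ from the maximum degree, the $\Omega(n)$ gap, invariance of the overlap under $P_\pi$, and the two-query reversible conversion of the list oracle into a neighbor oracle.

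You diverge only on the \emph{single copy} clause, and there your objection is right. The paper gets this clause from the sentence in \Cref{prop:qalgo} that success can be boosted ``by using amplitude amplification on a single copy'' of the guiding state. But amplitude amplification needs reflections about $\ket{\psi_{\mathrm{in}}}$, i.e.\ a unitary preparing it and its inverse, which a copy does not give you; after a failed energy measurement the copy is spent. So your proposal does not establish the clause as written, but neither does the paper, and the clause is in fact false once the overlap is below $3/4$. To see this, take $\ket{\psi_{\mathrm{in}}}=\sqrt{p}\,\ket{\psi_n}+\sqrt{1-p}\,\ket{v_0}$ with $v_0\in V_I$ and $p=n^{-2a}$. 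Labelings that agree on $V_E\cup V_T$ give the same oracle and the same ground state, while $\pi(v_0)$ is uniform over the remaining labels. Averaging the input over these labelings yields, up to negligible trace distance, $\ket{\psi_\pi}$ with weight $p$ mixed with an essentially maximally mixed state with weight $1-p$. By linearity, the output fidelity is at most $p$ plus the fidelity a $\poly(n)$-query algorithm achieves with \emph{no} guiding state. Averaged over $\pi$, the latter is exponentially small by the standard hybrid argument, since the ground state is supported on a $2^{-\Omega(n^3\log^2 n)}$ fraction of the labels (cf.\ \Cref{lem:conncomp}). Hence some $\pi$ gives output fidelity far below $9/16$.

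The provable statement is the one you fall back on: use $O(n^{2a})$ copies and repeat phase estimation until the energy register heralds success. Alternatively, if a preparation unitary for the guiding state is available, amplitude amplification needs $O(n^a)$ applications of it and its inverse. This costs nothing for the separation, since \Cref{thm:lowerbound} holds for every constant $b$. You should commit to that corrected statement rather than list alternatives. In particular, ``relying on the overlap being large'' is not available under the hypothesis $\abs{\langle\psi_{\mathrm{in},\pi}|\psi_\pi\rangle}\ge n^{-a}$.
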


\begin{proof}
  The graph $G_n$ (and hence the Hamiltonian $\hp$) is $d_{n,1}$-sparse, where $d_{n,1} = \bo{n}$ (\Cref{Fac:grp}). The algorithm follows immediately from \Cref{prop:qalgo}, by noting that simulating a $\bo{n}$-sparse $\poly(n)$-qubit Hamiltonian $H$ requires $T_H = \poly(n)$ quantum gates and queries to its adjacency-list oracle (see, for instance, \cite[Lemma 1]{AT07j}).
\end{proof}

The main theorem we prove is that classical algorithms performing only sub-exponentially many queries cannot succeed in solving the GGSP problem on $\hp$ with high probability.

\begin{theorem}[\sc Classical hardness]
  \label{thm:lowerbound}
  Let $a,b \geq 1$ be any constant and $\pt{\psinp}_{\pi}$ a family of guiding states for $\set{\hp}_{\pi}$ (\Cref{Def:guide}) with $\abs{\ip{\psi_{\mathrm{in},\pi}}{\psi_{\pi}}} \geq 1/n^a$ for all $\pi$. Suppose that a classical algorithm is given access to $\hp$ for a random labeling $\pi$, as well as $t = n^b$ independent samples $x_1,\dots,x_t \sim \pinp$ from the distribution $\pinp(x) = \ip{x}{\psi_{\mathrm{in},\pi}}^2$.
  
  Then, for $n$ large enough, any algorithm must make at least $2^{\sqrt{n}}$ queries to the adjacency-list oracle of~$\hp$ in order to output a sample $y \sim \poutp$ from a distribution $\poutp(y) = \ip{y}{\psi_{\mathrm{out},\pi}}^2$ arising from a refined state $\psoutp$ such that $\abs{\ip{\psi_{\mathrm{out},\pi}}{\psi_{\pi}}} \geq 3/4$, and such that the joint distribution $J_{\pi}$ of $(x_1,\dots,x_t,y)$ is within total variation distance $1/5$ of $\pinp^{\otimes t} \times \poutp$.
\end{theorem}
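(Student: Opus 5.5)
The plan follows the outline in the proof overview and splits the theorem into a property that any valid input–output pair must satisfy, and a property that no query-bounded classical algorithm can achieve.

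\textbf{Step 1: localization of the ideal distribution.} By \Cref{Lem:spec}, the ground-state distribution $p_\pi(x)=\ip{x}{\psi_\pi}^2$ puts mass at least $1-\bo{1/n}$ on expander vertices. Conditioned on the expander, and more generally through the map $v\mapsto\ep{v}$, it induces a uniform marginal on $V_E$, because all the tree components $\ket{\psi_{v,T}}$ are identical. Since $N_E=2^{21n^2\log^2 n}$ and $E_n$ is $n$-regular, the ball of $\dist_E$-radius $h=g_n/8$ (say) around any fixed expander vertex has at most $n^{h}=2^{\bo{n^2\log^2 n}}$ vertices, with constants that can be tuned to be a vanishing fraction of $N_E$. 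Hence if $y\sim p_\pi$ is drawn independently of any fixed set $\{x_1,\dots,x_t\}$, then $\dist_E(\{x_i\},y)\ge h$ except with probability $t\cdot n^{h}/N_E$, which is exponentially small. This is the localization statement (\Cref{Lem:hfaridealpair}).

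\textbf{Step 2: transfer to $\pinp$ and $\poutp$.} We use $F(\poutp,p_\pi)\ge 9/16$ and the fact that the event $\{\ep{y}\notin\text{ball}\}$ is a threshold event whose complement has $p_\pi$-mass $\varepsilon$ exponentially small. Fidelity then gives $\poutp(\text{ball})\le(\sqrt{\varepsilon}+\sqrt{1-9/16})^2$, so $\poutp(\text{ball})\le 7/16+o(1)$. This bound holds for each fixed ball, and hence uniformly over the fixed set $\{x_i\}$ after a union over the $t$ balls. For this to work, the estimate has to be done jointly: we use $\Pr[y\in\bigcup_i B(x_i)]$ with the union of $t$ balls, whose total $p_\pi$-mass is still tiny, so a single fidelity application suffices. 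Combining with the $1/5$ TV closeness of $J_\pi$ to $\pinp^{\otimes t}\times\poutp$, any successful algorithm must output $y$ with $\dist_E(\{x_1,\dots,x_t\},y)\ge h$ with probability at least $1-7/16-1/5-o(1)>1/3$ (\Cref{Prop:hfarideal}). No assumption on $\pinp$ beyond its existence is needed here, which is why the argument works regardless of the guiding state.

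\textbf{Step 3: classical hardness of reaching far expander vertices.} Because of the random labeling $\pi$ and the overwhelming number of isolated vertices, the algorithm can be simulated by one that only queries labels it has already seen, growing connected components from the $x_i$ (\Cref{lem:conncomp}). Guessing a fresh label hits a non-isolated vertex with probability only $2^{-\Omega(n^3\log^2 n)}$ per query. An output label not in the explored components is then almost independent of the graph and lands in $V_E$ with negligible probability. The heart of the proof is \Cref{Lem:hardmaing}: with $q\le 2^{\sqrt n}$ queries, no explored component rooted anywhere reaches an expander vertex at $\dist_E$ distance $\ge h$ from its root, except with exponentially small probability. By \Cref{Lem:locss}, within radius $g_n/4$ the graph is indistinguishable from $G_n$ with edges replaced by $\treh_{n,K}$ trees. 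There, reaching distance $h$ along expander vertices corresponds to reaching $0$-level leaves/deep nodes of a self-similar tree, and \Cref{Lem:hardss} shows that doing so from arbitrary starting vertices takes $2^{\Omega(\sqrt n)}$-ish queries (the self-similar decoration makes every direction look alike, mirroring \cite{GHV21c}).

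I expect this last step, \Cref{Lem:hardss} and its lift to multiple components, to be the main obstacle. The step needs a careful coupling showing that the $K=\sqrt n$ levels of decorated trees look locally identical, so a random-walk-like exploration drifts toward leaves. Once it is established, Step 2 forces success probability $>1/3$ on an event that Step 3 bounds by $o(1)$, which is a contradiction for $q<2^{\sqrt n}$.
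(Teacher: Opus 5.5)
Your overall architecture matches the paper's: localize $p_\pi$ through the uniform expander marginal, transfer to $J_\pi$ via fidelity and the $1/5$ total-variation slack, then use \Cref{lem:conncomp}, tree-exploration hardness and a union bound. Step 2 is correct and slightly sharper than the paper's. Applying fidelity to the coarse-grained event $y\in\bigcup_i B(x_i)$ is valid (monotonicity of the Bhattacharyya angle) and gives about $0.36$, where the paper's $\mathrm{TV}\le\sqrt{1-F}$ only gives $1/10$. Your explicit treatment of output labels outside the explored components is also a point the paper leaves implicit.

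The gap is in Step 3's handling of several inputs, which you flag as the main obstacle but do not resolve. The intermediate claim you attribute to \Cref{Lem:hardmaing} is that no component reaches an expander vertex at $\dist_E\ge h$ from its own root. This is false in general. The guiding state is arbitrary beyond overlap $n^{-a}$, so it may put almost all its weight on a long geodesic of $E_n$. Then (for $b\ge 3$, say) consecutive samples are adjacent, and a few queries link them into one component spanning distance $\gg h$ from its root at no exploration cost.

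More fundamentally, when two inputs are within $g_n/2$ of each other, their radius-$g_n/4$ neighbourhoods overlap. Within $g_n/4$, one input's label even lies inside a branch that \Cref{Lem:locss} models as a copy of $\treh_{n,K}$ hanging off the other. So the single-root picture does not reduce the algorithm to root-started explorations of $\treh_{n,K}$ without side information, which is the only setting \Cref{Lem:hardss} covers.

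The paper resolves this with the closure $\br{U}$ of \Cref{Eq:closure}: add the unique (by girth) short path between any two inputs at $\dist_E<g_n/2$.
\begin{itemize}
\item \Cref{Lem:hardmaing} then embeds the $(g_n/4-1)$-neighbourhood of $\br{U}$ into $\br{U}$, with disjoint copies of $\treh_{n,K}$ attached along its outgoing expander edges.
\item Since $\dist_E(U,v)\ge g_n/2$ forces $\dist_E(\br{U},v)\ge g_n/4-1$, a far output requires reaching a $0$-level leaf of one of at most $|\br{U}|\,d_n\le t^2g_nd_n$ copies.
\item Each such copy is controlled by \Cref{Lem:hardss}, and a union bound finishes.
\end{itemize}

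This also rules out your radius $h=g_n/8$. That radius corresponds only to an internal base vertex at depth about $g_n/8$ of $\treh_{n,K}$, which \Cref{Lem:hardss} does not control. Moreover, passing from $U$ to $\br{U}$ loses about $g_n/4$ in distance, so you need $h\approx g_n/2$. Step 1 allows this, since $t\,n^{g_n/2}/N_E\le 2^{-n^2\log^2 n+O(\log n)}$.
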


The theorem is proved by analyzing a certain localization property of the ground state~$\psp$ of $\hp$, which must also be satisfied (at least to some extent) by the output state $\psoutp$. This property corresponds to the event that the output vertex lies at distance at least $g_n/2$ from the input vertices, measured according to $\dist_E$ (\Cref{Def:loc}). First, we show in \Cref{sec:hfarideal} that any input-output distribution that correctly solves the GGSP problem satisfies this property with probability at least $1/10$.

\begin{rproposition}[\Cref{Prop:hfarideal} ({\sc Input-Output state localization})]
  Suppose that a classical algorithm solves the GGSP problem on $\set{\hp}_{\pi}$ under the conditions of \Cref{thm:lowerbound}. Let~$J_{\pi}$ denote the joint distribution of the input-output samples $(x_1,x_2,\dots,x_t,y)$ of the algorithm under the vertex labeling $\pi$.
  Then, the probability that the output vertex~$y$ lies at distance at least $g_n/2$ from the input vertices satisfies,
    \begin{equation}
      \label{Eq:correct}
      \pr_{(x_1,\dots,x_t,y) \sim J_{\pi}}*{\dist_E(\pi^{-1}\pt{\set{x_1,\dots,x_t}},\pi^{-1}(y)) \geq g_n/2} \geq 1/10.
    \end{equation}
\end{rproposition}

Conversely, we prove in \Cref{sec:classhard} that for any classical algorithm making at most $2^{\sqrt{n}}$ queries, the probability of the above event is exponentially small. We stress that this result holds for a random $\pi$, in contrast to the proposition above which is valid for all $\pi$.

\begin{rproposition}[\Cref{Prop:classhard} ({\sc Hardness of exploring the graph $G_n$})]
  Suppose that a classical algorithm is given access to $\hp$ for a random labeling $\pi$, as well as input samples $x_1,\dots,x_t$ from a distribution as specified in \Cref{thm:lowerbound}. Suppose that it outputs an additional sample~$y$, and let~$J_{\pi}$ denote the joint distribution of the input-output samples $(x_1,x_2,\dots,x_t,y)$. If the algorithm makes fewer than $2^{\sqrt{n}}$ queries, then the probability the output vertex~$y$ lies at distance at least $g_n/2$ from the input vertices satisfies,
    \begin{equation}
      \label{Eq:wrong}
      \pr_{\pi, (x_1,\dots,x_t,y) \sim J_{\pi}}*{\dist_E(\pi^{-1}\pt{\set{x_1,\dots,x_t}},\pi^{-1}(y)) \geq g_n/2} \leq 2^{-n\log{n}}.
    \end{equation}
\end{rproposition}

We can now conclude the proof of the main theorem.

\begin{proof}[Proof of \Cref{thm:lowerbound}]
  Fix $\psin$ to be any guiding state for the Hamiltonian $-A_n$. Consider a classical algorithm that solves the GGSP problem on every Hamiltonian $\hp$, using the guiding state $\psinp = P_{\pi} \psin$ as specified in \Cref{Def:guide} and \Cref{thm:lowerbound}. Suppose, for the sake of contradiction, that it makes fewer than $2^{\sqrt{n}}$ queries. Then, by \Cref{Prop:classhard}, there exists at least one labeling $\pi$ for which its input-output distribution $J_{\pi}$ violates \Cref{Eq:correct}. It implies by \Cref{Prop:hfarideal} that the algorithm cannot be correct on~$\hp$.
\end{proof}

  \subsection{Proof of ground-state localization properties}
  \label{sec:hfarideal}
  In this section, we lower bound the probability that a successful algorithm solving the GGSP problem samples a vertex at a large distance from its input. First, we prove the statement under the exact ground-state distribution.

\begin{lemma}[\sc Ground-state localization]
  \label{Lem:hfaridealpair}
  Let $U \subseteq V$ be any subset of vertices of $G_n$. For any integer $g$, the probability that a vertex sampled from the ground-state distribution $p_n(v) = \ip{v}{\psi_n}^2$ of~$-A_n$ lies at distance at least~$g$ from $U$ is at least,
    \[\pr_{v \sim p_n}*{\dist_E(U,v) \geq g} \geq 1 - \frac{\abs{U} \cdot n^g}{N_E}\]
  where $N_E$ is the size of the expander $E_n$ in \Cref{Prop:expander}. 
\end{lemma}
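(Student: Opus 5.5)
The plan is to bound the complementary event $\dist_E(U,v) < g$ by a union bound over $u \in U$, and to use the structure of $\ket{\psi_n}$ from \Cref{Lem:spec}. The key observation is that the ground-state mass is equidistributed over the expander vertices, in the following sense. Group every vertex $w \in V_E \cup V_T$ according to its closest expander vertex $\ep{w}$. By \Cref{Lem:spec}, $\ket{\psi_n} = \ket{\psi_E} + \sum_{v\in V_E}\ket{\psi_{v,T}}$, where $\ket{\psi_E}$ is uniform over $V_E$ and all $\ket{\psi_{v,T}}$ are identical up to permutation. Hence the mass $p_n(\{w : \ep{w} = x\}) = \abs{\ip{x}{\psi_n}}^2 + \norm{\ket{\psi_{x,T}}}^2$ is the same for every $x \in V_E$. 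It must therefore equal exactly $1/N_E$, since isolated vertices carry no mass: the top eigenvector lives on the connected component of $E_n$, because the isolated vertices have eigenvalue $0$.

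Next, for a fixed $u \in U$, we need to count the expander vertices $x$ such that some $w$ with $\ep{w}=x$ has $\dist_E(u,w) < g$. If $u$ is isolated, no vertex $v$ in the support of $p_n$ is reachable from $u$, so $\dist_E(u,v)=\infty$ and $u$ contributes nothing. Otherwise, the shortest path from $u$ to $w$ passes through $\ep{u}$ and $\ep{w}$, because trees attach to the expander only at their root, and the expander vertices it visits are exactly those on a shortest expander path from $\ep{u}$ to $\ep{w}$. Hence $\dist_E(u,w) = \dist_{E_n}(\ep{u},\ep{w}) + 1$; in the case $\ep{u}=\ep{w}$ with both in trees, the count could be $0$ or $1$, but either way it is at least $\dist_{E_n}$. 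The condition $\dist_E(u,w)<g$ therefore forces $\ep{w}$ to lie within graph distance at most $g-1$ of $\ep{u}$ in $E_n$. Since $E_n$ is $n$-regular, the number of such $x$ is at most $\sum_{j=0}^{g-1} n(n-1)^{j-1} \le n^g$ (for $n \ge 2$), and a cruder bound also suffices. Each such $x$ carries mass $1/N_E$, so $\pr_v[\dist_E(u,v)<g] \le n^g/N_E$.

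A union bound over $u\in U$ then gives $\pr[\dist_E(U,v)<g] \le \abs{U}n^g/N_E$, which yields the claim. The only delicate step is the equal-mass claim: it needs the "identical up to basis permutation" clause of \Cref{Lem:spec}, applied with the convention that $\ket{\psi_n}$ is normalized. The boundary conventions of $\dist_E$ (counting endpoints) only shift $g$ by one, and this is absorbed by the slack in the ball-size bound $n^g$.
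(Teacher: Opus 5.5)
Your proposal is correct and follows essentially the paper's argument: the pushforward of $p_n$ under $v \mapsto \ep{v}$ is uniform on $V_E$ by \Cref{Lem:spec}, the relevant expander vertices lie in a ball of at most $n^g$ vertices around $\ep{u}$, and a union bound over $U$ concludes. Your explicit inequality $\dist_E(u,w) \geq \dist_{E_n}(\ep{u},\ep{w})$ and your treatment of isolated $u$ are slightly more careful than the paper's \Cref{Eq:marg}. The $j=0$ term $n/(n-1)$ in your ball-size sum should be $1$, but it still gives a valid upper bound.
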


\begin{proof}
  First, we prove the lemma when $U = \set{u}$ is of size $1$.
  Let $p_E$ be the conditional distribution of~$p_n$ on the set of expander vertices $V_E$. By \Cref{Lem:spec}, when $v$ is sampled according to~$p_n$, the corresponding expander vertex $\ep{v}$ follows the distribution $p_E$. Hence,
    \begin{equation}
      \label{Eq:marg}
      \pr_{v \sim p_n}*{\dist_E(u,v) < g} = \pr_{v \sim p_E}*{\dist_E(u,v) < g}.
    \end{equation}
  By \Cref{Lem:spec}, $p_E$ is the uniform distribution over $V_E$. Moreover, by construction, the expander graph over $V_E$ has degree $d_n = n$. Hence, the number of expander vertices $v$ that are at distance at most $g$ from $\ep{u}$ is at most $1 + n + n^2 + \dots + n^{g-1} \leq n^g$. Consequently, the probability of sampling $v$ at distance $< g$ from $\ep{u}$, under the distribution $p_E$, can be upper bounded by the ratio between the number of expander vertices at distance at most $g$ from $\ep{u}$ and the total number $N_E$ of expander vertices:
    \begin{equation}
      \label{Eq:ball}
      \pr_{v \sim p_E}*{\dist_E(u,v) < g} \leq \frac{n^g}{N_E}.
    \end{equation}
  \Cref{Eq:marg,Eq:ball} imply that $\pr_{v \sim p_n}*{\dist_E(u,v) \geq g} \geq 1 - \frac{n^g}{N_E}$.
  
  When $U$ is of arbitrary size, the lemma follows by applying the above result together with a union bound over the $\abs{U}$ vertices $u \in U$.
\end{proof}

The previous lemma analyzes the ideal case in which the output vertex is sampled according to the \emph{ideal} ground-state distribution $p_n$. However, the GGSP problem allows outputting vertices from a perturbed distributions $\pout$, which may adversarially affect the occurrence of the above event. The next result shows that the event still occurs in this setting, at least with constant probability, when $\abs{U} = t$ is of polynomial size and $g = g_n/2$ is half the girth of the expander $E_n$.

\begin{corollary}
  \label{Cor:hfarideal}
  Consider a joint distribution $J$ over $t+1$ vertices $(u_1,u_2,\dots,u_t,v)$ of $G_n$. Suppose that $J$ is within total variation distance $1/5$ of a product distribution $\pin^{\otimes t} \times \pout$, where $F(\pout,p_n) \ge (3/4)^2$. Then, for $n$ large enough, the probability that the vertex $v$ lies at distance at least $g_n/2$ from the other vertices satisfies,
    \[\pr_{(u_1,\dots,u_t,v) \sim J}*{\dist_E(\set{u_1,\dots,u_t},v) \geq g_n/2} \geq 1/10.\]
\end{corollary}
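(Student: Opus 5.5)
The plan is to prove the bound first for the ideal product distribution $\pin^{\otimes t} \times \pout$, and then transfer it to $J$ through the total variation bound. Write $A$ for the event $\dist_E(\set{u_1,\dots,u_t},v) \geq g_n/2$. Because $A$ is a fixed event on $(t+1)$-tuples of vertices, we have
\[
\abs{\pr_J[A] - \pr_{\pin^{\otimes t}\times\pout}[A]} \leq 1/5 .
\]
It therefore suffices to show that $\pr_{\pin^{\otimes t}\times\pout}[A] \geq 3/10$.

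Under the product distribution, $v$ is independent of $(u_1,\dots,u_t)$. So I fix an arbitrary tuple $U=\set{u_1,\dots,u_t}$ and bound $\pout(B^c)$ from below, where $B = \set{v : \dist_E(U,v) < g_n/2}$ is the set of bad outputs. Lemma~\ref{Lem:hfaridealpair} with $g = g_n/2 = 20n^2\log n + 4$ gives
\[
p_n(B) \leq \frac{t\, n^{g_n/2}}{N_E} = n^b \cdot 2^{20n^2\log^2 n + 4\log n - 21 n^2\log^2 n} \leq 2^{-n^2\log^2 n + O(\log n)} =: \varepsilon .
\]
This is superexponentially small.

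Next I convert the fidelity hypothesis into a lower bound on $\pout(B^c)$. Using $F(p,q) = (\sum_x \sqrt{p(x)q(x)})^2$, I split the sum over $B$ and $B^c$ and apply Cauchy--Schwarz on each part:
\[
\tfrac34 \leq \sqrt{F(\pout,p_n)} \leq \sqrt{\pout(B)\,p_n(B)} + \sqrt{\pout(B^c)\,p_n(B^c)} \leq \sqrt{\varepsilon} + \sqrt{\pout(B^c)} .
\]
Hence $\pout(B^c) \geq (3/4 - \sqrt{\varepsilon})^2 \geq 9/16 - o(1)$, uniformly in $U$. Averaging over $U \sim \pin^{\otimes t}$ gives $\pr_{\pin^{\otimes t}\times\pout}[A] \geq 9/16 - o(1)$.

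Combining with the total variation step, $\pr_J[A] \geq 9/16 - 1/5 - o(1) \approx 0.36 - o(1)$, which is at least $1/10$ for $n$ large enough. The only delicate point is the fidelity-to-probability step above, namely choosing the split of the Bhattacharyya sum so that the small mass $p_n(B)$ kills the bad contribution. Everything else is arithmetic on the parameters of Proposition~\ref{Prop:expander}, with $t = n^b$ absorbed into the $O(\log n)$ term.
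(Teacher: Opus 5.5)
Your proof is correct, but it handles the step from the fidelity hypothesis to a probability bound differently from the paper. The paper first replaces $\pout$ by the exact ground-state distribution $p_n$. It uses the relation $\mathrm{TV}(\pout,p_n)\le\sqrt{1-F(\pout,p_n)}$ and the triangle inequality to get $\mathrm{TV}(\pin^{\otimes t}\times p_n,J)\le\sqrt{7/16}+1/5\le 0.87$. It then applies Lemma~\ref{Lem:hfaridealpair} verbatim to the ideal distribution, which gives $1-t\,n^{g_n/2}/N_E-0.87\ge 1/10$.

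You instead move only once, from $J$ to $\pin^{\otimes t}\times\pout$. You then bound $\pout(B^c)$ directly by splitting the Bhattacharyya sum over $B$ and $B^c$ and applying Cauchy--Schwarz on each part. Your route is self-contained, since it needs no TV--fidelity inequality, and it is quantitatively sharper. It yields $\pout(B^c)\ge(\sqrt{F}-\sqrt{\varepsilon})^2=F-o(1)$, whereas the paper's route effectively gives $1-\sqrt{1-F}-o(1)$, which is smaller for every $F<1$. With $F=9/16$ you end at about $0.36$, against about $0.14$ for the paper.

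The paper's argument is shorter and more modular. It treats the imperfect output simply as one more total-variation error, and it uses the localization lemma in its original form as a statement about sampling from $p_n$. Both arguments comfortably clear the $1/10$ threshold.
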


\begin{proof}
  The result is shown by upper-bounding the total variation distance between the distributions $\pin^{\otimes t}\times p_n$ and $J$, and then applying the result from \Cref{Lem:hfaridealpair}.

  The total variation distance is bounded as follows,
    \begin{align*}
      \mathrm{TV}(\pin^{\otimes t}\times p_n, J)
      & \leq \mathrm{TV}(\pin^{\otimes t}\times p_n,\pin^{\otimes t}\times \pout) + \mathrm{TV}(\pin^{\otimes t}\times \pout, J) \tag{by the triangle inequality} \\ 
      & \leq \sqrt{1 - F(\pout,p_n)} + \mathrm{TV}(\pin^{\otimes t}\times \pout, J) \tag{by relating the TV distance and the fidelity} \\
      & \leq \sqrt{1-\frac{9}{16}} + \frac{1}{5} \leq 0.87. \tag{by the assumptions of the proposition}
    \end{align*}

  Hence, the probability of sampling a vertex at distance $g_n/2$ under~$J$ is at least,
    \begin{align*}
      & \pr_{(u_1,\dots,u_t,v) \sim J}*{\dist_E(\set{u_1,\dots,u_t},v) \geq g_n/2} \\
       & \qquad\qquad  \geq \pr_{(u_1,\dots,u_t,v) \sim \pin^{\otimes t}\times p_n}*{\dist_E(\set{u_1,\dots,u_t},v) \geq g_n/2} - \mathrm{TV}(\pin^{\otimes t}\times p_n,J) \tag{by definition of the TV distance} \\
       & \qquad\qquad  \geq \min_{U \subseteq V : \abs{U} = t} \pr_{v \sim p_n}*{\dist_E(U,v) \geq g_n/2} - \mathrm{TV}(\pin^{\otimes t}\times p_n,J) \tag{by the independence of the distributions} \\
       & \qquad\qquad  \geq 1 - \frac{t \cdot n^{g_n/2}}{N_E} - 0.87 \tag{by \Cref{Lem:hfaridealpair}}\\
       & \qquad\qquad \geq 1/10. \tag{for $n$ large enough}
    \end{align*}
\end{proof}

As a direct corollary, the same result holds for the input-output samples of any algorithm solving the GGSP problem, since the distributions $J$ and $J_{\pi}$ are the same object up to a relabeling of the vertices.

\begin{proposition}[\sc Input-Output state localization]
  \label{Prop:hfarideal}
  Suppose that a classical algorithm solves the GGSP problem on $\set{\hp}_{\pi}$ under the conditions of \Cref{thm:lowerbound}. Let~$J_{\pi}$ denote the joint distribution of the input-output samples $(x_1,x_2,\dots,x_t,y)$ of the algorithm under the vertex labeling $\pi$.
  Then the probability that the output vertex $y$ lies at distance at least $g_n/2$ from the input vertices satisfies,
    \[\pr_{(x_1,\dots,x_t,y) \sim J_{\pi}}*{\dist_E(\pi^{-1}\pt{\set{x_1,\dots,x_t}},\pi^{-1}(y)) \geq g_n/2} \geq 1/10.\]
\end{proposition}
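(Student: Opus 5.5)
The plan is to reduce this statement to \Cref{Cor:hfarideal} by undoing the labeling $\pi$. Fix a labeling $\pi$ and let $J_\pi$ be the input-output distribution of the algorithm on $\hp$. Define the pulled-back distribution $J$ on tuples of vertices of $G_n$ as the law of $(\pi^{-1}(x_1),\dots,\pi^{-1}(x_t),\pi^{-1}(y))$ with $(x_1,\dots,x_t,y)\sim J_\pi$. Then the event in the statement is exactly the event $\dist_E(\set{u_1,\dots,u_t},v)\geq g_n/2$ under $J$, so it suffices to check that $J$ satisfies the hypotheses of \Cref{Cor:hfarideal}.

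First, I will identify the relevant marginals. Since $\psinp = P_\pi\psin$, the input distribution is $\pinp(x)=\ip{x}{\psi_{\mathrm{in},\pi}}^2 = \ip{\pi^{-1}(x)}{\psi_{\mathrm{in}}}^2$. The pushforward of $\pinp$ under $\pi^{-1}$ is therefore $\pin$. Likewise, set $\pout := \poutp\circ\pi$, which is the pushforward of $\poutp$. Total variation distance is invariant under applying the same bijection coordinatewise. Hence the GGSP guarantee $\mathrm{TV}(J_\pi,\pinp^{\otimes t}\times\poutp)\le 1/5$ transfers to $\mathrm{TV}(J,\pin^{\otimes t}\times\pout)\le 1/5$.

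Next, I will check the fidelity condition. Since $\psp=P_\pi\ket{\psi_n}$, we have $\ip{y}{\psi_\pi}^2=p_n(\pi^{-1}(y))$. The fidelity $F(\poutp,\ph)$, with $\ph$ the ground-state distribution of $\hp$, is a sum over $y$ of $\sqrt{\poutp(y)\,p_n(\pi^{-1}(y))}$, squared. Reindexing by $v=\pi^{-1}(y)$ gives $F(\pout,p_n)$. It remains to lower bound this quantity.

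I expect the main obstacle to be this last step, relating the quantum overlap to the classical fidelity. The assumption $\abs{\ip{\psi_{\mathrm{out},\pi}}{\psi_\pi}}\ge 3/4$ is about amplitudes, while $F$ depends only on their absolute values. The argument is that $\psp$ has nonnegative entries, by Perron--Frobenius. Therefore
\[
\abs{\ip{\psi_{\mathrm{out},\pi}}{\psi_\pi}}\le\sum_y \abs{\ip{y}{\psi_{\mathrm{out},\pi}}}\,\ip{y}{\psi_\pi}=\sqrt{F(\poutp,\ph)}.
\]
This gives $F\ge(3/4)^2$. The paper already remarks that the classical conditions may be phrased this way, so this is a short justification rather than a real difficulty.

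With all hypotheses verified for every $\pi$, \Cref{Cor:hfarideal} yields probability at least $1/10$ for $n$ large enough, which is the claimed bound.
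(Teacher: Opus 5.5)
Your proposal is correct and is essentially the paper's argument. The paper proves this in one line: it observes that $J_\pi$ is just a relabeling of a distribution $J$ on $V$ and invokes \Cref{Cor:hfarideal}. You make explicit the two hypotheses that line leaves implicit, namely the invariance of total variation distance under the coordinatewise bijection $\pi^{-1}$, and $F(\pout,p_n)\ge(3/4)^2$ from $\abs{\ip{\psi_{\mathrm{out},\pi}}{\psi_\pi}}\ge 3/4$. (For the latter you do not actually need Perron--Frobenius: the triangle inequality alone gives $\abs{\ip{\psi_{\mathrm{out},\pi}}{\psi_\pi}}\le\sum_y \abs{\ip{y}{\psi_{\mathrm{out},\pi}}}\,\abs{\ip{y}{\psi_\pi}}$.)
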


  \subsection{Proof of hardness for classical algorithms}
  \label{sec:classhard}
  This section proves a converse result to the previous one. Namely, it demonstrates that \emph{no} classical algorithm can output a vertex far from the input vertices unless it makes at least sub-exponentially many queries to $\hp$.

First, we show that any classical algorithm operating on the graph $G_n$ is forced to explore the graph by querying connected components around its input vertices. This is similar to the argument in~\cite{CCD03c}.

\begin{lemma}[\sc Local exploration]
  \label{lem:conncomp}
  Let $U \subseteq V$ be any subset of vertices of $G_n$ with $\abs{U} \leq \abs{V}/2$. 
  Suppose that a classical algorithm is given access to the adjacency-list oracle $\ora_{\pi}$, for a random~$\pi$, together with $\pi(U)$, and that it makes $q$ queries. Then, except with probability at most $q \cdot 2^{-2n^{3}\log^2{n}}$, the set of all vertices queried by the algorithm at any point during its execution induces a collection of connected components rooted at the vertices in~$U$, together with some isolated vertices in $V_I$.
\end{lemma}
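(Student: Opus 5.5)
We follow the standard argument of~\cite{CCD03c}: a label that the algorithm has not yet seen returned by the oracle is nearly uniformly random, so querying it almost never hits a non-isolated vertex. Throughout, we maintain the set $S$ of labels the algorithm \emph{knows}: the labels in $\pi(U)$, the labels it has queried, and every label appearing in an oracle answer. We also maintain the corresponding vertex set $\pi^{-1}(S)$. Initially $\abs{\pi^{-1}(S)} = \abs{U} \le \abs{V}/2$.

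The invariant is this: as long as no \emph{bad event} has occurred, every queried vertex either lies in a connected component (in the induced subgraph on queried-plus-revealed vertices) containing a vertex of $U$, or is an isolated vertex of $V_I$. The bad event at step $j$ is the following. The algorithm queries a label $x \notin S$ (a fresh label), and $\pi^{-1}(x) \notin V_I$.

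If the query is to a known label, it is either in $\pi(U)$ or was returned as a neighbor of an already-explored vertex. In that case it is connected to a component rooted in $U$, and the invariant is preserved. If the query is to a fresh label and the vertex is isolated, the answer is the empty list. The vertex is then an isolated vertex of $V_I$, and the invariant holds.

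The key step is bounding the bad event. Condition on the whole history, that is, on the restriction of $\pi$ to the known vertex set $\pi^{-1}(S)$ and the answers given so far. Since $\pi$ is uniformly random, the remaining map $\pi$ restricted to $V \setminus \pi^{-1}(S)$ is a uniformly random bijection onto $[2^m] \setminus S$. So a fresh label $x$ corresponds to a uniformly random unknown vertex. The probability that it is non-isolated is at most
\[
\frac{\abs{V_E \cup V_T}}{\abs{V} - \abs{\pi^{-1}(S)}}.
\]

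By \Cref{Def:maingraph} and the size bound on $\treh_{n,k}$, we have $\abs{V_E \cup V_T} \le N_E \cdot n \cdot 2^{12n^3\log^2 n} \le 2^{13n^3\log^2 n}$ for large $n$. For the denominator, $\abs{\pi^{-1}(S)} \le \abs{U} + q(d_{n,1}+1)$. We may assume $q \le 2^{2n^3\log^2 n}$, since otherwise the claimed bound exceeds $1$ and there is nothing to prove. Then $\abs{V} - \abs{\pi^{-1}(S)} \ge \abs{V}/4$. So each step is bad with probability at most $4 \cdot 2^{13n^3\log^2 n - 15n^3\log^2 n} \le 2^{-2n^3\log^2 n}$, after absorbing the constant (or using slack in the exponents). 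A union bound over the $q$ queries gives the stated failure probability $q \cdot 2^{-2n^3\log^2 n}$.

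The main obstacle is to carry out the conditioning argument cleanly for adaptive algorithms. The point to check is that the oracle answers reveal exactly the values of $\pi$ on the known vertices and nothing else about $\pi$ on the remaining vertices. Once that holds, fresh labels are uniformly distributed among the unknown vertices. The rest is accounting.
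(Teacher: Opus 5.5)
Your argument is correct and is essentially the paper's proof: a label not yet revealed by the oracle is (conditionally) uniform over the unexplored vertices, which are overwhelmingly isolated, and a union bound over the $q$ queries finishes. You are more careful than the paper, which uses $|V|-|U|$ as the denominator and leaves the adaptive conditioning implicit; note only that the factor $4$ in your per-step bound must be absorbed by keeping the sharper estimate $|V_E\cup V_T|\le 2^{12n^3\log^2 n+21n^2\log^2 n+\log n}$ rather than the rounded $2^{13n^3\log^2 n}$.
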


\begin{proof}
  This is because the set of isolated vertices $V_I$ in \Cref{Def:maingraph} constitutes an overwhelming majority of the vertices of the graph $G_n$. Hence, if an algorithm attempts to query a vertex that is not connected to any of its previously queried vertices, then, due to the randomness of the labeling~$\pi$, it has overwhelming probability of querying an isolated vertex.

  In more detail, by \Cref{Def:maingraph}, the ratio between the number of non-isolated vertices and the total number of (non-input) vertices in $G_n$ is at most $(\abs{V_T} + \abs{V_E})/(\abs{V}-\abs{U}) \leq 2 (\abs{V_T} + \abs{V_E})/\abs{V} = 2 N_E (1 + \sum_{k=1}^{K-1}|\treh_{n,k}|) / 2^{15n^{3}\log^2{n}} \leq 2^{-2n^{3}\log^2{n}}$. Hence, by a union bound over the $q$ queries made by an algorithm, the probability that it queries a non-isolated vertex that is not connected to any of its previously explored vertices is at most~$q \cdot 2^{-2n^{3}\log^2{n}}$.
\end{proof}

We now analyze how local graph exploration around an input set of vertices $U$ in $G_n$ behaves. The case $\abs{U} = 1$ was addressed in \Cref{fig:overview} and \Cref{Lem:locss}, where we showed that, within a ball of radius $g_n/4$, the graph $G_n$ is identical to the self-similar tree~$\treh_{n,K}$. The next lemma treats the case~$\abs{U} > 1$. The analysis is more involved, as different connected components may intersect and cycles may appear in the exploration graph. We show that such events must be confined to a small superset~$\br{U} \supseteq U$ of the input vertices. The remainder of the local exploration departing from these vertices can then be described by the tree $\treh_{n,K}$. This is illustrated in \Cref{fig:zoom}.

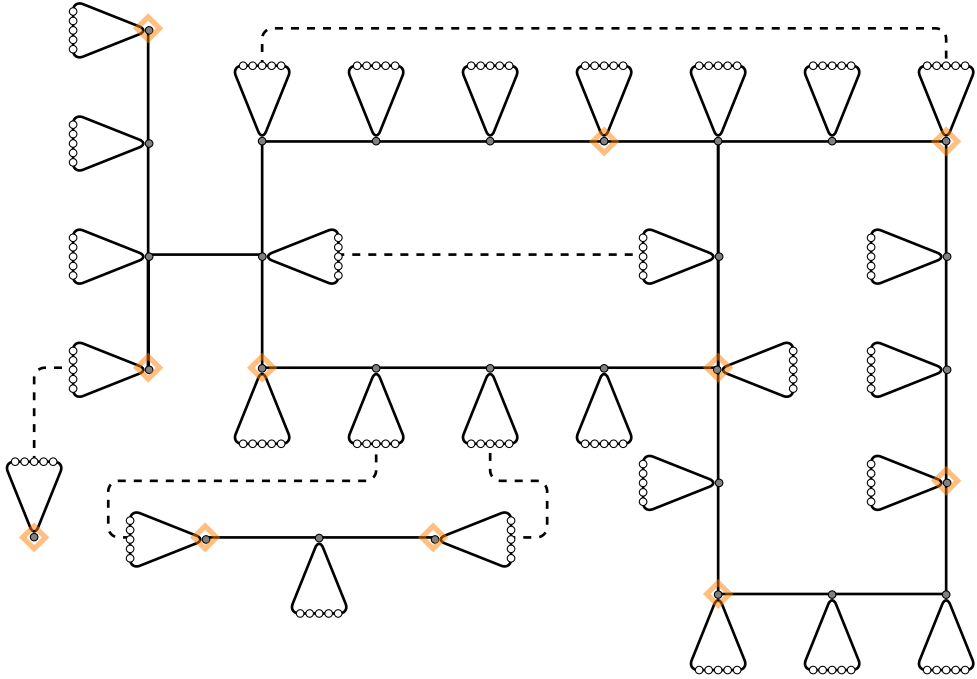
\begin{figure}[ht!]
  \begin{center}
  \begin{tikzpicture}[scale=0.75]
  \draw[line width=1,dashed,rounded corners] (1,2.3) -- (1,4) -- (1.7,4); 
  \draw[line width=1,dashed,rounded corners] (2.7,1) -- (2.3,1) -- (2.3,2) -- (7,2) -- (7,2.7);
  \draw[line width=1,dashed,rounded corners] (9.3,1) -- (10,1) -- (10,2) -- (9,2) -- (9,2.7);
  \draw[line width=1,dashed,rounded corners] (6.3,6) -- (11.7,6);
  \draw[line width=1,dashed,rounded corners] (5,9.3) -- (5,10) -- (17,10) -- (17,9.3);
  \draw[line width=1] (5,6) -- (3,6) -- (3,4) -- (3,10); 
  \draw[line width=1] (5,4) rectangle (13,8);
  \draw[line width=1] (13,8) rectangle (17,0);
  \draw[line width=1] (4,1) -- (8,1);
  \foreach \x/\y/\angle/\anchor in {5/7.75/0/south,7/7.75/0/south,9/7.75/0/south,11/7.75/0/south,13/7.75/0/south,15/7.75/0/south,17/7.75/0/south,%
    5.67/6.7/270/west,%
    12.35/6.7/90/east,%
    5/2.4/180/north,7/2.4/180/north,9/2.4/180/north,11/2.4/180/north,%
    13/-1.6/180/north,15/-1.6/180/north,17/-1.6/180/north,%
    12.35/2.7/90/east,%
    13.65/4.7/270/west,16.35/2.7/90/east,16.35/4.7/90/east
  }{
    \draw node[anchor=\anchor,rotate=\angle] at (\x,\y) {\mytree{white!50!black}};
    }
  \foreach \x/\y/\angle/\anchor in {16.35/6.7/90/east,2.35/6.7/90/east,2.35/4.7/90/east,2.35/8.7/90/east,2.35/10.7/90/east,1/0.75/0/south,3.35/1.7/90/east,8.7/1.7/270/west,6/-0.6/180/north}{
    \draw node[anchor=\anchor,rotate=\angle] at (\x,\y) {\mytree{white!50!black}};
  }
  \foreach \x/\y in {1/1,4/1,8/1,3/4,3/10,5/4,11/8,13/4,13/0,17/2,17/8}{
    \draw[draw=orange,opacity=0.5,line width=2] (\x+0.2,\y) -- (\x,\y+0.2) -- (\x-0.2,\y) -- (\x,\y-0.2) -- cycle;
  }
\end{tikzpicture}
  \caption{Pictorial representation of the local structure of the graph $G_n$ with girth $g_n = 12$ (only expander vertices are shown). The orange diamonds represent the vertices in a set $U$. The solid edges connect pairs of vertices in $U$ at distance $< g_n/2$ from one another, so that the set~$\br{U}$ corresponds to the greyed vertices. The triangles depict the local tree structure within distance~$< g_n/4$ around~$\br{U}$. The white circles represent the vertices identified with the $0$-level leaves of $\treh_{n,K}$. The dashed lines indicate other paths in the expander graph that may create cycles but lie too far from $\br{U}$ to be detected by a local exploration.}
  \label{fig:zoom}
  \end{center}
\end{figure}

\begin{lemma}[\sc Local structure of $G_n$ around multiple roots]
  \label{Lem:hardmaing}
  Let $U \subseteq V$ be any subset of vertices of $G_n$. For any two vertices $u_1,u_2 \in U$, let $\pth(u_1,u_2)$ denote the subset of expander vertices along a shortest path between $u_1$ and $u_2$. Define the subset of expander vertices,
    \begin{equation}
      \label{Eq:closure}
      \br{U} = \set{\ep{u} : u \in U} \cup \set*{\pth(u_1,u_2) : u_1,u_2 \in U, \dist_E(u_1,u_2) < g_n/2} \subseteq V_E.
    \end{equation}
  Consider the subgraph $B(\br{U}) \subseteq G_n$ induced by vertices within distance $g_n/4-1$ of $\br{U}$ (according to the distance measure $\dist_E$). Then $B(\br{U})$ admits an embedding into the graph~$G_n$ modified as follows: for each edge from $u \in \br{U}$ to an expander neighbor outside $\br{U}$, replace it with an edge attached to~$u$ and to the root of a new copy of $\treh_{n,K}$. Moreover, this embedding maps the expander vertices at distance $g_n/4-1$ from $\br{U}$ to the $0$-level leaves of the trees $\treh_{n,K}$.
\end{lemma}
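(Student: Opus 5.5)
The plan is to mirror the proof of \Cref{Lem:locss}, replacing the single root $u$ by the set $\br{U}$. The argument has three steps: first show that the expander part of $B(\br{U})$ is a forest hanging off $\br{U}$, then show that $\br{U}$ is connected in small pieces so that no cycle passes through it within the ball, and finally identify each hanging subtree with a copy of $\treh_{n,K}$.

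\textbf{Step 1: structure of $\br{U}$.} By \Cref{Eq:closure}, $\br{U}$ is a union of expander vertices $\ep{u}$ together with shortest expander paths between pairs of inputs at $\dist_E < g_n/2$. Consider the graph on $U$ whose edges join pairs at distance $< g_n/2$. Each connected component of this graph yields a connected subset of $V_E$, namely the union of the corresponding shortest paths. This step is routine bookkeeping on the definitions.

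\textbf{Step 2: no short cycles through $\br{U}$.} This is the main claim. Suppose there are two distinct expander paths of length at most $g_n/4-1$ leaving $\br{U}$ through edges to outside neighbours and meeting again, or leaving and returning to $\br{U}$. Combining such a pair of paths with paths inside $\br{U}$ would create a cycle. I would argue as follows.
\begin{itemize}
\item The two exit points lie in components of $\br{U}$ that are at $\dist_E \geq g_n/2$ from each other, because otherwise the pair would have been joined by a path placed in $\br{U}$.
\item Two paths of length $< g_n/4$ from those exit points therefore cannot meet, since their lengths would sum to less than $g_n/2$.
\item If the two exit points lie in the same component, closing the loop inside $\br{U}$ gives a cycle. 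Its length must be bounded below $g_n$, which contradicts the girth guaranteed by \Cref{Prop:expander} and \Cref{Fac:grp}.
\end{itemize}

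\textbf{Main obstacle.} The delicate point is the last case of Step 2, bounding the length of the closing path inside one component of $\br{U}$. In general, chains of close pairs can make a component long, of size up to $|U|\cdot g_n/2$, so the closing path inside $\br{U}$ need not be short. The first thing I would try is to use shortest-path minimality plus girth directly on the cycle. The cycle formed by the two excursions and a segment of $\br{U}$ can be shortcut between the exit points using the excursion of total length $< g_n/2$. This must coincide with the unique path in the tree $\br{U}$ if $\br{U}$ is itself acyclic locally, which holds because unions of shortest paths between nearby points in a graph of girth $g_n$ are acyclic. A cycle would then consist of at most a bounded number of segments each of length $< g_n/2$. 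If that does not go through for long chains, I would use that the cycle is closed through a path of length $< g_n/2$ outside $\br{U}$. The portion inside $\br{U}$ connecting the exit points is the unique tree path. Alternatively, the claim can be weakened to what is actually needed, which is that every expander path leaving $\br{U}$ of length $< g_n/4$ is the unique such path to its endpoint. I would prove this by showing that two distinct such paths to the same vertex give two exit points at distance $< g_n/2$, which forces an added path in $\br{U}$ and hence a short cycle.

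\textbf{Step 3: identification with $\treh_{n,K}$.} Once Step 2 is established, $B(\br{U})$ restricted outside $\br{U}$ consists of disjoint trees, one per edge from some $u\in\br{U}$ to an outside expander neighbour $v$. Each such tree, together with the decorations from \Cref{Def:maingraph}, is exactly the expander ball of depth $g_n/4-2=\ell_{n,K}$ rooted at $v$ with degree $d_n=d_{n,K}$. As in the proof of \Cref{Lem:locss} via \Cref{Fac:botup}, this matches $\treh_{n,K}$. Its boundary expander vertices, at distance $g_n/4-1$ from $\br{U}$, map to the $0$-level leaves. The vertices of $\br{U}$ and their own decorations embed identically, which gives the embedding claimed in \Cref{Lem:hardmaing}.
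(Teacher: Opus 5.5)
\textbf{The gap.} Every branch of your Step~2 rests on one inference: if two vertices of $\br{U}$ are at $\dist_E<g_n/2$, then they ``would have been joined by a path placed in $\br{U}$''. This is used in the first bullet, and again in the ``weakened claim'' at the end of your obstacle paragraph. \Cref{Eq:closure} does not give this. It only adds $\pth(u_1,u_2)$ for pairs $u_1,u_2$ of the \emph{original} set $U$. Exit points, by contrast, are arbitrary vertices of $\br{U}$, typically interior points of some added path.

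This is not a technicality: the statement as written fails. Take $u_1,u_2\in V_E$ with $\dist_E(u_1,u_2)=g_n/2-1$, and let $m$ be the middle vertex of $\pth(u_1,u_2)$. Let $u_3\in V_E$ satisfy $\dist_E(m,u_3)=g_n/4+1$, along a branch leaving $\pth(u_1,u_2)$ at $m$. By the girth bound the concatenated paths are the unique shortest ones, so $\dist_E(u_3,u_1)=\dist_E(u_3,u_2)=g_n/2$. Hence for $U=\{u_1,u_2,u_3\}$ one gets $\br{U}=\pth(u_1,u_2)\cup\{u_3\}$.

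However, every vertex of the $m$--$u_3$ path is within about $g_n/8$ of $m$ or of $u_3$, so the whole path lies in $B(\br{U})$. The piece hanging off the exit edge at $m$ therefore contains $u_3$ and its entire neighbourhood. Its expander part has depth about $g_n/2$ instead of $\ell_{n,K}=g_n/4-2$, and it intersects the pieces hanging off $u_3$. Here $m$ and $u_3$ are exit points in different components of $\br{U}$ at $\dist_E<g_n/2$, which is exactly what your first bullet declares impossible.

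Your ``main obstacle'' is likewise a genuine failure, not a proof artifact. String $u_1,u_2,u_3$ along a path of length about $g_n$, with consecutive pairs close but $u_1,u_3$ far. A path of length somewhat below $g_n/2$, outside $\br{U}$, can join a point a few steps from $u_1$ to a point a few steps from $u_3$. This is compatible with girth $g_n$, since it closes a cycle of length about $3g_n/2$, and it again glues two hanging pieces together. Your side claim that unions of short shortest paths are acyclic is also false: take three points spaced $g_n/3$ apart on a $g_n$-cycle. It is not needed, though, since $\br{U}$ is copied verbatim into the modified graph.

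\textbf{Relation to the paper, and a repair.} The paper's proof follows essentially your route and has the same hole. To show that $T(u,v)$ has depth $g_n/4-2$, it deduces $\pth(u,u')\subseteq\br{U}$ from $\dist_E(u,u')\le g_n/2-1$ with $u,u'\in\br{U}$. Its disjointness step likewise assumes a path of length $<g_n/2$ inside $\br{U}$ between two vertices of $\br{U}$.

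What makes the argument work is to redefine $\br{U}$ as the smallest superset of $\{\ep{u}:u\in U\}$ that contains the shortest path between any two of its \emph{own} vertices at $\dist_E<g_n/2$. With that definition your exit-point argument is valid by construction, and the obstacle disappears:
\begin{itemize}
\item An excursion from $a\in\br{U}$ that returns to some $b\neq a$, or two excursions from $a\neq b$ that meet, yield an $a$--$b$ walk of length $<g_n/2$ with inner vertices outside $\br{U}$. Its reduction is the unique short $a$--$b$ path, which contradicts closure. If $a$ and $b$ are adjacent, the walk instead closes a cycle of length $<g_n$ with the edge $ab$.
\item Two excursions from the same vertex along different first edges that meet give a cycle of length $<g_n$.
\end{itemize}
Your Step~3 then goes through as written. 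The price is downstream. The facts used in \Cref{Prop:classhard}, namely $|\br{U}|\le t^2g_n$ and the implication $\dist_E(U,v)\ge g_n/2\Rightarrow\dist_E(\br{U},v)\ge g_n/4-1$, are no longer immediate for the closure and must be re-established.
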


\begin{proof}
  We will use the fact that, between any two expander vertices, there is at most one path of length less than $g_n/2$. This follows from the expander having no cycle of length less than $g_n$.

  Consider any vertex $u \in \br{U}$. Let $v \in V_E \setminus \br{U}$ be an expander neighbor of~$u$, and let $T(u,v)$ denote the subgraph of $B(\br{U})$ consisting of the vertices reachable from $v$ \emph{after removing} the edge~$(u,v)$. The lemma can be reformulated as showing that the subgraphs $\set{T(u,v)}_{u,v}$ are disjoint and that $T(u,v)$ is isomorphic to $\treh_{n,K}$ for all~$u,v$.

  We first claim that $T(u,v)$ consists exactly of the vertices at distance at most $g_n/4-2$ from~$v$. Suppose, for contradiction, that $T(u,v)$ contains a vertex $w \in V_E$ at distance $g_n/4-1$. Then $\dist_E(u,w) = g_n/4$. By the definition of $\br{U}$, there must exist another vertex $u' \in \br{U}$ such that $\dist_E(u',w) \leq g_n/4-1$. By the triangle inequality, it follows that $\dist_E(u,u') \leq g_n/2 - 1$. Hence, all vertices on $\pth(u,u')$ must belong to $\br{U}$, including the vertex $v$, contradicting the assumption that $v \in V_E \setminus \br{U}$.

  Next, we show that for any two pairs $(u_1,v_1) \neq (u_2,v_2)$, the subgraphs $T(u_1,v_1)$ and~$T(u_2,v_2)$ are disjoint. If $\dist_E(u_1,u_2) \geq g_n/2$ then it follows from the previous paragraph that $T(u_1,v_1)$ and~$T(u_2,v_2)$ cannot intersect. If $\dist_E(u_1,u_2) < g_n/2$ and they do intersect, this would create a cycle of length at most $(g_n/4 - 1) + (g_n/4 - 1) + (g_n/2 - 1) = g_n - 3$, contradicting the girth condition.

  Finally, $T(u,v)$ induces the tree $\treh_{n,K}$ by the same arguments as \Cref{Lem:locss}: it is necessarily a tree by the girth property, and the expander vertices within it form a full tree of depth $g_n/4-2$ and degree~$d_n+1$, to which are attached the decorations specified in \Cref{Def:maingraph}. This matches the definition of $\treh_{n,K}$ from \Cref{Def:selfsim} and \Cref{Fac:botup}, using the identities $g_n/4-2 = \ell_{n,K}$ and~$d_n = d_{n,K}$.
\end{proof}

The next result studies the hardness of exploring the tree $\treh_{n,K}$, starting from its root and attempting to reach a $0$-level leaf.

\begin{lemma}[\sc Hardness of exploring self-similar trees]
  \label{Lem:hardss}
  Let $K = \sqrt{n}$, as in \Cref{Def:maingraph}. Suppose a classical algorithm explores the tree $\treh_{n,K}$, starting at its root and querying only vertices adjacent to those it has already visited. If the algorithm makes fewer than~$2^{\sqrt{n}}$ queries, the probability that it ever reaches a $0$-level leaf of $\treh_{n,K}$ is at most $2^{-2n\log{n}}$.
\end{lemma}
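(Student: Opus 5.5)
We propose an induction on $k$ that controls the probability of reaching a $0$-level leaf of $\treh_{n,k}$, following the strategy of \cite{Has21j,GHV21c} adapted to our undecorated-leaf variant. The key observation is that, under the random labeling $\pi$, an algorithm exploring from the root sees only an anonymous tree whose vertices carry degree information: every internal vertex of $\treh_{n,k}$ has degree $d_{n,1}$, and leaves have degree $1$. The only information available is therefore the lengths of paths before a leaf is hit. We model the exploration as a collection of walks that move up or down the tree. At each vertex of the backbone $C_{n,k}$, a fraction roughly $(d_{n,k}-1)/d_{n,1}$ of the neighbors continue the backbone, while the remaining neighbors lead into decorations $\treh_{n,r}$ with $r<k$. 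Since the algorithm cannot distinguish which neighbor is which, each freshly queried neighbor is a uniformly random unexplored child.

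The induction hypothesis will be the following. For every $r<k$, starting at the root of $\treh_{n,r}$ and making $q$ queries, the probability of reaching depth $\ell_{n,r}$ along the $0$-level backbone of $\treh_{n,r}$ is at most $q\cdot 2^{-c\,n\log n}$ for a suitable constant $c$. Using \Cref{Fac:botup}, I would define the backbone depth of the explored subtree and track only the deepest backbone vertex reached. To advance the backbone depth by one, the algorithm must guess, among the unexplored children of the current backbone frontier, one of the $d_{n,k}-1$ backbone children rather than one of the $d_{n,1}-d_{n,k}=(k-1)\sqrt n$ decoration children. Guessing blindly succeeds with probability bounded away from $1$ per step, roughly $1-k/(2\sqrt n)$. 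The real difficulty is that decoration subtrees $\treh_{n,r}$ with $r$ close to $k$ are indistinguishable from the backbone until a leaf is found at depth $\ell_{n,r}<\ell_{n,k}$. The algorithm could therefore explore a subtree and discard it once it proves shallow.

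The main obstacle is to bound this \emph{test-and-discard} strategy. Distinguishing the backbone of $\treh_{n,k}$ from a $\treh_{n,k-1}$ decoration requires either finding a leaf at depth about $\ell_{n,k-1}$ or going deeper than that. By the induction hypothesis, both events are exponentially unlikely, with probability at most $q\,2^{-cn\log n}$. Identifying the decoration from its own higher-level leaves also does not help. Those leaves belong to sub-decorations whose depths appear equally in both $\treh_{n,k}$ and $\treh_{n,k-1}$, since both are built from the same $\treh_{n,r}$, $r<k-1$, by \Cref{Fac:botup}. Hence, conditioned on not reaching any $0$-level leaf of a depth-$(k-1)$ subtree, its view is identically distributed to the view of a backbone segment. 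I would formalize this as a coupling: replace each $\treh_{n,k-1}$ decoration by a genuine backbone continuation, and show the transcripts agree unless some $(k-1)$-level bad event occurs. A union bound over at most $q$ explored decorations costs $q^2 2^{-cn\log n}$. After the coupling, a backbone of depth $\ell_{n,k}-\ell_{n,k-1}=10n^{3/2}\log n$ must be traversed by blind guesses among $\Theta(n)$ children. With at most $q<2^{\sqrt n}$ attempts at each depth, the chance of never being stuck with only mis-guesses is computed via a branching/ballot argument that yields $2^{-\Omega(n^{3/2}\log n)}\cdot q$.

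Finally, I would sum the errors over the $K=\sqrt n$ levels. This gives a total of at most $\sqrt n\, q^2 2^{-cn\log n}\le 2^{2\sqrt n}\,\sqrt n\, 2^{-cn\log n}$, which is below $2^{-2n\log n}$ once the constants are tuned; I would take $c=3$ and absorb the accumulated losses using the large depth gap $10n^{3/2}\log n$. Details of the coupling are deferred to \Cref{App:trees}.
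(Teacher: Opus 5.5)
There is a genuine gap. Your induction hypothesis is false, so the recursion built on it cannot close. Write $P_k(q)$ for the largest probability of reaching a $0$-level leaf of $\treh_{n,k}$ with fewer than $q$ queries. You assume $P_r(q)\le q\cdot 2^{-cn\log n}$ for every $r<k$ and every budget $q<2^{\sqrt n}$.

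\begin{itemize}
\item $\treh_{n,1}=C_{n,1}$ carries no decorations. Simply descending from the root reaches a $0$-level leaf with certainty after about $\ell_{n,1}=10n^{3/2}\log n$ queries.
\item $\treh_{n,2}$ also falls to polynomially many queries by backtracking. Every leaf of a $\treh_{n,1}$ decoration lies exactly $\ell_{n,1}$ below that decoration's root, so each leaf reveals where the wrong turn was taken. Only about $\ell_{n,2}/(2\sqrt n)$ wrong turns occur, each costing about $\ell_{n,1}$ queries.
\end{itemize}

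Hardness of $\treh_{n,k}$ is intrinsically level-dependent: it holds only against roughly $2^k$ queries. Your bound $q^2 2^{-cn\log n}$ on the bad event of the coupling uses the level-$(k-1)$ hypothesis with the \emph{same} budget $q$, so it is unjustified.

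Structurally, splitting on whether any $\treh_{n,k-1}$ decoration is fully explored gives $P_k(q)\le A+q\,P_{k-1}(q)$. Unrolled, this bottoms out at $P_1(q)$, which equals $1$ once $q$ exceeds about $\ell_{n,1}$. So it yields hardness only against polynomially many queries. Also, the losses multiply by $q$ at each level; they do not add up as in your final $\sqrt n\,q^2 2^{-cn\log n}$.

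The missing idea is the budget-halving step of the paper (\Cref{Prop:ind} with $w=2$). Split instead on whether $0$-level leaves of at least two \emph{distinct} $\treh_{n,k-1}$ decorations are reached.
\begin{itemize}
\item If at most one is reached, the algorithm must essentially guess blindly along a stretch of $(\ell_{n,k}-\ell_{n,k-1})/2$ backbone steps. This has probability at most $(d_{n,k}/d_{n,k-1})^{(\ell_{n,k}-\ell_{n,k-1})/2}\le 2^{-\frac{5}{2}n\log n}$.
\item If at least two are reached, by pigeonhole one of them was solved with fewer than $q/2$ queries. A union bound over the at most $q$ explored decorations then gives $q\cdot P_{k-1}(q/2)$. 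Your self-similarity observation is exactly what justifies treating that decoration as a fresh instance of $\treh_{n,k-1}$.
\end{itemize}
With $q_{n,k}=2^k$, the recursion descends to a budget below $2$ at $k=1$, where the probability is $0$. The accumulated union-bound factor $\prod_{k\le\sqrt n}2^k=2^{O(n)}$ is absorbed by the $2^{-\frac{5}{2}n\log n}$ term.

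Finally, your estimate $2^{-\Omega(n^{3/2}\log n)}$ for the blind-guessing term is too optimistic. A wrong turn into a $\treh_{n,k-1}$ decoration happens with probability only about $1/(2\sqrt n)$ per step. Over $\Theta(n^{3/2}\log n)$ steps this gives only $2^{-\Theta(n\log n)}$, which is what the final bound $2^{-2n\log n}$ reflects.
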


\begin{proof}
  The proof essentially follows \cite{GHV21c}. We provide a detailed version in \Cref{App:trees}, adapted to our choice of parameters and to our definition of self-similar trees. The lemma is reformulated there as \Cref{Prop:exithard}.
\end{proof}

By combining the three lemmas above, we show that exploring the main graph $G_n$ to reach expander vertices far from a given set $U$ of input vertices is hard.

\begin{proposition}[\sc Hardness of exploring the graph $G_n$]
  \label{Prop:classhard}
  Suppose that a classical algorithm is given access to $\hp$ for a random labeling $\pi$, as well as input samples $x_1,\dots,x_t$ from a distribution as specified in \Cref{thm:lowerbound}. Suppose that it outputs an additional sample~$y$, and let~$J_{\pi}$ denote the joint distribution of the input-output samples $(x_1,x_2,\dots,x_t,y)$. If the algorithm makes fewer than $2^{\sqrt{n}}$ queries, then the probability that the output vertex~$y$ lies at distance at least $g_n/2$ from the input vertices satisfies,
    \[\pr_{\pi, (x_1,\dots,x_t,y) \sim J_{\pi}}*{\dist_E(\pi^{-1}\pt{\set{x_1,\dots,x_t}},\pi^{-1}(y)) \geq g_n/2} \leq 2^{-n\log{n}}.\]
\end{proposition}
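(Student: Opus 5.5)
Fix the input samples $x_1,\dots,x_t$ and set $U = \pi^{-1}(\set{x_1,\dots,x_t})$, so $\abs{U} \le t = n^b$. The approach is to combine \Cref{lem:conncomp}, \Cref{Lem:hardmaing} and \Cref{Lem:hardss}. First, I may assume the output $y$ is a queried vertex, or else an input vertex: producing it with one extra query costs nothing. An unqueried non-isolated label is hit only with probability $2^{-2n^3\log^2 n}$, by the same counting as in \Cref{lem:conncomp}. By \Cref{lem:conncomp}, except with probability $2^{\sqrt n}\cdot 2^{-2n^3\log^2 n}$, the queried vertices form connected components rooted at $U$, plus isolated vertices. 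An isolated output has $\dist_E$ undefined or infinite; I would treat such an output as failing, via the convention that $\dist_E$ is measured only for connected pairs, or by noting that isolated vertices are disjoint from the support of $\psi_\pi$. The bad event therefore requires a component grown from $U$ to contain a vertex $w$ with $\dist_E(U,w)\ge g_n/2$.

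Next I form $\br U$ as in \Cref{Lem:hardmaing}. Every vertex of $\br U$ lies within distance $g_n/2$ of $U$. Any path from $U$ to $w$ must leave $\br U$ through some expander edge $(u,v)$ with $u\in\br U$ and $v\notin\br U$. It then enters the region $T(u,v)$, which is a copy of $\treh_{n,K}$, and must reach its $0$-level leaves: the expander vertices at distance $g_n/4-1$ from $\br U$. Everything within distance $g_n/4-1$ of $\br U$ lies in $B(\br U)$, and $g_n/4 < g_n/2$, so reaching $w$ forces crossing these leaves. By \Cref{Lem:hardmaing}, the explored part of $B(\br U)$ is isomorphic to $\br U$ with disjoint self-similar trees hanging off it. Since $\pi$ is uniformly random, the labels seen inside each $T(u,v)$ carry no information beyond the local isomorphism type. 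I would argue this via a coupling: conditioned on the queries so far, the answers are distributed exactly as in the modified graph, up to the moment a $0$-level leaf is first reached.

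I would then reduce to \Cref{Lem:hardss} via a simulation argument. Given a $G_n$-explorer, I build an explorer of a single $\treh_{n,K}$ by guessing which of the at most $\abs{\br U}(d_n+1)\le \poly(n)\cdot n^{g_n/2}$ attachment points will be the first one reached; a hybrid or union bound over them is equivalent. The other trees and the core $\br U$ are simulated internally with random labels. The query count stays below $2^{\sqrt n}$. Each tree is hit with probability at most $2^{-2n\log n}$.

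\textbf{Main obstacle.} Naively, $\abs{\br U}$ can be as large as $t\cdot n^{g_n/2}$, which destroys a union bound against $2^{-2n\log n}$. My fix is to union bound over the trees the algorithm actually enters rather than over all attachment points. Only trees whose root is queried matter, and there are at most $q<2^{\sqrt n}$ of them. Applying \Cref{Lem:hardss} to the $i$-th entered tree, for instance by having the simulator embed the real tree at the $i$-th entered attachment point, gives a total of $2^{\sqrt n}\cdot 2^{-2n\log n}$. Adding the \Cref{lem:conncomp} error, the sum is at most $2^{-n\log n}$ for large $n$. The step I would check most carefully is making this "$i$-th entered tree" simulation rigorous, given that the core $\br U$ depends on the random inputs and that queries interleave across trees.
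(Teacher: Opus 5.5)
Your proof is correct. It has the same skeleton as the paper's proof: \Cref{lem:conncomp}, then the closure $\br{U}$ with \Cref{Lem:hardmaing}, then \Cref{Lem:hardss} and a union bound. The one real divergence is the final union bound, and your reason for diverging comes from a misreading of $\br{U}$.

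\textbf{Size of $\br{U}$.} $\br{U}$ is not a ball around $U$. It consists of at most $t$ expander projections of the inputs. For each of the at most $t^2$ pairs at distance $<g_n/2$, it adds the fewer than $g_n/2$ expander vertices of one shortest path. So $\abs{\br{U}} \le t^2 g_n$, which is polynomial, and the obstacle you describe does not arise.

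\textbf{The two union bounds.} The paper simply union-bounds over the at most $\abs{\br{U}}\cdot d_n \le t^2 g_n d_n$ attachment edges. This gives $t^2 g_n d_n\cdot 2^{-2n\log n}\le 2^{-\frac{3}{2}n\log n}$. Each term applies \Cref{Lem:hardss} to a tree whose position is fixed in advance once $U$ is given.

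Your union bound over the at most $q<2^{\sqrt n}$ trees actually entered is also valid and gives $2^{\sqrt n}\cdot 2^{-2n\log n}$. It has the advantage of not depending on the size of the core at all. The price is the adaptive coupling you yourself flag: plant the real tree at the $i$-th entered edge, and bound the event that the \emph{first} $0$-level leaf reached lies there. The fixed-edge version avoids that adaptivity.

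\textbf{The step from $U$ to $\br{U}$.} Your stated reason that reaching $w$ forces crossing a $0$-level leaf is too weak as written. Knowing that vertices of $\br{U}$ are within $g_n/2$ of $U$, together with $g_n/4<g_n/2$, does not give it. What is needed is that every vertex of $\br{U}$ is within $\dist_E$ at most $g_n/4$ of some input. This holds because such a vertex lies on a path with fewer than $g_n/2$ expander vertices between two inputs, so it is within half of that from one endpoint. The triangle inequality then gives $\dist_E(\br{U},w)\ge g_n/4-1$; this is the ``observe'' step in the paper.

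\textbf{Isolated outputs.} Your explicit handling of isolated or unqueried outputs is more careful than the paper and is genuinely needed. If disconnected pairs counted as being at infinite distance, outputting a fresh random label would satisfy the event with near certainty. So the event must be restricted to non-isolated $y$. This costs nothing in \Cref{Prop:hfarideal}, since $p_n$ vanishes on isolated vertices.
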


\begin{proof}
  Let $U = \pi^{-1}\pt{\set{x_1,\dots,x_t}}$ denote the (random) set of input vertices sampled under~$J_{\pi}$. By \Cref{lem:conncomp}, except with probability $2^{\sqrt{n}} \cdot 2^{-2n^{3}\log^2{n}} < 2^{-2n\log{n}}$, we may assume that the consecutive queries of the algorithm form a set of connected components in $G_n$ rooted at the vertices in $U$.

  Let $v = \pi^{-1}(y)$ be the vertex in $G_n$ output by the algorithm. Observe that $\dist_E(U,v) \geq g_n/2$ implies $\dist_E(\br{U},v) \geq g_n/4-1$ (where $\br{U}$ is defined in \Cref{Eq:closure}), hence we can focus on the latter event to happen.

  By \Cref{Lem:hardmaing}, if the algorithm were to explore a vertex at distance $g_n/4-1$ from $\br{U}$, then there must exist an exploration tree attached to one of the vertices in $\br{U}$ that can be embedded into~$\treh_{n,K}$ and reaches one of its $0$-level leaves. However, by \Cref{Lem:hardss}, reaching a $0$-level leaf from the root of such a tree can occur with probability at most $2^{-2n\log{n}}$ under~$2^{\sqrt{n}}$ queries. Since the number of possible roots to consider for the embedding is at most $\abs{\br{U}} \cdot d_n \leq t^2 \cdot g_n \cdot d_n$, a union bound then implies that the overall success probability is at most $t^2 \cdot g_n \cdot d_n \cdot 2^{-2n\log{n}} \leq 2^{-\frac{3}{2}n\log{n}}$, for $n$ large enough.
\end{proof}

\section*{Acknowledgements}
This work was supported by the Maison du Quantique de Nouvelle-Aquitaine ``HybQuant'', as part of the HQI initiative and France 2030, under the French National Research Agency (ANR) grant ANR-22-PNCQ-0002. It was also supported by the PEPR integrated project EPiQ under ANR grant ANR-22-PETQ-0007.

\printbibliography[heading=bibintoc]

\appendix

\section{Spectral properties of the main graph}
\label{App:app-spectral}
In this section, we analyze the spectral properties of the main graph. The study of the top eigenvector of its adjacency matrix is closely related to the analysis in Appendix B of \cite{GHV21c}. However, their argument proceeds via an iterative decoration process in which identical complete trees are attached to all vertices at each stage. Our construction differs in that we do not decorate the leaves created at intermediate levels of the iteration as in Definitions~\ref{Def:selfsim} and \ref{Def:maingraph}. Consequently, the spectral analysis in \cite{GHV21c} does not apply directly to our setting. We therefore present a self-contained proof that analyzes the top-eigenvector of the main graph directly, rather than via an iterative argument.

Below is a lemma proved in \cite{Has21j} about the spectral gap of the sum of two Hermitian matrices of same dimension in terms of the spectral gap of one and the spectral norm of another.

\begin{lemma}[\sc Spectral gap of sum of matrices, (Lemma 4 in \cite{Has21j}; Footnote 15 in \cite{GHV21c})]
  \label{Lem:app-specsumhastings}
  Let $A_1$ and $A_2$ be any two Hermitian matrices of the same dimension. If $A_1$ has a spectral gap of $\delta$ and $A_2$ has largest eigenvalue at most $\gamma$, then $A_1 + A_2$ has a spectral gap of at least $\delta - 2\gamma$.
\end{lemma}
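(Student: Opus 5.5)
Let $\lambda_1 \geq \lambda_2$ be the two largest eigenvalues of $A_1+A_2$, with $\mu_1\ge\mu_2$ those of $A_1$ (so $\mu_1-\mu_2\ge\delta$), and let $\ket{\phi_1}$ be a top eigenvector of $A_1$. The argument only uses the variational (Courant--Fischer) characterization of the top two eigenvalues. I read the hypothesis ``largest eigenvalue at most $\gamma$'' as $\norm{A_2}\le\gamma$, i.e.\ $-\gamma I \preceq A_2 \preceq \gamma I$. This is how the lemma is applied later, where $A_2$ is the block collecting the decoration edges, whose spectrum is symmetric for a forest-type adjacency matrix.

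\textbf{Lower bound on $\lambda_1$.} Test the Rayleigh quotient on $\ket{\phi_1}$:
\[
  \lambda_1 \;\geq\; \bra{\phi_1}(A_1+A_2)\ket{\phi_1} \;\geq\; \mu_1 - \gamma .
\]

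\textbf{Upper bound on $\lambda_2$.} By Courant--Fischer, $\lambda_2$ is the minimum, over vectors $\ket{w}$, of the maximum of $\bra{x}(A_1+A_2)\ket{x}$ over unit vectors $\ket{x}\perp\ket{w}$. Take $\ket{w}=\ket{\phi_1}$. Every unit $\ket{x}\perp\ket{\phi_1}$ satisfies $\bra{x}A_1\ket{x}\le \mu_2$ and $\bra{x}A_2\ket{x}\le\gamma$. Hence
\[
  \lambda_2 \;\leq\; \mu_2 + \gamma .
\]

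\textbf{Conclusion.} Subtracting the two bounds gives
\[
  \lambda_1-\lambda_2 \;\geq\; (\mu_1-\mu_2) - 2\gamma \;\geq\; \delta - 2\gamma ,
\]
which is the claim.

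\textbf{The hypothesis on $A_2$.} This is the one delicate point. The lower bound on $\lambda_1$ needs $\bra{\phi_1}A_2\ket{\phi_1}\ge -\gamma$, which does not follow from an upper bound on the eigenvalues of $A_2$ alone. So I would state the lemma with $\norm{A_2}\le\gamma$.

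If only $A_2 \preceq \gamma I$ is assumed, I would first try to recover the lower bound using $A_2\succeq 0$ or a symmetric spectrum, both of which hold for the decoration block in the application. If neither is available, I would instead bound the smallest eigenvalue of $A_2$ separately and carry it through the same two steps. The gap bound then becomes $\delta$ minus the spectral spread of $A_2$, which is still at most $2\gamma$ whenever $\norm{A_2}\le\gamma$.
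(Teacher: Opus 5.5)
Your proof is correct. It is the standard Weyl/Courant--Fischer argument: each eigenvalue of $A_1+A_2$ lies within $[\lambda_{\min}(A_2),\lambda_{\max}(A_2)]$ of the corresponding eigenvalue of $A_1$. The paper gives no proof of its own and only cites Hastings and Gily\'en--Hastings--Vazirani, whose result amounts to this argument.

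You are also right that the hypothesis must be read as $\norm{A_2}\le\gamma$. The paper applies the lemma exactly this way, using a spectral-norm bound on the forest block, which is entrywise non-negative so its top eigenvalue equals its spectral radius.

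One slip in your side remark: the forest adjacency block is not positive semidefinite, since it is nonzero with zero trace. Only its spectrum is symmetric. This does not affect your main argument.
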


Applying the above result to the main graph, we obtain the following, which proves the first part of \Cref{Lem:spec}:

\begin{lemma}[\sc Spectral gap of $G_n$]
  \label{lem:app-spec}
  The difference between the two largest eigenvalues of $A_n$ is at least $n/2 - 4\sqrt{2n}$.
\end{lemma}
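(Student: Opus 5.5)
We will write $A_n = A_1 + A_2$, where $A_1$ is the adjacency matrix of the expander $E_n$ (padded with zeros on all tree and isolated vertices), and $A_2$ is the adjacency matrix of everything else: the edges attaching the trees to expander vertices together with all edges inside the trees. The isolated vertices contribute only zero rows and columns, and the expander and its decorations form a single connected component. So it suffices to work on the non-isolated part, where the top eigenvalue of $A_n$ is at least $d_n = n$ (for instance, via the Rayleigh quotient of the uniform vector on $V_E$). The zero eigenvalues from the isolated vertices then do not affect the gap between the two largest eigenvalues.

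Restricted to the non-isolated vertices, $A_1$ has top eigenvalue $n$, with eigenvector uniform on $V_E$. By \Cref{Prop:expander}, its second eigenvalue is at most $n - n/2 = n/2$. All remaining eigenvalues of $A_1$ equal $0$ on the tree vertices, which is also at most $n/2$. Hence $A_1$ has spectral gap $\delta \ge n/2$.

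For $A_2$ we note that it is a disjoint union of trees: each expander vertex together with its attached copies of $\treh_{n,k}$ forms a tree. We then bound the largest eigenvalue of a tree by its maximum degree, using the standard fact that a tree of maximum degree $\Delta$ has spectral radius at most $2\sqrt{\Delta-1}$. This needs the maximum degree in $A_2$.
\begin{itemize}
\item Every internal tree vertex has degree $d_{n,1} = 2n - \sqrt{n} \le 2n$.
\item An expander vertex, viewed inside $A_2$, has degree $\sum_{k=1}^{K-1}(d_{n,k}-d_{n,k+1}) = d_{n,1} - d_{n,K} = n - \sqrt{n}$.
\end{itemize}
So the maximum degree is at most $2n$, and $\gamma \le 2\sqrt{2n}$. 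Applying \Cref{Lem:app-specsumhastings} then gives a gap of at least $\delta - 2\gamma \ge n/2 - 4\sqrt{2n}$, matching the stated bound.

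The main obstacle is justifying the tree spectral-radius bound $\lambda_{\max} \le 2\sqrt{\Delta-1}$. We would argue it by a weighted-degree (Perron) argument. Root each tree and use the positive test vector $f(v) = (\Delta-1)^{-\mathrm{depth}(v)/2}$. Then $(A_2 f)(v) \le f(v)\bigl(\sqrt{\Delta-1} + (\Delta-1)\cdot(\Delta-1)^{-1/2}\bigr) = 2\sqrt{\Delta-1}\,f(v)$ at non-root vertices, where the two terms come from the parent and the at most $\Delta-1$ children. At the root, which has at most $\Delta$ children in our setting, we get $\Delta/\sqrt{\Delta-1} \le 2\sqrt{\Delta-1}$ whenever $\Delta \ge 2$. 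By the Collatz–Wielandt bound, this yields $\lambda_{\max}(A_2) \le 2\sqrt{\Delta-1} \le 2\sqrt{2n}$. We also need to check that the Hastings lemma is applied with the gap taken between the top two eigenvalues of the whole matrix (including the zero block). This is fine since $n/2 \ge 0$.
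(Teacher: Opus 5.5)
Your proposal is correct and follows essentially the same route as the paper: both split $A_n$ into the expander part and the forest of attached trees, bound the forest by $2\sqrt{2n}$, and apply \Cref{Lem:app-specsumhastings} with $\delta = n/2$ and $\gamma = 2\sqrt{2n}$. The paper only asserts the tree spectral bound, whereas your Collatz--Wielandt test vector proves it. Your argument bounds the spectral radius of a symmetric nonnegative matrix, so it in fact controls $\|A_2\|$, which is the quantity the Weyl-type argument behind the lemma really needs.
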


\begin{proof}
  The adjacency matrix $A_n$ of the main graph $G_n$ can be written as $A_n = A_{E} + A_{T}$, with~$A_E$ denoting the expander part and $A_{T}$ denoting the tree edges added to it. By the assumptions of \Cref{Prop:expander}, the spectral gap of $A_E$ is at least $n/2$. The graph formed by the added tree edges is a forest of maximum degree $2n$, and thus the corresponding adjacency matrix $A_T$ has spectral norm bounded by $2\sqrt{2n}$.

  Applying \Cref{Lem:app-specsumhastings} with $A_1 = A_{E}$, $A_2 = A_{T}$, $\delta = n/2$ and $\gamma = 2\sqrt{2n}$ proves the lemma.
\end{proof}

From now on, we will study the top-eigenvector of the main graph $G_n$ in order to prove the second part of \Cref{Lem:spec}. Since the graph is connected, the top-eigenvector is also the Perron--Frobenius eigenvector of $G_n$. We start by defining trees with self-loops attached to the root.

\begin{definition}[\sc Self-loop tree eigenpair]
  \label{Def:app-self-looptree}
  Let $T$ be a finite tree with distinguished root vertex $t$, and let $\alpha > 0$.
  Let $T(\alpha)$ denote the graph obtained from $T$ by adding a self-loop of weight~$\alpha$ at $t$. Let $A_{T(\alpha)}$ denote its adjacency matrix. We write $\lambda_{T(\alpha)}$ for the largest eigenvalue of $A_{T(\alpha)}$, and
  $\psi_{T(\alpha)}$ for the corresponding Perron--Frobenius eigenvector, normalized so that~$\psi_{T(\alpha)}(t) = 1$.
\end{definition}

We use the notation $\psi$ (rather than $\ket{\psi}$) when the vector is not necessarily of unit norm. We give a convenient characterization of the Perron--Frobenius eigenvector $\psi_{T(\alpha)}$ of a tree with a self-loop at the root, parametrized by the resulting top eigenvalue.

\begin{lemma}[\sc Top eigenvector of self-looped tree]
  \label{Lem:app-self-loopeigenvec}
  Let $T$ be a finite tree with root~$t$ and largest eigenvalue $\lambda_T$. 
  For any $\lambda > \lambda_T$, define $\alpha>0$ by $\alpha^{-1} := \big[(\lambda I - A_T)^{-1}\big]_{t,t}$. Then the top eigenvector of the tree with a self-loop of weight $\alpha$ at $t$, normalized so that the entry at $t$ equals 1, is
  \[
    \psi_{T(\alpha)} = \alpha (\lambda I - A_T)^{-1} e_t,
  \]
  where $e_t$ denotes the standard basis vector at $t$. Moreover $\lambda_{T(\alpha)} = \lambda$.
\end{lemma}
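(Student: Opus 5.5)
Set $\psi := \alpha(\lambda I - A_T)^{-1} e_t$. The plan is to check directly that $\psi$ is an eigenvector of $A_{T(\alpha)} = A_T + \alpha\, e_t e_t^{\top}$ with eigenvalue $\lambda$, then to show it is entrywise positive, and finally to invoke Perron--Frobenius to conclude that $\lambda$ is the top eigenvalue.

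\emph{Normalization and eigen-equation.} By the choice of $\alpha$, $\psi(t) = \alpha\,[(\lambda I - A_T)^{-1}]_{t,t} = \alpha\cdot\alpha^{-1} = 1$. We also have $(\lambda I - A_T)\psi = \alpha e_t$. Since $\psi(t) = 1$, the right-hand side equals $\alpha e_t e_t^{\top}\psi$. Rearranging gives $(A_T + \alpha e_t e_t^{\top})\psi = \lambda\psi$, which is exactly $A_{T(\alpha)}\psi = \lambda\psi$.

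\emph{Positivity, including $\alpha>0$.} For $\lambda > \lambda_T \ge \|A_T\|$ (the adjacency matrix of a graph has spectral radius equal to its top eigenvalue, by Perron--Frobenius for nonnegative matrices), the Neumann series converges:
\[
(\lambda I - A_T)^{-1} = \sum_{j\ge 0} \lambda^{-j-1} A_T^j .
\]
Every term is entrywise nonnegative, so the resolvent is nonnegative. Because $T$ is connected, for each vertex $s$ some power $A_T^j$ has $(s,t)$ entry positive (take $j$ equal to the distance from $s$ to $t$). Hence every entry of $(\lambda I - A_T)^{-1}e_t$ is strictly positive. In particular its $t$-entry is positive, so $\alpha$ is well defined and $\alpha>0$, and therefore $\psi$ is strictly positive.

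\emph{Identifying the top eigenvalue.} The matrix $A_{T(\alpha)}$ is nonnegative, symmetric and irreducible, since the underlying tree is connected. By Perron--Frobenius, the only eigenvector that can be chosen entrywise positive is the one for the largest eigenvalue. Another eigenvector $\phi$ of the symmetric matrix would be orthogonal to the Perron vector, which is impossible if both are positive. Therefore $\lambda_{T(\alpha)} = \lambda$ and $\psi = \psi_{T(\alpha)}$ under the normalization $\psi(t)=1$.

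The only step with real content is the positivity argument, namely the convergence of the Neumann series and the use of connectivity to get strict positivity. It is routine given $\lambda>\lambda_T$.
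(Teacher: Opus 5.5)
Your proof is correct and follows the same route as the paper: you verify $A_{T(\alpha)}\psi=\lambda\psi$ directly, show that the resolvent column $(\lambda I - A_T)^{-1}e_t$ is entrywise positive, and conclude by Perron--Frobenius. The only difference is that you derive positivity from the Neumann series and the connectivity of $T$, whereas the paper cites the fact that $\lambda I - A_T$ is a nonsingular M-matrix; your version has the small advantage of making explicit the irreducibility needed for strict positivity, and hence why $\alpha>0$ is well defined.
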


\begin{proof}
    Write the adjacency matrix as $A_{T(\alpha)} = A_T + \alpha \, e_t e_t^T$,
    where $e_t$ is the standard basis vector at the root $t$. We check that $(\lambda I - A_T)^{-1} e_t$ is an eigenvector of $A_{T(\alpha)}$ with eigenvalue $\lambda$:
    \begin{align*}
      A_{T(\alpha)} (\lambda I - A_T)^{-1} e_t
        & = A_T (\lambda I - A_T)^{-1} e_t + \alpha \, e_t e_t^T (\lambda I - A_T)^{-1} e_t \\
        & = A_T (\lambda I - A_T)^{-1} e_t + \alpha \, e_t \cdot \frac{1}{\alpha} \tag{by definition of $\alpha$}\\
        & = (A_T-\lambda I+\lambda I) (\lambda I - A_T)^{-1} e_t + e_t \\
        & = \lambda \cdot (\lambda I - A_T)^{-1} e_t.
    \end{align*}

    Since $A_T$ is non-negative and $\lambda > \lambda_T$, the matrix $\lambda I - A_T$ is a non-singular M-matrix (see \cite{BP94b}, Chapter 6), whose inverse is entry-wise positive. Hence $(\lambda I - A_T)^{-1} e_t$ is a positive vector. By the Perron--Frobenius theorem, it is proportional to the top eigenvector of $A_{T(\alpha)}$, so $\lambda = \lambda_{T(\alpha)}$. Normalizing the $t$-entry to 1 gives $\psi_{T(\alpha)} = \alpha \cdot (\lambda I - A_T)^{-1} e_t$, as claimed.
\end{proof}

Let $G$ be obtained by attaching multiple copies of various trees to each vertex of a base graph~$E$ via edges to a distinguished vertex in each tree; we describe the Perron--Frobenius eigenvector of $G$ in terms of that of $E$ and the trees, which will be useful for analyzing $G_n$ in \Cref{Def:maingraph} with $G$ corresponding to $G_n$ and $E$ to $E_n$.

\begin{lemma}[\sc Top-eigenvector under multiple decorations]
  \label{Lem:app-dectop}
  Let $E$ be a connected graph with adjacency matrix $A_E$. Fix a Perron--Frobenius vector $\psi_E$ with arbitrary normalization, and let $\lambda_E$ denote the corresponding eigenvalue. For each $k \in [K-1]$, let $T_k$ be a finite tree with distinguished root $t_k$, and let $\psi_{T_k(\cdot)}, \lambda_{T_k(\cdot)}$ be defined as in \Cref{Lem:app-self-loopeigenvec}. Let $G$ be obtained by attaching $\beta$ copies of each $T_k$ to every vertex $v \in V_E$ via edges connecting $v$ to the root $t_k$ of each copy. Denote the Perron--Frobenius eigenvalue of $A_{G}$ by $\lambda_{G}$, and for every $k \in [K-1]$, define $\alpha_k^{-1} := \big[(\lambda_{G} I - A_{T_k})^{-1}\big]_{t_k,t_k}.$
  Then we have the following:
  \begin{enumerate}
    \item $\lambda_{G} = \lambda_E + \beta \sum_{k=1}^{K-1} \alpha_k^{-1}$.
    \item $\lambda_{G} = \lambda_{T_k(\alpha_k)} \;\;\text{for all } k \in [K-1]$.
    \item The restriction of any Perron--Frobenius vector of $G$ to $E$ is proportional to $\psi_E$. Let $\psi_{G}$ be the Perron--Frobenius vector of $G$ whose restriction to $E$ exactly equals $\psi_E$. Then, $\psi_{G}$ decomposes as 
    \[
    \psi_{G} = \psi_E \;\oplus\; \bigoplus_{v \in V_E} \psi_E(v)\left(\bigoplus_{k=1}^{K-1} \bigoplus_{j=1}^\beta \, \alpha_k^{-1} \cdot \psi_{T_k(\alpha_k)}\right).
    \]
  \end{enumerate}
\end{lemma}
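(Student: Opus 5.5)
The plan is to write down the eigenvalue equation $A_G\psi_G = \lambda_G\psi_G$ one block at a time, solve the tree blocks explicitly with resolvents, and reduce what remains to an eigenvalue equation on $E$.

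\textbf{Tree blocks.} Let $\psi_G$ be a Perron--Frobenius vector of $G$, which is positive since $G$ is connected. Fix $v\in V_E$ and a copy $j$ of $T_k$ attached to $v$, and let $x$ be the restriction of $\psi_G$ to that copy. The only edge leaving the copy goes from its root $t_k$ to $v$. So the eigenvalue equation restricted to the copy reads $\lambda_G x = A_{T_k}x + \psi_G(v)e_{t_k}$.

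Next we need $\lambda_G > \lambda_{T_k}$. The tree $T_k$ is a proper subgraph of the connected graph $G$, so this follows from the strict monotonicity of the Perron root for irreducible non-negative matrices. Hence $\lambda_G I - A_{T_k}$ is invertible, and
\[
x = \psi_G(v)\,(\lambda_G I - A_{T_k})^{-1}e_{t_k}.
\]
With $\alpha_k$ as defined in the statement, \Cref{Lem:app-self-loopeigenvec} gives $(\lambda_G I - A_{T_k})^{-1}e_{t_k} = \alpha_k^{-1}\psi_{T_k(\alpha_k)}$ and $\lambda_{T_k(\alpha_k)}=\lambda_G$. This is item 2, and it is also the tree part of item 3. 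In particular, the value at the root of the copy is $x(t_k)=\psi_G(v)\alpha_k^{-1}$.

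\textbf{Expander block.} At a vertex $v\in V_E$, the eigenvalue equation is
\[
\lambda_G\psi_G(v) = (A_E\psi_G|_E)(v) + \sum_{k=1}^{K-1}\sum_{j=1}^{\beta} x_{k,j}(t_k) = (A_E\psi_G|_E)(v) + \beta\sum_k\alpha_k^{-1}\psi_G(v).
\]
So $\psi_G|_E$ is an eigenvector of $A_E$ with eigenvalue $\lambda_G-\beta\sum_k\alpha_k^{-1}$, and it is entrywise positive. By Perron--Frobenius for the connected graph $E$, the only positive eigenvector is the Perron vector. Therefore $\psi_G|_E\propto\psi_E$ and $\lambda_E=\lambda_G-\beta\sum_k\alpha_k^{-1}$, which gives items 1 and 3. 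Rescaling so that $\psi_G|_E=\psi_E$ produces exactly the stated decomposition, because the tree parts scale linearly with $\psi_G(v)$.

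\textbf{Main obstacle.} The delicate point is justifying $\lambda_G>\lambda_{T_k}$. The argument above uses the irreducibility of $A_G$ together with the fact that a principal submatrix of an irreducible non-negative matrix has strictly smaller spectral radius. This requires $G$ to be connected, which holds because $E$ is connected and every tree is attached to $E$. The final identification uses the uniqueness of the positive eigenvector of $A_E$, and that is where the connectivity of $E$ enters.
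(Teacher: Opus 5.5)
Your proposal is correct and follows essentially the same route as the paper: solve each tree block via the resolvent $(\lambda_G I - A_{T_k})^{-1}e_{t_k}$, substitute into the expander equation to obtain $(A_E + \beta\sum_k \alpha_k^{-1} I)\,\psi_G|_E = \lambda_G\,\psi_G|_E$, and conclude by uniqueness of the positive eigenvector on $E$ together with \Cref{Lem:app-self-loopeigenvec}. Your justification of the strict inequality $\lambda_G > \lambda_{T_k}$ is tighter than the paper's. You use irreducibility of $A_G$ and the fact that a proper principal submatrix of an irreducible non-negative matrix has strictly smaller Perron root, whereas the paper appeals to subgraph containment and interlacing, which on their own only give a non-strict inequality.
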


\begin{proof}
  For each $k \in [K-1]$ and each vertex $v \in V_E$, we index the $\beta$ copies of $T_k$
  attached to~$v$ by~$j \in [\beta]$ and denote the corresponding copy by $T_{k,j}$.
  All copies are isomorphic to $T_k$; we therefore denote their adjacency matrices
  uniformly by $A_{T_k}$.

  The adjacency matrix of $G$ admits the block decomposition
  \begin{equation*}
      A_{G} =
      \begin{pmatrix}
      A_E
      &
      B
      \\[0.6em]
      B^\top
      &
      \displaystyle
      \bigoplus_{v\in V_E}
      \bigoplus_{k=1}^{K-1}
      \bigoplus_{j=1}^{\beta}
      A_{T_k}
      \end{pmatrix},
  \end{equation*}
  where $B$ encodes the edges between each vertex $v \in V_E$ and the root $t_k$ of each copy $T_{k,j}$ attached to $v$. More explicitly, $B$ has entries
  \begin{align}
    \label{Eq:app-B}
    B_{v,(v,k,j,t_k)} = 1 \quad \text{for each $v\in V_E$, $k\in[K-1]$, $j\in[\beta]$},
  \end{align}
  and all other entries are zero.

  Let $\psi_{G} = (x,y)$ be the Perron--Frobenius vector of $A_{G}$, with
  $x \in \mathbb{R}^{|V_E|}$ and
  $y \in \mathbb{R}^{|V_E| \cdot K-1 \cdot \beta}$.
  By symmetry of the construction, the vector $y$ decomposes as
  $y = \bigoplus_{v\in V_E} \bigoplus_{k=1}^{K-1} \bigoplus_{j=1}^{\beta} y_{v,k,j}$.

  The eigenvalue equations are $A_E x + B y = \lambda_{G} x$ and 
  $B^\top x + \bigoplus_{v,k,j} A_{T_k} y = \lambda_{G} y$. Solving the second one block-wise over each copy $T_{k,j}$ yields, for every
  $(v,k,j)$, $y_{v,k,j} = (\lambda_{G} I - A_{T_k})^{-1} e_{t_k} \, x_v$. 
  Indeed,
  \begin{align*}
    & B^\top x + \bigoplus_{v,k,j} A_{T_k} y = \lambda_{G} y \\
    \implies & B^\top x + \left(\bigoplus_{v,k,j} A_{T_k}\right)\cdot \left(\bigoplus_{v,k,j} y_{v,k,j}\right) = \lambda_{G} \left(\bigoplus_{v,k,j} y_{v,k,j}\right)\\
    \implies & \bigoplus_{v,k,j} e_{t_k} x_v + \left(\bigoplus_{v,k,j} A_{T_k}\right)\cdot \left(\bigoplus_{v,k,j} y_{v,k,j}\right) = \lambda_{G} \left(\bigoplus_{v,k,j} y_{v,k,j}\right) \tag{from \Cref{Eq:app-B}, $(B^{T}x)_{(v,k,j)} = e_{t_k}x_v$}\\
    \implies & \bigoplus_{v,k,j} (\lambda_{G}I - A_{T_k})^{-1}e_{t_k} x_v = \bigoplus_{v,k,j} y_{v,k,j}.
  \end{align*}
  In the last step, the existence of $(\lambda_{G}I - A_{T_k})^{-1}$ is due to the fact that $\lambda_{G} > \lambda_{T_k} = \left\|A_{T_k}\right\|$ as $T_k$ is a subgraph of $G$. Solving the final expression blockwise gives the desired expression for $y_{v,k,j}$. 

  Substituting this into the first equation, $A_E x + B y = \lambda_{G} x$ entrywise gives, for each $v$,
  \[
  (A_E x)_v + \sum_{k=1}^{K-1} \sum_{j=1}^{\beta} \big[(\lambda_{G} I - A_{T_k})^{-1}\big]_{t_k,t_k} x_v
  = \lambda_{G} x_v, \tag{since $(B y)_v = \sum_{k=1}^{K-1} \sum_{j=1}^{\beta} y_{v,k,j}(t_k)$}
  \]
  or equivalently,
  \[
  \Big(A_E + \beta \sum_{k=1}^{K-1} \alpha_k^{-1} I \Big)x = \lambda_{G} x.
  \]

  Since $x$ is positive and is an eigenvector of $A_E$, the Perron--Frobenius theorem implies that it is proportional to $\psi_E$, the top eigenvector of $E$. Thus we can conclude that
  \[
    \lambda_{G} = \lambda_E + \beta \sum_{k=1}^{K-1} \alpha_k^{-1}.
  \]

  Finally, since $T_k$ is a subgraph of $G$, the eigenvalue interlacing theorem says that $\lambda_{G} > \lambda_{T_k}$, \Cref{Lem:app-self-loopeigenvec} applied to each $T_k$ implies that $\lambda_{T_k(\alpha_k)} = \lambda_{G}$ and $\psi_{T_k(\alpha_k)} = \alpha_k \cdot (\lambda_{G} I - A_{T_k})^{-1} e_{T_k}$.
  Hence $y_{v,k,j} = {\alpha_k}^{-1} \, \psi_E(v) \, \psi_{T_k(\alpha_k)}$, and combining the expressions for $x$ and $y$ yields the stated decomposition of $\psi_{G}$.
\end{proof}

\begin{lemma}[\sc $\alpha_k$ bounds]
  \label{Lem:app-alphakbounds}
  Let $\alpha_k$ be defined as per the assumptions of \Cref{Lem:app-dectop}. If for all~$k \in [K-1]$ the degree of $T_k$ is bounded by $\Delta$, then we have for all $k \in [K-1]$,
  \[\lambda_E - 2\sqrt{\Delta} \, \le \alpha_k \le \, \lambda_{E} + \beta K-1 \cdot (\lambda_E - 2\sqrt{\Delta})^{-1}.\]
\end{lemma}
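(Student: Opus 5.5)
The plan is to work entirely with the scalar identity $\alpha_k^{-1} = [(\lambda_G I - A_{T_k})^{-1}]_{t_k,t_k}$ from \Cref{Lem:app-dectop}, together with the eigenvalue relation $\lambda_G = \lambda_E + \beta\sum_{k=1}^{K-1}\alpha_k^{-1}$. The quantity $\beta K-1$ in the statement will be read as the total number $\beta(K-1)$ of trees attached to each expander vertex. I will prove the two bounds separately. The lower bound comes from a resolvent estimate, and the upper bound then follows by substituting the lower bound back into the eigenvalue relation.

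\emph{Step 1 (a lower bound on $\lambda_G$ and on the spectral distance).} Since all $\alpha_k^{-1} > 0$, the relation in \Cref{Lem:app-dectop} gives $\lambda_G \geq \lambda_E$. Each $T_k$ is a tree of maximum degree $\Delta$, so $\lambda_{T_k} = \|A_{T_k}\| \leq 2\sqrt{\Delta - 1} \leq 2\sqrt{\Delta}$. This is the standard bound for trees of bounded degree, the same one used for $A_T$ in \Cref{lem:app-spec}. Consequently $\lambda_G - \lambda_{T_k} \geq \lambda_E - 2\sqrt{\Delta}$, and this is positive in our application since $\lambda_E = n$ and $\Delta \leq 2n$.

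\emph{Step 2 (lower bound on $\alpha_k$).} The diagonal resolvent entry satisfies
\[
  [(\lambda_G I - A_{T_k})^{-1}]_{t_k,t_k} \leq \|(\lambda_G I - A_{T_k})^{-1}\| = (\lambda_G - \lambda_{T_k})^{-1} \leq (\lambda_E - 2\sqrt{\Delta})^{-1}.
\]
Here the middle equality holds because $\lambda_G I - A_{T_k}$ is positive definite with smallest eigenvalue $\lambda_G - \lambda_{T_k}$. Inverting the inequality gives $\alpha_k \geq \lambda_E - 2\sqrt{\Delta}$.

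\emph{Step 3 (upper bound on $\alpha_k$).} A lower bound on the diagonal resolvent entry is needed. The cleanest route is a Neumann series or Schur complement expansion at the root. The root $t_k$ has at most $\Delta$ children $c$, and each child's subtree $S_c$ satisfies the recursion
\[
  [(\lambda I - A_T)^{-1}]_{t,t} = \Bigl(\lambda - \sum_c [(\lambda I - A_{S_c})^{-1}]_{c,c}\Bigr)^{-1},
\]
which gives $\alpha_k = \lambda_G - \sum_c [(\lambda_G - A_{S_c})^{-1}]_{c,c} \leq \lambda_G$. Substituting the relation from \Cref{Lem:app-dectop} and applying Step 2 to each term yields
\[
  \alpha_k \leq \lambda_G = \lambda_E + \beta\sum_{k'} \alpha_{k'}^{-1} \leq \lambda_E + \beta(K-1)(\lambda_E - 2\sqrt{\Delta})^{-1}.
\]
An alternative is to use $\alpha_k^{-1} = [(\lambda_G - A_{T_k})^{-1}]_{t_k,t_k} \geq 1/\lambda_G$, which follows from Cauchy--Schwarz or Jensen applied to $e_{t_k}$, since $e_{t_k}^T(\lambda_G - A)e_{t_k} = \lambda_G$ for a loopless tree.

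The main obstacle is Step 1, namely justifying $\lambda_{T_k} \leq 2\sqrt{\Delta}$ cleanly for an arbitrary finite tree of maximum degree $\Delta$, including the root, which has only $\Delta - 1$ children in the decorated setting. I would cite the known bound $2\sqrt{\Delta-1}$ for trees, via the universal cover or a weighted Schur test with weights $(\Delta-1)^{\mathrm{depth}/2}$, and then loosen it to $2\sqrt{\Delta}$. Everything else is algebraic manipulation of the relation in \Cref{Lem:app-dectop}.
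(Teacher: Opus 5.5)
Your proof is correct and follows the paper's route: both directions rest on the sandwich $\lambda_G - \lambda_{T_k} \le \alpha_k \le \lambda_G$, combined with $\lambda_{T_k} \le 2\sqrt{\Delta}$, $\lambda_G \ge \lambda_E$, and the eigenvalue relation of \Cref{Lem:app-dectop}. The only difference is in how you justify the sandwich. You work with the resolvent entry directly, using the operator-norm bound and the Schur complement or Cauchy--Schwarz. The paper instead reads the two inequalities off the self-looped tree $T_k(\alpha_k)$, whose top eigenvalue is $\lambda_G$, via norm subadditivity and the Rayleigh quotient at $e_{t_k}$.
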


\begin{proof}
  To establish the lower bound, we use the property $\lambda_{T_k} + \alpha_k \geq \lambda_{G}$. Indeed from \Cref{Lem:app-dectop} we have that $\lambda_{G} = \lambda_{T_k(\alpha_k)}$ and $$\lambda_{T_k(\alpha_k)} = \left\|A_{T_k(\alpha_k)}\right\| = \left\|A_{T_k} + \alpha_k \cdot e_{t_k}e_{t_k}^{T}\right\| \leq \left\|A_{T_k}\right\| + \left\|\alpha_k \cdot e_{t_k}e_{t_k}^{T}\right\| = \lambda_k + \alpha_k.$$ The spectral norm of a finite tree of degree at most $\Delta$ is bounded by $\lambda_{T_k} \leq 2\sqrt{\Delta}$. Thus,~$\alpha_k \geq \lambda_{E} - 2\sqrt{\Delta}$.

  For the upper bound, we use the fact that the top eigenvalue of a graph with a self-loop is at least the weight of that loop: $\alpha_k \leq \lambda_{T_k(\alpha_k)}$. By \Cref{Lem:app-dectop}, the eigenvalue of $G$ is given by:
    \[\alpha_k \leq \lambda_{T_k(\alpha_k)} = \lambda_{G} = \lambda_{E} + \beta \cdot \sum_{k=1}^{K-1} \alpha_k^{-1} \leq \lambda_{E} + \beta K-1 \cdot (\lambda_E - 2\sqrt{\Delta})^{-1}.\]
\end{proof}

\begin{lemma}[\sc $L_2$-norm concentration on $G_n$]
  \label{Lem:app-concentration}
  Let $G_n$ be the graph from \Cref{Def:maingraph}, obtained by attaching $\sqrt{n}$ copies of $\treh_{n,k}$ (degree at most $2n$, see \Cref{Def:selfsim}) to each vertex of the $n$-regular expander $E_n$, for $k \in [K-1]$ with $K = \sqrt{n}$.

  Let $\psi_n$ be a Perron--Frobenius vector of $A_n$ (unnormalized). By \Cref{Lem:app-dectop}, the restriction of $\psi_n$ to $E_n$ is proportional to a Perron--Frobenius vector of $A_E$; we fix the normalization of $\psi_n$ so that this restriction equals a chosen representative $\psi_E$ of the Perron vectors of $E_n$. For each $v \in V_E$, let $\psi_{v,T}$ denote the restriction of $\psi_n$ to the tree vertices attached to~$v$. Then we have that,
  \[
  \psi_{n} = \psi_{E} + \sum\limits_{v\in V_E} \, \psi_{v,T},
  \]
  and for sufficiently large $n$,
  \[
  \left\|\psi_{E}\right\|^2 \geq (1 - \bo{1/n})\left\|\psi_{n}\right\|^2.
  \]
\end{lemma}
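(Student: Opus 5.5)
The plan is to apply \Cref{Lem:app-dectop} to $G_n$, with $E = E_n$, $T_k = \treh_{n,k}$, and $\beta$ the multiplicity of each tree type. The decomposition $\psi_n = \psi_E + \sum_v \psi_{v,T}$ is then immediate from item 3, with
\[
\psi_{v,T} = \psi_E(v)\bigoplus_{k}\bigoplus_{j} \alpha_k^{-1}\psi_{T_k(\alpha_k)}.
\]
Since $\psi_{v,T}$ depends on $v$ only through the scalar $\psi_E(v)$, we get
\[
\sum_v \|\psi_{v,T}\|^2 = \|\psi_E\|^2 \cdot \beta\sum_k \alpha_k^{-2}\|\psi_{T_k(\alpha_k)}\|^2 .
\]
It therefore suffices to show that $\beta\sum_{k}\alpha_k^{-2}\|\psi_{T_k(\alpha_k)}\|^2 = \bo{1/n}$. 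This gives $\|\psi_n\|^2 = \|\psi_E\|^2(1+\bo{1/n})$, which rearranges to the claim.

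Next, bound each tree contribution. Using \Cref{Lem:app-self-loopeigenvec},
\[
\alpha_k^{-2}\|\psi_{T_k(\alpha_k)}\|^2 = \|(\lambda_G I - A_{T_k})^{-1}e_{t_k}\|^2 \le \|(\lambda_G I - A_{T_k})^{-1}\|^2 \le (\lambda_G - \lambda_{T_k})^{-2}.
\]
Here $\lambda_G \ge \lambda_E = n$, since $E_n$ is $n$-regular and $\lambda_G = \lambda_E + \beta\sum_k\alpha_k^{-1}$ by \Cref{Lem:app-dectop}. Also $\lambda_{T_k}\le 2\sqrt{2n}$, since the trees have degree at most $2n$. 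So each term is at most $(n-2\sqrt{2n})^{-2} = \bo{1/n^2}$.

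Then count the trees attached to each expander vertex. By \Cref{Def:maingraph}, the number attached per expander vertex is $\sum_{k=1}^{K-1}(d_{n,k}-d_{n,k+1}) = d_{n,1}-d_{n,K} = n-\sqrt n \le n$. Summing $\bo{1/n^2}$ over at most $n$ trees gives $\bo{1/n}$, as required. I would state the computation in this per-vertex form, because the lemma's phrasing with "$\beta$ copies of each $T_k$" is just a bookkeeping convention. In the actual graph the multiplicity is $\sqrt n$ for each $k$, so $\beta(K-1)\le n$.

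The main obstacle is getting a resolvent bound at the root that is uniform over the huge trees, whose sizes are exponential in $n$. A crude bound in terms of the tree sizes would be useless. The operator-norm bound $(\lambda_G-\lambda_{T_k})^{-1}$ avoids any dependence on tree size, and it rests only on the spectral-radius bound $2\sqrt{\Delta-1}$ for trees of maximum degree $\Delta$, together with $\lambda_G\ge n$. If a sharper estimate were needed, the fallback is the Neumann series $(\lambda I - A_T)^{-1}e_t = \sum_r \lambda^{-r-1}A_T^r e_t$, which gives the same $\bo{1/n^2}$ bound.
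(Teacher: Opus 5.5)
Your proof is correct. Its skeleton matches the paper's: apply \Cref{Lem:app-dectop}, write $\|\psi_n\|^2=\|\psi_E\|^2\bigl(1+\sum_k \sqrt{n}\,\alpha_k^{-2}\|\psi_{T_k(\alpha_k)}\|^2\bigr)$, bound each tree term by $O(1/n^2)$, and multiply by the $n-\sqrt{n}$ trees per expander vertex. The difference is in how you bound the per-tree term.

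The paper bounds the two factors separately. It uses \Cref{Lem:app-alphakbounds} to get $n-2\sqrt{2n}\le\alpha_k\le n+4$. It then uses the Rayleigh-quotient identity \Cref{Eq:app-lambda} to get $\|\psi_{T_k(\alpha_k)}\|^2\le \alpha_k/(\lambda_n-\lambda_{T_k})$.

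You instead note that the product equals $\|(\lambda_G I-A_{T_k})^{-1}e_{t_k}\|^2$ and bound it by the squared resolvent norm $(\lambda_G-\lambda_{T_k})^{-2}$. This needs only $\lambda_G\ge n$ and the spectral-radius bound for bounded-degree trees. It makes \Cref{Lem:app-alphakbounds}, in particular its upper bound on $\alpha_k$, unnecessary, so it is the shorter argument.

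Your route also states the needed inequality in the right direction, $\lambda_G\ge\lambda_E=n$. At the corresponding step the paper's text writes $\lambda_n\le\lambda_E$, although the estimate actually requires the lower bound on $\lambda_n$.

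In exchange, the paper's route yields extra structural information that this lemma does not need: $\alpha_k=n+O(\sqrt{n})$, and $1\le\|\psi_{T_k(\alpha_k)}\|^2\le 1+O(1/\sqrt{n})$, i.e.\ each tree component is concentrated near its root.

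One small technical point: like the paper, you should first discard the isolated vertices, on which the top eigenvector vanishes. Only then does \Cref{Lem:app-dectop}, which concerns the connected decorated graph, apply literally.
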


\begin{proof}
  Without loss of generality, we assume that all isolated vertices of $G_n$ have been removed. We slightly abuse notation by continuing to write $\psi_{n}$ for the (unnormalized) Perron--Frobenius vector of the resulting graph. This causes no ambiguity, since the Perron--Frobenius vector of the original graph is identically zero on isolated vertices.

  Applying \Cref{Lem:app-dectop} with $E = E_n$, $G = G_n$, $\beta = \sqrt{n}$, $K = \sqrt{n}$, and $T_k = \treh_{n,k}$, we get that,
  \[
      \psi_{n} = \psi_{E} \;\oplus\; \bigoplus_{v \in V_E} \psi_{E}(v)\left(\bigoplus_{k=1}^{K-1} \bigoplus_{j=1}^{\sqrt{n}} \, {\alpha_k}^{-1} \cdot \psi_{\treh_{n,k}(\alpha_k)}\right)
  \]
  with corresponding terms as defined in \Cref{Def:app-self-looptree} and assumptions of \Cref{Lem:app-dectop}. Combining the components of trees attached to a common vertex $v$ together, we get that,
  \[
      \psi_{n} = \psi_{E} + \sum\limits_{v\in V_E} \, \psi_{v,T}, \text{ where } \psi_{v,T} = \psi_{E}(v) \cdot \left(\bigoplus_{k=1}^{K-1} \bigoplus_{j=1}^{\sqrt{n}} \, {\alpha_k}^{-1} \cdot \psi_{\treh_{n,k}(\alpha_k)}\right)
  \]
  Using all the above, we can write the $L_2$-norm of $\psi_n$ as
  $$
      \|\psi_{n}\|_2^2 =
      \Biggl(
      1 + \sum_{k=1}^{K-1} {\alpha_k^{-2}} \sqrt{n} \cdot \left\|\psi_{\treh_{n,k}(\alpha_k)}\right\|_2^2
      \Biggr) \cdot \|\psi_{E}\|_2^2.
  $$

  Since $E$ is an $n$-regular expander graph, the Perron-eigenvalue of $A_E$ is $\lambda_{E} = n$. Also, by \Cref{Fac:grp}, $\treh_{n,k}$ is a tree graph with max-degree bounded by $ \Delta = 2n$ and hence $\lambda_{\treh_{n,k}} \leq 2\sqrt{2n}$. By applying \Cref{Lem:app-alphakbounds}, we have the following,
  \begin{align}
    \label{Eq:app-alphak}
    n - 2\sqrt{2n} \, \le \alpha_k \le \, n + \frac{\sqrt{n} \cdot (\sqrt{n}-1)}{n - 2\sqrt{2n}} \leq n+4.
  \end{align}

  Let $\lambda_{n}$ denote the Perron-eigenvalue of $A_n$. From \Cref{Lem:app-self-loopeigenvec}, we have that $\psi_{\treh_{n,k}(\alpha_k)} = \alpha_k (\lambda_{n} \cdot I - A_{\treh_{n,k}})^{-1} \cdot e_{t_k}$. 
  Multiplying by $(\lambda_{n} \cdot I - A_{\treh_{n,k}})$ on both sides and taking inner product with $\psi_{\treh_{n,k}(\alpha_k)}$: 
  \begin{equation}
    \label{Eq:app-lambda}
    \lambda_{n} \cdot \left\|\psi_{\treh_{n,k}(\alpha_k)}\right\|_2^2 - \langle \psi_{\treh_{n,k}(\alpha_k)}, A_{\treh_{n,k}} \psi_{\treh_{n,k}(\alpha_k)}\rangle = \langle \psi_{\treh_{n,k}(\alpha_k)}, \alpha_k \cdot e_{t_k} \rangle = \alpha_k.
  \end{equation}
  Since $\lambda_{\treh_{n,k}}$ is the largest eigenvalue of $\treh_{n,k}$, by Courant-Fischer (min-max) theorem for symmetric matrices (see \cite{HJ12b}, Chapter 4), we obtain for all $\psi$, $\langle \psi, A_{\treh_{n,k}}\psi \rangle \le \lambda_{\treh_{n,k}}\|\psi\|_2^2$. By plugging this in \Cref{Eq:app-lambda}, we get the inequality $(\lambda_{n} - \lambda_{\treh_{n,k}}) \left\|\psi_{\treh_{n,k}(\alpha_k)}\right\|_2^2 \le \alpha_k.$
  By using \Cref{Eq:app-alphak} and the fact that $\lambda_{n} \leq \lambda_E = n$ (since $E$ is a subgraph of $G_n$), we finally obtain, 
    \[\left\|\psi_{\treh_{n,k}(\alpha_k)}\right\|_2^2 \le \frac{\alpha_k}{(\lambda_{n} - \lambda_{\treh_{n,k}})} \leq \frac{n+4}{(n-2\sqrt{2n})} \leq 1 + O\left(\frac{1}{\sqrt{n}}\right) = \bo{1}.\]

  Putting it all together,
  \begin{align*}
      \|\psi_{n}\|_2^2 &=
      \Biggl(
      1 + \sum_{k=1}^{K-1} {\alpha_k^{-2}}\sqrt{n} \cdot \left\|\psi_{\treh_{n,k}(\alpha_k)}\right\|_2^2
      \Biggr) \cdot \|\psi_{E}\|_2^2 \\
      & \leq 
      \Biggl(
      1 + \frac{(\sqrt{n}-1)\sqrt{n}}{(n-2\sqrt{2n})^2} \cdot \bo{1}
      \Biggr) \cdot \|\psi_{E}\|_2^2\\
      &\leq (
      1 + \bo{1/n}
      ) \cdot \|\psi_{E}\|_2^2.
  \end{align*}
\end{proof}

\section{Hardness of exploring self-similar trees}
\label{App:trees}
This section is devoted to proving that any classical algorithm requires a sub-exponential number of queries to explore a $0$-level leaf in the full-level decorated tree $\treh_{n,k}$ starting from its root (see \Cref{Def:selfsim} for description of these objects). The main result is \Cref{Lem:hardss} (restated below as \Cref{Prop:exithard}). The proof proceeds by induction on the index~$k$. 

We begin by introducing several probabilistic events and random variables that play a central role in the argument. Observe that $\exih_{n,K}$ is exactly the event whose probability we aim to bound in order to prove \Cref{Lem:hardss}.

\begin{definition}[\sc Events and random variables]
  Given a randomized algorithm that explores the $1$-level decorated tree $T_{n,k,1}$ starting at its root, we define:
    \begin{itemize}
      \item $\exi_{n,k}$: the event that the algorithm queries a $0$-level leaf,
      \item $\lea_{n,k}$: the random variable denoting the number of queried $1$-level leaves that belong to \emph{distinct} decorations.
    \end{itemize}
  The same objects are denoted $\exih_{n,k}$ and $\leah_{n,k}$ when the algorithm is instead exploring the full-level decorated tree $\treh_{n,k}$.
\end{definition}

For a randomized algorithm $\alg$, let $q_{\alg}$ denote the maximum number of queries made by that algorithm (over all possible random executions). Our goal is to identify the largest value of~$q_{n,k}$ for which the event $\exih_{n,k}$ is unlikely to occur when the algorithm makes at most $q_{n,k}$ queries. This amounts to upper-bounding the quantity $\max_{\alg : q_{\alg} < q_{n,k}} \pr{\exih_{n,k}}$. The inductive proof will relate this to $\max_{\alg : q_{\alg} < q_{n,k-1}} \pr{\exih_{n,k-1}}$, where the number of queries is decreased to $q_{n,k-1} < q_{n,k}$ and the input tree is $\treh_{n,k-1}$.

First, it is easy to see that full-level decorated trees are harder to explore than 1-level decorated trees. In particular, we will make use of the following lemma.

\begin{lemma}[\sc Monotonicity of hardness under decorations]
  \label{Lem:dec}
  For any number $q$ of queries and integer $w$, we have
  \[\max_{\alg : q_{\alg} < q} \pr{\exih_{n,k}\ \mathrm{and}\ \leah_{n,k} < w} \leq \max_{\alg : q_{\alg} < q} \pr{\exi_{n,k}\ \mathrm{and}\ \lea_{n,k} < w}.\]
\end{lemma}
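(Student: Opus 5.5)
The plan is a \emph{simulation argument}. Given any algorithm $\alg$ exploring the full-level decorated tree $\treh_{n,k}$ with $q_{\alg} < q$ queries, I will construct an algorithm $\alg'$ that explores the $1$-level decorated tree $T_{n,k,1}$, makes at most as many queries, and satisfies
\[
\pr{\exi_{n,k}\ \mathrm{and}\ \lea_{n,k} < w} \geq \pr{\exih_{n,k}\ \mathrm{and}\ \leah_{n,k} < w}.
\]
Taking the maximum over $\alg$ then gives the claim.

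The key structural observation is \Cref{Def:selfsim}: $\treh_{n,k}$ is obtained from $T_{n,k,1}$ by further decorating only the internal nodes created at levels $r \geq 2$. These are the internal nodes of the $1$-level decorations $C_{n,k-1}$, of their decorations, and so on. The further decorations never touch the internal nodes of $C_{n,k}$ (the perfect tree, whose leaves are the $0$-level leaves). Equivalently, $T_{n,k,1}$ sits inside $\treh_{n,k}$ as the subtree spanned by the vertices of $C_{n,k}$ and the $1$-level decorations. Each vertex of a $1$-level decoration is either an internal vertex, a root, or a $1$-level leaf. In $\treh_{n,k}$, each such decoration is replaced by a copy of $T_{n,k-1,k-2} = \treh_{n,k-1}$ (cf.\ \Cref{Fac:botup}), and the $1$-level leaves of $T_{n,k,1}$ correspond to the $0$-level leaves of these $\treh_{n,k-1}$ copies.

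The simulation goes as follows. $\alg'$ runs $\alg$ and answers its queries.
\begin{itemize}
  \item When $\alg$ queries a vertex of $C_{n,k}$ or of a $1$-level decoration, $\alg'$ forwards the query to its own oracle for $T_{n,k,1}$.
  \item It then augments the answer with fresh random labels for the extra decoration neighbours that exist in $\treh_{n,k}$ but not in $T_{n,k,1}$.
  \item Queries inside those extra decorations (levels $\geq 2$) are answered internally. $\alg'$ samples their structure itself, since their shape is fixed and known, and assigns random labels.
\end{itemize}
Because the labeling is uniformly random, the transcript seen by $\alg$ is distributed exactly as when it runs on $\treh_{n,k}$. I also need to check that the degree pattern matches: nodes inside $1$-level decorations have degree $d_{n,k-1}$ in $T_{n,k,1}$ and $d_{n,1}$ in $\treh_{n,k}$, and the difference is made up by the simulated decorations. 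The simulation uses no more real queries than $\alg$ makes, so $q_{\alg'} \leq q_{\alg} < q$.

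It remains to couple the events. A $0$-level leaf of $\treh_{n,k}$ is a leaf of $C_{n,k}$, hence a real vertex of $T_{n,k,1}$ queried by $\alg'$, so $\exih_{n,k}$ implies $\exi_{n,k}$. Likewise, every queried $1$-level leaf of $\treh_{n,k}$ in distinct decorations is a real $1$-level leaf of $T_{n,k,1}$ in the corresponding distinct decoration, and the map between them is a bijection, so $\lea_{n,k} = \leah_{n,k}$ and $\leah_{n,k} < w$ gives $\lea_{n,k} < w$.

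The main obstacle is making the labeling argument precise. The uniform random labels chosen by $\alg'$ must be consistent with, and disjoint from, those returned by the real oracle. I would handle this by having both oracles draw labels lazily from a common large label space, conditioned on avoiding collisions, exactly as in \Cref{lem:conncomp}.
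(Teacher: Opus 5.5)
Your strategy is the paper's: let an algorithm on $T_{n,k,1}$ run $\alg$ and simulate the higher levels of decoration itself. Your coupling of the events is also fine, since $0$-level and $1$-level leaves are genuine vertices of $T_{n,k,1}$ and decorations correspond bijectively. The gap is the structural claim your simulation rests on. It is false, and it hides the one fact that makes the simulation implementable.

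By \Cref{Def:selfsim}, level $r$ decorates \emph{every} internal node of $T_{n,k,r-1}$, including those of $C_{n,k}$. \Cref{Fac:botup} says so explicitly: internal nodes of $C_{n,k}$ carry copies of $\treh_{n,s}$ for every $s=1,\dots,k-1$. This is why every non-root, non-leaf vertex of $\treh_{n,k}$ has degree $d_{n,1}$. In your picture, an internal node of $C_{n,k}$ keeps degree $d_{n,k-1}$ in $\treh_{n,k}$, while a node of a $1$-level decoration gets degree $d_{n,1}$.

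Your $\alg'$ could then not carry out the step ``augment the answer with the extra neighbours that exist in $\treh_{n,k}$ but not in $T_{n,k,1}$''. In $T_{n,k,1}$ both kinds of vertex have degree $d_{n,k-1}$ and random labels, so $\alg'$ cannot tell whether to add $0$ or $d_{n,1}-d_{n,k-1}$ neighbours. Worse, for such a tree the inequality would be false, e.g.\ for $k\geq 3$, $w=1$ and polynomially large $q$. One could walk down $C_{n,k}$, discarding every neighbour of degree $d_{n,1}$, and reach a $0$-level leaf without ever querying a $1$-level leaf. Meanwhile, the right-hand side is exponentially small by \Cref{Lem:strght}. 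So this is not a cosmetic slip: the position-independence of the decorations is exactly what the simulation needs, and your write-up both denies it and never invokes it.

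The fix is to state and use the uniformity. Every internal vertex of $T_{n,k,1}$ receives the same bundle of additional decorations, namely $d_{n,s}-d_{n,s+1}$ copies of $\treh_{n,s}$ for each $s=1,\dots,k-2$, totalling $d_{n,1}-d_{n,k-1}$ extra neighbours. This holds whether the vertex lies in $C_{n,k}$ or in a $1$-level decoration, root of the decoration included. Leaves receive nothing. Hence $\alg'$ only has to distinguish leaves from internal vertices, which the degree returned by its oracle reveals. With that, your simulation is well defined and coincides with the paper's one-line proof.

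As for the labels, $\alg'$ need not coordinate with the real oracle's label space. It can present every vertex, real or simulated, to $\alg$ under a fresh label drawn from its own lazily sampled random injection. This makes $\alg$'s view exactly that of a randomly labelled $\treh_{n,k}$, with $q_{\alg'}\le q_{\alg}$.
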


\begin{proof}
  Any algorithm with query access to $T_{n,k}$ can simulate query access to the full-level decorated graph $\treh_{n,k}$ by simply simulating the subsequent levels of decoration itself. Hence, the success probability cannot be larger on $\treh_{n,k}$ than on $T_{n,k}$.
\end{proof}

The next central lemma upper bounds the probability of the event $\exi_{n,k}$ occurring when only a few $1$-level leaves of the 1-level decorated tree $T_{n,k}$ are explored. Note that the input graph need not be fully-decorated, nor does the number $q_{\alg}$ of queries need to be bounded to apply this lemma.

\begin{lemma}[\sc Hardness of avoiding the decorations]
  \label{Lem:strght}
  For any integer $w$,
  \[\max_{\alg : q_{\alg} < \infty} \pr{\exi_{n,k}\ \mathrm{and}\ \lea_{n,k} < w} \leq \pt*{\frac{d_{n,k}}{d_{n,k-1}}}^{(\ell_{n,k} - \ell_{n,k-1})/w}.\]
\end{lemma}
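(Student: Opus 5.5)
The plan is a \emph{deferred-decision} coupling argument, in the spirit of \cite{GHV21c}. Recall that $T_{n,k,1}$ is the perfect tree $C_{n,k}$ (internal degree $d_{n,k}$, depth $\ell_{n,k}$) in which every internal node receives $d_{n,k-1}-d_{n,k}$ extra children. Each extra child is the root of a copy of $C_{n,k-1}$, of depth $\ell_{n,k-1}$. Consequently every internal vertex of $T_{n,k,1}$ has degree $d_{n,k-1}$. The labels are given by a random $\pi$, so the neighbor lists carry no information about which of the $d_{n,k-1}-1$ downward edges at a vertex are ``main'' edges (there are $d_{n,k}-1$ of them) and which enter a decoration.

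The first step is to make precise when this distinction is revealed. Fix a vertex $x$ of the main tree and a downward neighbor $y$ of $x$. The subtree below $y$ is isomorphic to a truncated copy of $C_{n,k-1}$ on its first $\ell_{n,k-1}$ levels, whether $y$ is a main child or a decoration root. The two cases differ only at depth $\ell_{n,k-1}$ below $y$: a decoration has a $1$-level leaf there, while the main subtree continues with internal vertices of degree $d_{n,k-1}$. I would formalize this by sampling the main/decoration status of edges lazily. The status of a child edge at a vertex is fixed only when the algorithm first queries a vertex lying $\ell_{n,k-1}$ levels below it, or when it reaches depth $\ell_{n,k}$ (a $0$-level leaf). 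At each such moment the status is uniform over the not-yet-assigned children, so an edge is a main edge with probability $(d_{n,k}-1)/(d_{n,k-1}-1) \leq d_{n,k}/d_{n,k-1}$, independently of everything revealed so far. Before that moment, the algorithm's view is identical in both cases.

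The second step bounds the event $\exi_{n,k}\ \mathrm{and}\ \lea_{n,k}<w$. Condition on the algorithm reaching a $0$-level leaf, and let $P$ be the main root-to-leaf path of length $\ell_{n,k}$. Call a step of $P$ \emph{certified} if, before the algorithm commits to descending along it, some queried $1$-level leaf has revealed the status of one of the sibling edges at that level. A single decoration whose $1$-level leaf was queried hangs off $P$ at one vertex. It can therefore only inform choices within a window of at most $\ell_{n,k-1}$ consecutive levels of $P$, namely those from which the algorithm could have descended $\ell_{n,k-1}$ steps into it. Since there are fewer than $w$ such decorations, the certified windows leave uncertified levels on $P$. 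By pigeonhole over the $w$ gaps, I expect at least $(\ell_{n,k}-\ell_{n,k-1})/w$ steps of $P$ on which the algorithm guessed blindly. Each blind step is correct with probability at most $d_{n,k}/d_{n,k-1}$, independently by the deferred-decision sampling, and the product gives the stated bound. Because the conditioning is on a random path, I would phrase this as a union over the algorithm's adaptive choices using the martingale/deferred-decision structure, not as a union bound over paths.

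The main obstacle is the counting in the second step. One must show that fewer than $w$ explored decorations leave a total of at least $(\ell_{n,k}-\ell_{n,k-1})/w$ uncertified levels along the specific path that is eventually completed. One must also handle an algorithm that explores many branches in parallel and only later decides which ones to extend. I would first try charging each completed level of $P$ either to a blind guess or to one of the fewer than $w$ revealed decorations. If the windows overlap badly, I would fall back to bounding the longest gap between consecutive revealed decorations along $P$, which is at least $(\ell_{n,k}-\ell_{n,k-1})/w$ by pigeonhole. I expect this to be the reason for the exact form of the exponent in the statement.
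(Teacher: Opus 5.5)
Your deferred-decision setup is the right way to make the paper's $w=1$ sketch rigorous: a child's status is revealed only when some vertex $\ell_{n,k-1}$ levels below it is queried, and by the random labeling the first child of a vertex to be revealed is uniform among its children. The gap is in your second step, and it cannot be closed for general $w$, because the statement as written for every integer $w$ is false. An algorithm that queries all of $T_{n,k,1}$ by depth-first search from the root makes finitely many queries and hits a $0$-level leaf with probability $1$. So once $w$ exceeds the total number of decorations, the left-hand side equals $1$, while the right-hand side is strictly below $1$. (The paper only sketches $w=1$, cites \cite{GHV21c} for $w=2$, and explicitly skips $w>2$; \Cref{Prop:exithard} only uses $w=2$.)

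In your argument the failing step is ``each blind step is correct with probability at most $d_{n,k}/d_{n,k-1}$, independently, and the product gives the stated bound''. Once ``commits'' is made precise (it must be, since the algorithm explores branches in parallel) as the time the status of the edge $x_j\to x_{j+1}$ is revealed, a level of $P$ ends up certified precisely when some attempt at that level hit a decoration first. So which levels are blind, and where your gap of length $L/w$ sits (with $L:=\ell_{n,k}-\ell_{n,k-1}$), are determined by the outcomes themselves. Multiplying per-step conditional probabilities over such a set bounds nothing: in the exhaustive search, the blind levels are just those where the first child tried happened to be main, yet the event is certain. The window accounting is also not what happens. A found decoration hangs off a single vertex $x_j$ and gives new information only about the children of $x_j$, since its ancestors' statuses were already revealed on the way down. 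So it certifies one level, not $\ell_{n,k-1}$ of them.

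What does work, for the values of $w$ the paper needs, is to count failures instead of multiplying over blind steps. For a main vertex $x$ at depth $j<L$, call the first query of a vertex lying $\ell_{n,k-1}$ levels below some child of $x$ the \emph{first reveal} at $x$. Before it, the explored parts of the children's subtrees are identical full trees with no leaf of either kind visible. So, conditionally on the history, the first reveal lands in a decoration with probability exactly $1-p_0$, where $p_0=(d_{n,k}-1)/(d_{n,k-1}-1)\le d_{n,k}/d_{n,k-1}$, and such a failure queries a $1$-level leaf of a new decoration. Reaching a $0$-level leaf forces first reveals at the $L$ vertices $x_0,\dots,x_{L-1}$ of its root path. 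Hence the event implies that fewer than $w$ of the chronologically first $L$ first reveals failed, and a sequential coupling bounds its probability by $\Pr[\mathrm{Bin}(L,1-p_0)\le w-1]$. For $w=1$ this is $p_0^L$. For $w=2$ it is $p_0^{L-1}(p_0+L(1-p_0))\le (1+L)(d_{n,k}/d_{n,k-1})^{L-1}\le (d_{n,k}/d_{n,k-1})^{L/2}$ for $n$ large, which is the case used in \Cref{Prop:exithard}. The binomial factor is the price of not knowing in advance which levels fail; it is the factor missing from your product, and it is why no bound of the form $(d_{n,k}/d_{n,k-1})^{L/w}$ can hold uniformly in $w$.
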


\begin{proof}
  The cases $w = 1$ and $w = 2$ correspond essentially to \cite[Lemma 3]{GHV21c} and \cite[Lemma 4, Case 1]{GHV21c}.
  We do not prove the case $w > 2$ here since it is not needed for our application.

  We sketch the proof for $w = 1$. We view an algorithm $\alg$ as a local exploration that starts at the root of $T_{n,k,1}$ and may only query edges adjacent to previously queried vertices. The events $\exi_{n,k}\ \mathrm{and}\ \lea_{n,k} < 1$ require the first queried leaf in $T_{n,k,1}$ to be a $0$-level leaf. Thus, the algorithm must grow an exploration path of length $\ell_{n,k}$ that reaches a $0$-level leaf before any $1$-level leaf is queried. A path that grows to length $\ell_{n,k}$ avoids reaching a $1$-level leaf if and only if it does not enter any decoration before depth $\ell_{n,k} - \ell_{n,k-1}$, since the decorations have depth~$\ell_{n,k-1}$. For each node in the base graph, the proportion of its children that do not belong to a decoration is $d_{n,k}/d_{n,k-1}$. Hence, the probability of avoiding any decoration for $\ell_{n,k} - \ell_{n,k-1}$ consecutive steps is $(d_{n,k}/d_{n,k-1})^{\ell_{n,k} - \ell_{n,k-1}}$.
\end{proof}

The next lemma exploits the self-similar structure of $\ha{T}_{n,k}$ to relate the probability of finding multiple $1$-level leaves in $\ha{T}_{n,k}$ to the probability of the event $\exih_{n,k-1}$. Most importantly, the number of queries allowed to explore~$\ha{T}_{n,k-1}$ is reduced by a factor of $1/w$, which will make the recursion effective later on.

\begin{lemma}[\sc Self similarity]
  \label{Lem:selfs}
  For any integer $w$,
  \[\max_{\alg : q_{\alg} < q} \pr{\leah_{n,k} \geq w} \leq q \cdot \max_{\alg : q_{\alg} < q/w} \pr{\exih_{n,k-1}}.\]
\end{lemma}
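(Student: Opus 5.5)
The plan is a reduction: from any algorithm $\alg$ with $q_{\alg} < q$ exploring $\treh_{n,k}$, build an algorithm $\alg'$ with fewer than $q/w$ queries exploring $\treh_{n,k-1}$ from its root, such that $\pr{\exih_{n,k-1}}$ under $\alg'$ is at least $\pr{\leah_{n,k} \geq w}/q$ under $\alg$.

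The structural fact we rely on is the bottom-up description of \Cref{Fac:botup}. In $\treh_{n,k}$, the $1$-level leaves are exactly the $0$-level leaves of the copies of $\treh_{n,k-1}$ hanging off the internal nodes of $C_{n,k}$. These are the decorations $d_{n,k-1}-d_{n,k}$ attached with index $r = k-1$; the other decorations $\treh_{n,r}$ with $r<k-1$ do not contain $1$-level leaves in the relevant sense. So a queried $1$-level leaf of a given decoration is a $0$-level leaf of that decoration's copy of $\treh_{n,k-1}$, reached by entering through its root. Moreover, $\alg$ moves only along edges (we may assume local exploration, as in the earlier sections), so each such copy is explored starting from its root. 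Finally, by the random labeling, each copy looks the same to $\alg$ regardless of where it sits.

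The construction is as follows.
\begin{enumerate}
\item Order the decorations $\treh_{n,k-1}$ that $\alg$ enters by the time of first entry, and let $D_1,\dots,D_{q}$ be these (at most $q$ of them, since each entry costs a query).
\item $\alg'$ picks an index $i\in[q]$ uniformly at random. It then simulates $\alg$ on a fictitious $\treh_{n,k}$, generating internally every vertex and random label except those inside $D_i$, and routes queries inside $D_i$ to its own oracle for $\treh_{n,k-1}$. This simulation is faithful because the distribution of the labeled graph is a product over pieces up to relabeling; this is the same simulation idea as in \Cref{Lem:dec}.
\item $\alg'$ aborts once it has spent $q/w$ queries on $D_i$.
\end{enumerate}

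The counting step is pigeonhole. On the event $\leah_{n,k}\geq w$, at least $w$ distinct decorations have a $0$-level leaf queried. Their query counts sum to less than $q$, so at least one of them, say $D_j$, reaches a $0$-level leaf using fewer than $q/w$ queries inside it. The random choice $i=j$ happens with probability at least $1/q$, and in that case $\alg'$ witnesses $\exih_{n,k-1}$ within its budget. This gives
\[
\pr{\leah_{n,k}\ge w} \le q\cdot \pr_{\alg'}{\exih_{n,k-1}} \le q \cdot \max_{\alg': q_{\alg'}<q/w}\pr{\exih_{n,k-1}}.
\]

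The main obstacle is making the simulation rigorous: showing that the queries inside $D_i$ form a legitimate root-started exploration of an independent uniformly labeled copy of $\treh_{n,k-1}$. The subtle point is that the choice of $i$ is defined by order of entry, which depends on the run. I would handle this by having $\alg'$ commit to the $i$-th entered decoration online, which is well defined during the simulation. I would also note that queries to vertices of $D_i$ reveal only edges within $D_i$ plus the single edge to its root's parent, and the simulator already knows that edge.
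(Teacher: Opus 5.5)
Your proposal is correct and follows essentially the same argument as the paper. Pigeonhole shows that one of the $w$ copies of $\treh_{n,k-1}$ containing a queried $1$-level leaf reaches its $0$-level leaf with fewer than $q/w$ queries inside it, and a union bound over the at most $q$ entered copies then gives the factor $q$. Your uniformly random index $i$ is just a constructive form of that union bound, and your online simulation argument spells out what the paper leaves implicit.
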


\begin{proof}
  This generalizes \cite[Lemma 4, Case 2]{GHV21c}. Essentially, finding leaves of $w$ distinct outer trees in $\treh_{n,k}$ amounts to the event $\exih_{n,k-1}$ occurring $w$ times in $w$ distinct subtrees $\treh_{n,k-1}$ of~$\treh_{n,k}$. By a pigeonhole argument, at least one of these events must happen after making only $q/w$ queries in the corresponding subtree $\treh_{n,k-1}$. Since at most $q$ subtrees are explored in total (this is a very crude bound), by a union bound argument, the overall probability must be at most $q \cdot \max_{\alg : q_{\alg} < q/w} \pr{\exih_{n,k-1}}$.
\end{proof}

Combining the above three results, we obtain the inductive step of the proof:

\begin{corollary}[\sc Inductive argument]
  \label{Prop:ind}
  For any integer $w$,
  \begin{equation}
    \label{Eq:ind}
    \max_{\alg : q_{\alg} < q_{n,k}} \pr{\exih_{n,k}} \leq \pt*{\frac{d_{n,k}}{d_{n,k-1}}}^{(\ell_{n,k} - \ell_{n,k-1})/w} + q_{n,k} \cdot \max_{\alg : q_{\alg} < q_{n,k}/w} \pr{\exih_{n,k-1}}.
  \end{equation}
\end{corollary}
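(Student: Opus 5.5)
We split the event $\exih_{n,k}$ according to how many distinct decorations the algorithm reaches the leaves of, and bound each part with one of the three preceding lemmas. Fix any algorithm $\alg$ with $q_{\alg} < q_{n,k}$ exploring $\treh_{n,k}$ from its root, and fix the integer $w$. Since the events $\{\leah_{n,k} < w\}$ and $\{\leah_{n,k} \geq w\}$ partition the probability space, a union bound gives
\[
  \pr{\exih_{n,k}} \leq \pr{\exih_{n,k}\ \mathrm{and}\ \leah_{n,k} < w} + \pr{\leah_{n,k} \geq w}.
\]
Taking the maximum over all such $\alg$ on the left, it suffices to bound the maximum of each term on the right separately. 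The maximum of a sum is at most the sum of the maxima.

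For the first term, \Cref{Lem:dec} with $q = q_{n,k}$ replaces the full-level decorated tree by the $1$-level decorated tree. This bounds the term by $\max_{\alg : q_{\alg} < q_{n,k}} \pr{\exi_{n,k}\ \mathrm{and}\ \lea_{n,k} < w}$. Dropping the query constraint can only increase this maximum, so \Cref{Lem:strght} bounds it by $\pt*{d_{n,k}/d_{n,k-1}}^{(\ell_{n,k} - \ell_{n,k-1})/w}$.

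For the second term, \Cref{Lem:selfs} with $q = q_{n,k}$ directly gives the bound $q_{n,k} \cdot \max_{\alg : q_{\alg} < q_{n,k}/w} \pr{\exih_{n,k-1}}$. Adding the two bounds yields \Cref{Eq:ind}.

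The argument needs two small checks. First, the notion of "exploring" (a local exploration from the root) must be the same in all three lemmas, so that the same algorithm class is maximized over throughout. Second, the events $\exih$ and $\exi$ must refer to the same $0$-level leaves under the simulation in \Cref{Lem:dec}. In the corollary itself nothing here is a real obstacle: the substantive difficulty sits inside \Cref{Lem:strght} and \Cref{Lem:selfs}, which we take as given. We also note that the corollary is stated for every integer $w$, while \Cref{Lem:strght} is only proven for $w \in \{1,2\}$. The corollary as used later should therefore be applied with $w = 2$, or else it inherits the unproven case $w > 2$.
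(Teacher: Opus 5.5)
Your proof is correct and follows the paper's argument: you split $\exih_{n,k}$ according to $\leah_{n,k} < w$ versus $\leah_{n,k} \geq w$. For the first part you pass to the $1$-level tree via \Cref{Lem:dec} and drop the query bound to apply \Cref{Lem:strght}; for the second you bound by the probability that $\leah_{n,k} \geq w$ and apply \Cref{Lem:selfs}. Your caveat is also accurate: the paper proves \Cref{Lem:strght} only for $w \in \{1,2\}$ (and only sketches $w=1$), and it invokes the corollary only with $w = 2$.
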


\begin{proof}
  It is simply a matter of applying \Cref{Lem:dec,Lem:selfs,Lem:strght}, and basic laws of probabilities,
  \begin{align*}
    \max_{\alg : q_{\alg} < q_{n,k}} \pr{\exih_{n,k}}
     & \leq \max_{\alg : q_{\alg} < q_{n,k}} \pr{\exih_{n,k}\ \mathrm{and}\ \leah_{n,k} < w} + \max_{\alg : q_{\alg} < q_{n,k}} \pr{\exih_{n,k}\ \mathrm{and}\ \leah_{n,k} \geq w} \\
     & \leq \max_{\alg : q_{\alg} < \infty} \pr{\exi_{n,k}\ \mathrm{and}\ \lea_{n,k} < w} + \max_{\alg : q_{\alg} < q_{n,k}} \pr{\leah_{n,k} \geq w} \\
     & \leq \pt*{\frac{d_{n,k}}{d_{n,k-1}}}^{(\ell_{n,k} - \ell_{n,k-1})/w} + q_{n,k} \cdot \max_{\alg : q_{\alg} < q_{n,k}/w} \pr{\exih_{n,k-1}}.
  \end{align*}
\end{proof}

Finally, we solve the above recurrence relation using $w = 2$ and the set of parameters given in \Cref{Lem:hardss}.

\begin{proposition}[\sc Finding exit is hard - \Cref{Lem:hardss} restated]
  \label{Prop:exithard}
  For $k \in \set{1,\dots,\sqrt{n}}$, let $d_{n,k} = 2n - k \sqrt{n}$, $\ell_{n,k} = k\cdot10n^{\frac{3}{2}}\log{n}$, $q_{n,k} = 2^k$ and $w = 2$. Then,
    \[\max_{\alg : q_{\alg} < q_{n,k}} \pr{\exih_{n,k}} \leq 2^{-2n\log(n)+(\sqrt{n}+1)(k-\sqrt{n})}.\]
  In particular, for $K = \sqrt{n}$, the probability of $\exih_{n,K}$ is at most $2^{-2n\log{n}}$ for any classical algorithm making at most $2^{\sqrt{n}}$ queries. Moreover, the tree $\treh_{n,K}$ contains less than $2^{12n^{3}\log^2{n}}$ vertices.
\end{proposition}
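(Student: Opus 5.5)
We argue by induction on $k$, applying the recurrence of \Cref{Prop:ind} with $w=2$ and $q_{n,k}=2^k$. Write $P_k := \max_{\alg : q_{\alg} < 2^k} \pr{\exih_{n,k}}$ and $\varepsilon_k := 2^{-2n\log n + (\sqrt{n}+1)(k-\sqrt{n})}$. The goal is $P_k \le \varepsilon_k$ for all $1\le k\le \sqrt{n}$.

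\emph{Base case $k=1$.} Here $\treh_{n,1}=C_{n,1}$ is an undecorated perfect tree of depth $\ell_{n,1}=10n^{3/2}\log n \ge 1$. With fewer than $q_{n,1}=2$ queries, and starting from the root, an algorithm can query at most the root and one neighbour, so it never reaches a $0$-level leaf. Hence $P_1=0\le\varepsilon_1$. (Alternatively, one can start the induction at a trivial index $k=0$.)

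\emph{Inductive step.} \Cref{Prop:ind} with $w=2$ gives
\[P_k \le \Big(\tfrac{d_{n,k}}{d_{n,k-1}}\Big)^{(\ell_{n,k}-\ell_{n,k-1})/2} + 2^k P_{k-1},\]
where $q_{n,k}/2 = q_{n,k-1}$ holds exactly. We bound the first term using
\[\tfrac{d_{n,k}}{d_{n,k-1}} = 1-\tfrac{\sqrt n}{2n-(k-1)\sqrt n} \le 1-\tfrac{1}{2\sqrt n} \le e^{-1/(2\sqrt n)},\]
together with $(\ell_{n,k}-\ell_{n,k-1})/2 = 5n^{3/2}\log n$. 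This gives a bound of $e^{-\frac{5}{2}n\log n} \le 2^{-3n\log n}$ for the logarithm base~$2$, since $\tfrac{5}{2}\log_2 e>3$. For the second term, the induction hypothesis gives
\[2^k\varepsilon_{k-1} = 2^{k-(\sqrt n+1)}\varepsilon_k \le 2^{-1}\varepsilon_k,\]
because $k\le\sqrt n$. It then suffices to check
\[2^{-3n\log n}\le \tfrac12\varepsilon_k.\]
Since $\varepsilon_k \ge \varepsilon_1 = 2^{-2n\log n-(\sqrt n+1)(\sqrt n-1)} = 2^{-2n\log n - n+1}$, this reduces to $n\log n \ge n$, which holds for $n\ge 2$. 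The step therefore closes with $P_k\le\varepsilon_k$. The extra factor $(\sqrt n+1)$ per level in the exponent of $\varepsilon_k$ exists precisely to absorb the $q_{n,k}=2^k\le 2^{\sqrt n}$ union-bound factor coming from \Cref{Lem:selfs}.

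\emph{Consequences and main obstacle.} Setting $k=K=\sqrt n$ gives $P_K\le 2^{-2n\log n}$ for algorithms making fewer than $2^{\sqrt n}$ queries; the boundary case of exactly $2^{\sqrt n}$ queries is handled by the same argument with a marginally larger $\varepsilon$. For the size bound, $\treh_{n,K}$ has depth at most $\sum_{k}\ell_{n,k}\le K\ell_{n,K}$ along any root-to-leaf path through nested decorations, which is $O(n^{2}\log n)$, and every vertex has degree at most $2n$. Hence there are at most $(2n)^{O(n^2\log n)}$ vertices, which is below $2^{12n^3\log^2 n}$ for large $n$. The main obstacle is not this arithmetic but the validity of the ingredients. \Cref{Lem:strght} for $w=2$ is only sketched, via \cite[Lemma 4]{GHV21c}. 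Its adaptation to our trees, where leaves are not decorated and the degrees $d_{n,k}$ vary, needs care: the two-path case must still show that the exploration avoids decorations for a stretch of about $(\ell_{n,k}-\ell_{n,k-1})/2$ consecutive steps. I would verify this by redoing the GHV argument on the ancestor structure of the at most two leaves reached.
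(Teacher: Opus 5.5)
Your proposal is correct and follows the paper's own argument: the same induction via \Cref{Prop:ind} with $w=2$ and $q_{n,k}=2^k$, the same bound $(1-\frac{1}{2\sqrt{n}})^{5n^{3/2}\log n}$ on the first term, and the same use of the $(\sqrt{n}+1)$ slack per level to absorb the $2^k$ union-bound factor, with you making explicit the final comparison $2^{-3n\log n}\le \frac{1}{2}\varepsilon_1$ that the paper leaves implicit. One harmless slip is in the size bound: the nested depth $\sum_k \ell_{n,k}$ is about $5n^{5/2}\log n$, not $O(n^2\log n)$ (already $\ell_{n,K}=10n^2\log n$), but $(2n)^{O(n^{5/2}\log n)}=2^{O(n^{5/2}\log^2 n)}$ is still below $2^{12n^3\log^2 n}$, so your conclusion stands.
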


\begin{proof}
  The base case $k = 1$ is trivial. For the induction, we have,
  \begin{align*}
    \pt*{\frac{d_{n,k}}{d_{n,k-1}}}^{(\ell_{n,k} - \ell_{n,k-1})/2} & \qquad = \pt*{1 - \frac{1}{2\sqrt{n}-k}}^{(10n^{\frac{3}{2}}\log{n})/2}\\
    & \qquad \leq \pt*{1 - \frac{1}{2\sqrt{n}}}^{5n^{\frac{3}{2}}\log{n}} \\
    & \qquad \leq 2^{-\frac{5}{2}n\log{n}}.
  \end{align*}
  Hence, by \Cref{Prop:ind},
  \begin{align*}
    \max_{\alg : q_{\alg} < q_{n,k}} \pr{\exih_{n,k}} & \qquad \leq 2^{-\frac{5}{2}n\log{n}} + 2^k \cdot 2^{-2n\log{n}+(\sqrt{n}+1)(k-1-\sqrt{n})}\\
    & \qquad \leq 2^{-\frac{5}{2}n\log{n}} + 2^{\sqrt{n}} \cdot 2^{-2n\log{n}+(\sqrt{n}+1)(k-1-\sqrt{n})}\\
    & \qquad \leq 2^{-2n\log{n}+(\sqrt{n}+1)(k-\sqrt{n})}.
  \end{align*}

  The number of vertices in $\treh_{n,k}$ is $d_{n,k}^{\ell_{n,k}} \cdot \prod_{i=k-1}^1 (d_{n,i} - d_{n,i+1}) d_{n,i}^{\ell_{n,i}} \leq  2^{12n^{3}\log^2{n}}$.
\end{proof}

\end{document}